\newcommand{\dsum}{\sum}
\newcommand{\dbigcup}{\bigcup}
\newtheorem{theorem}{Theorem}
\newtheorem{corollary}[theorem]{Corollary}
\newtheorem{definition}[theorem]{Definition}
\newtheorem{lemma}[theorem]{Lemma}
\newtheorem{proposition}[theorem]{Proposition}
\newenvironment{proof}[1][Proof]{\noindent\textbf{#1.} }{\ \rule{0.5em}{0.5em}}
\begin{document}

\title{De Donder Construction for Higher Jets}
\author{J\k{e}drzej \'{S}niatycki and Reuven Segev}
\date{}
\maketitle

\section{Introduction}

In 1929, De Donder \cite{de donder 1929}, \cite{de donder 1935}, formulated
an approach to study first order variational problems for several
independent variables in terms of a differential form obtained from the
Lagrangian by the Legendre transformation in all independent variables. His
construction was generalized by Lepage \cite{lepage 1936} yielding a family
of forms, each of which could be used in the same way as the De Donder form
to reduce the original variational problem to a system of equations in
exterior differential forms. A geometric formulation of the De Donder
construction in terms of jets was given by \'{S}niatycki in 1970 \cite%
{sniatycki 1970}. It showed that the De Donder form depended only on the
original Lagrangian and the canonical structure of the appropriate jet
bundle. De Donder form, also called Poincar\'{e}-Cartan form, facilitated an
invariant multisymplectic formulation of field theories \cite%
{kijowski-tulczyjew}, \cite{kijowski-szczyrba}, \cite{BSF}, \cite{gotay}.
The De Donder construction was generalized in 2017 by Kupferman, Olami and
Segev \cite{kupferman} in the context of continuum mechanics of first order
materials, to forms on the first jet bundle that need not be exact. \medskip 

In 1977 Aldaya and Azc\'{a}rraga \cite{aldaya-azcarraga} investigated
generalization of Lepagean forms to higher order variational problems. For
higher order Lagrangians, the natural generalization of the De Donder
construction in terms of Ostrogradski's\footnote{%
English transcription of the original Russian name is Ostrogradsky. However,
Ostrogradski wrote in French and used the French transcription of his name.}
Legendre transformation \cite{ostrogradski}  in all independent variables
leads to a form that depends on the adapted coordinate system used for its
construction. This has lead to search for additional geometric structures,
which would ensure global existence of Poincar\'{e}-Cartan forms, see \cite%
{campos} and references quoted there. In the context of continuum mechanics,
the analysis for higher order jets has been replaced by an analogous, yet
further underdetermined, analysis on iterated jet bundles \cite{segev}%
.\medskip 

In this paper, we generalize De Donder approach to construct boundary forms
that depend on the adapted coordinate system used in the construction. In
continuum mechanics, use of boundary forms leads to splitting of the total
force acting on the body into body force and surface traction. Moreover,
this splitting is independent of the choice of the boundary form used. In
calculus of variations, use of boundary forms leads to equations in exterior
differential forms that are equivalent to the Euler-Lagrange equations.
Infinitesimal symmetries of the theory lead to conservation laws valid for
any choice of the boundary form. In an example, we show that the boundary
conditions lead to independence of constants of motion of the choice of the
boundary form. \medskip 

\noindent \textbf{Acknowledgments} \ The authors are grateful to BIRS for
sponsoring the Banff Workshop on \textquotedblleft Material
Evolution\textquotedblright , June 11-18, 2017, which led to this
collaboration. The first author (J.\'{S}) greatly appreciates constructive
discussions with Mark Gotay and O\u{g}ul Esen.\medskip

\section{Spaces of smooth sections}

We are interested in geometric structure of calculus of variations with $m>1$
independent variables and $n$ dependent variables and its relation to
continuum mechanics. Both theories deal with differentiation of functions on
spaces of maps. There are several approaches to manifold structure of a
space\ of maps. Here, we use the traditional approach of the calculus of
variations flavoured by the insight from theory of differential spaces \cite%
{sniatycki 2013}.\smallskip

Consider a locally trivial fibration $\pi :N\rightarrow M$ with $\dim M=m$
and $\dim N=m+n$. Let $K$ be an open relatively compact sub-manifold of $M$
with smooth boundary $\partial K$. The closure $\bar{K}=K\cup \partial K$ is
a manifold with boundary. A map $\sigma :\bar{K}\rightarrow N$ is a section
of $\pi $ if $\pi \circ \sigma $ is the identity on $\bar{K}$. In the spirit
of theory of differential spaces, we say that a section $\sigma :\bar{K}%
\rightarrow N$ is smooth if it extends to a smooth section of $\pi $ defined
on an open subset of $M$ that contains $\bar{K}.$ We denote by $S^{\infty }(%
\bar{K},N)$ the space of smooth sections $\sigma :\bar{K}\rightarrow N$ of $%
\pi $. In the following, we assume that $\bar{K}$ is contain in the domain
of a chart on $M$. \smallskip 

The next stage is to identify smooth functions on $S^{\infty }(\overline{K}%
,N).$\ We use terminology of jet bundles reviewed in the Appendix. Let $%
\Lambda $ be an $m$-form on the space $J^{\mathrm{k}}(M,N)$ of $\mathrm{k}$%
-jets of sections of $\pi .$ We say that $\Lambda $ is semi-basic with
respect to the source map $\pi ^{\mathrm{k}}:$ $J^{\mathrm{k}%
}(M,N)\rightarrow M$ if $X%
{\mbox{$ \rule {5pt} {.5pt}\rule {.5pt} {6pt} \, $}}%
\Lambda =0$ for every vector field $X$ tangent to fibres of $\pi ^{\mathrm{k}%
}:$ $J^{\mathrm{k}}(M,N)\rightarrow M$, where $%
{\mbox{$ \rule {5pt} {.5pt}\rule {.5pt} {6pt} \, $}}%
\ $denotes the left interior product (contraction) of vectors and forms. The
form $\Lambda $ gives rise to the corresponding action functional 
\begin{equation}
A:S^{\infty }(\bar{K},N)\mapsto \mathbb{R}:\sigma \mapsto A(\sigma
)=\int_{K}j^{\mathrm{k}}\sigma ^{\ast }\Lambda .  \label{2/1}
\end{equation}%
Calculus of variations is concerned with study of critical points of action
functionals. Let $\mathcal{A}$ denote the space of all action functionals on 
$S^{\infty }(\bar{K},N).$ In other words, a function $F:S^{\infty }(\bar{K}%
,N)\rightarrow \mathbb{R}$ is in $\mathcal{A}$ if there exists an integer $%
\mathrm{k}\ \geq 0$, and an $m$-form $\Lambda $ on $J^{\mathrm{k}}(M,N),$
semi-basic with respect to the source map $\pi ^{\mathrm{k}}:$ $J^{\mathrm{k}%
}(M,N)\rightarrow M$, such that 
\begin{equation}
F(\sigma )=\int_{K}j^{\mathrm{k}}\sigma ^{\ast }\Lambda \text{ \ }\forall 
\text{ \ }\sigma \in S^{\infty }(\bar{K},N)\text{.}  \label{2/1a}
\end{equation}%
Here, for $\mathrm{k}=\mathrm{0}$, we use identifications $J^{0}(M,N)=N$ and 
$\pi ^{0}=\pi $. \smallskip 

The tangent space $T_{\sigma }S^{\infty }(\bar{K},N)$ is the space of smooth
maps $Y_{\sigma }:\bar{K}\rightarrow TN$ such that, for each $x\in \bar{K}$, 
$Y_{\sigma }(x)\in T_{\sigma (x)}N$ is tangent to the fibre $\pi ^{-1}(x)$.
It should be noted, that every $Y_{\sigma }\in T_{\sigma }S^{\infty }(\bar{K}%
,N)$ can be extended to a vector field $Y$ on $N$ tangent to fibres of $\pi $
and such that $Y_{\sigma }(x)=Y(\sigma (x))$ for every $x\in \bar{K}$. For $%
Y_{\sigma }\in T_{\sigma }S^{\infty }(\bar{K},N)$ and $F(\sigma )$ given by
equation (\ref{2/1a}), the derivative of $F$ in direction $Y_{\sigma }$ is 
\begin{equation}
D_{Y_{\sigma }}F=\int_{K}j^{\mathrm{k}}\sigma ^{\ast }(\pounds _{Y^{\mathrm{k%
}}}\Lambda ),  \label{2.1b}
\end{equation}%
where $Y^{\mathrm{k}}$ is the prolongation of an extension of $Y_{\sigma }$
to a vertical vector field $Y$ on $N$. It should be noted that the integral
in (\ref{2.1b}) does not depend on the choice of the extension $Y$ of $%
Y_{\sigma }$.\smallskip 

The next step is to identify vector fields on $S^{\infty }(\bar{K},N)$. On
manifolds, vector fields play two roles: they are global derivations of the
differential structure, and they generate local one-parameter local groups
of diffeomorphisms. On manifolds with singularities, eg. stratified spaces,
global derivations need not generate local diffeomorphisms \cite{sniatycki
2013}. In this paper, we consider only vector fields on $S^{\infty }(\bar{K}%
,N)$ that are generated by global vertical vector fields $Y$ on $N$ as
follows. A vertical vector field $Y$ on $N$ gives rise to a section $%
\boldsymbol{Y}:\bar{K}\rightarrow TS^{\infty }(\bar{K},N)$ such that, $%
\boldsymbol{Y}(\sigma )=Y_{\sigma }$ for every $\sigma \in S^{\infty }(\bar{K%
},N)$. In other words, for every $F$ given by equation (\ref{2/1a}) 
\begin{equation}
(\boldsymbol{Y}F)(\sigma )=D_{Y_{\sigma }}F=\int_{K}j^{\mathrm{k}}\sigma
^{\ast }(\pounds _{Y^{\mathrm{k}}}\Lambda )  \label{2/1c}
\end{equation}%
for every $\sigma \in S^{\infty }(\bar{K},N).$ We denote by $\mathfrak{Y}%
(S^{\infty }(\bar{K},N))$ the space of vector fields on $S^{\infty }(\bar{K}%
,N)$ defined above.\smallskip

Now that we have vectors tangent to $S^{\infty }(\bar{K},N)$, we can
consider forms on $S^{\infty }(\bar{K},N)$. Suppose that $\Phi $ is an $%
(m+1) $-form on $J^{\mathrm{k}}(M,N)$ such that $X%
{\mbox{$ \rule {5pt} {.5pt}\rule {.5pt} {6pt} \, $}}%
\Phi $ is semi-basic with respect to the source map $\pi ^{\mathrm{k}}:$ $%
J^{k}(M,N)\rightarrow M$ for every vector field $X$ on $J^{\mathrm{k}}(M,N)$
tangent to fibres of $\pi ^{\mathrm{k}}:$ $J^{\mathrm{k}}(M,N)\rightarrow M$%
. It gives rise to a 1-form $\boldsymbol{\Phi }$ on $S^{\infty }(\bar{K},N)$
defined as follows. For every $\boldsymbol{Y}\in \mathfrak{Y}(S^{\infty }(%
\bar{K},N))$ and $\sigma \in S^{\infty }(\bar{K},N),$ 
\begin{equation}
\left\langle \boldsymbol{\Phi }\mid \boldsymbol{Y}\right\rangle (\sigma
)=\int_{K}j^{\mathrm{k}}\sigma ^{\ast }\left( Y^{\mathrm{k}}%
{\mbox{$ \rule {5pt} {.5pt}\rule {.5pt} {6pt} \, $}}%
\Phi \right) .  \label{2/1d}
\end{equation}

If $Y_{\sigma }\in T_{\sigma }S^{\infty }(\bar{K},N)$ is the restriction of $%
Y$ to $\sigma $, then the restriction to $j^{\mathrm{k}}\sigma (K)$ of the
prolongation $Y^{\mathrm{k}}$ of $Y$ depends only on $Y_{\sigma }$ and not
on its extension off $\sigma (\bar{K})$. This shows that the 1-form $%
\boldsymbol{\Phi }$ restricts to a linear map $\boldsymbol{\Phi }_{\sigma
}:T_{\sigma }S^{\infty }(\bar{K},N)\rightarrow \mathbb{R}$ such that 
\begin{equation}
\left\langle \boldsymbol{\Phi }_{\sigma }\mid Y_{\sigma }\right\rangle
=\left\langle \boldsymbol{\Phi }\mid \boldsymbol{Y}\right\rangle (\sigma ).
\label{2/2}
\end{equation}%
\smallskip

In applications to continuum mechanics, $\bar{K}$ represents the body
manifold, sections $\sigma \in S^{\infty }(\bar{K},N)$ are configurations of
the body, vectors $Y_{\sigma }\in T_{\sigma }S^{\infty }(\bar{K},N)$ are
virtual displacement fields. The form $\boldsymbol{\Phi }$ may be referred
to as a force functional. \smallskip 

\section{Boundary forms}

Let $\Phi $ be an $(m+1)$-form on $J^{\mathrm{k}}(M,N)$ such that $X%
{\mbox{$ \rule {5pt} {.5pt}\rule {.5pt} {6pt} \, $}}%
\Phi $ is semi-basic with respect to the source map $\pi ^{\mathrm{k}}:J^{%
\mathrm{k}}(M,N)\rightarrow M$ for every vector field $X$ on $J^{\mathrm{k}%
}(M,N)$ tangent to fibres of $\pi ^{\mathrm{k}}:$ $J^{\mathrm{k}%
}(M,N)\rightarrow M$. Let $(x^{i},y^{a},z_{i}^{a},...,z_{i_{1}...i_{k}}^{a})$
be local coordinates coordinates on $J^{\mathrm{k}}(M,N)$. The corresponding
local representation of $\Phi $ is 
\begin{eqnarray}
\Phi &=&\sum_{a=1}^{n}\left( \Phi _{a}\mathrm{d}y^{a}+\sum_{i=1}^{m}\Phi
_{a}^{i}\mathrm{d}z_{i}^{a}+...\sum_{i_{1}\leq ...\leq i_{l}}\Phi
_{a}^{i_{1}...i_{l}}\mathrm{d}z_{i_{1}...i_{l}}^{a}\right) \wedge \mathrm{d}%
_{m}x  \label{Phi} \\
&&+...+\sum_{a=1}^{n}\sum_{i_{1}\leq ...\leq i_{k}}\Phi _{a}^{i_{1}...i_{k}}%
\mathrm{d}z_{i_{1}...i_{k}}^{a}\wedge \mathrm{d}_{m}x.  \notag
\end{eqnarray}%
Note that, for every $l=2,...,\mathrm{k}$, coordinates $%
z_{i_{1},...,i_{l}}^{a}$ are symmetric in indices $i_{1},...,i_{l}$ and so
are $\Phi _{a}^{i_{1},...,i_{l}}$. Therefore, the sum in equation (\ref{Phi}%
) is taken only over the independent components. In the following, we modify
the summation convention by the requirement that the indices $%
i_{1},...,i_{l} $ occuring in $z_{i_{1},...,i_{l}}^{a}$ are taken in a
non-decreasing order. This allows us to rewrite equation (\ref{Phi}) as
follows,%
\begin{equation}
\Phi =\left( \Phi _{a}\mathrm{d}y^{a}+\Phi _{a}^{i}\mathrm{d}%
z_{i}^{a}+...\Phi _{a}^{i_{1}...i_{l}}\mathrm{d}z_{i_{1}...i_{l}}^{a}+...+%
\Phi _{a}^{i_{1}...i_{k}}\mathrm{d}z_{i_{1}...i_{k}}^{a}\right) \wedge 
\mathrm{d}_{m}x.  \label{Phi1}
\end{equation}%
An alternative approach would be use of multi-indices. \smallskip

\begin{theorem}
\label{Theorem 1} There exists locally a smooth $m$-form $\Xi $ on $J^{%
\mathrm{2k-1}}(M,N),$ which satisfies the following conditions.

\begin{enumerate}
\item 
\begin{enumerate}
\item $\Xi $ is semi-basic with respect to the forgetful map $\pi _{\mathrm{%
k-1}}^{\mathrm{2k-1}}:J^{\mathrm{2k-1}}(M,N)\rightarrow J^{\mathrm{k}}(M,N).$
In other words, for any vector field $X$ tangent to fibres of $\pi _{\mathrm{%
k-1}}^{\mathrm{2k-1}}:J^{\mathrm{2k-1}}(M,N)\rightarrow J^{\mathrm{k}}(M,N)$%
, 
\begin{equation*}
X%
{\mbox{$ \rule {5pt} {.5pt}\rule {.5pt} {6pt} \, $}}%
\Xi =0.
\end{equation*}

\item For every vector field $X$ on $J^{\mathrm{2k-1}}(M,N)$ tangent to
fibres of the source map $\pi ^{\mathrm{2k-1}}:J^{\mathrm{2k-1}%
}(M,N)\rightarrow M$ the left interior product $X%
{\mbox{$ \rule {5pt} {.5pt}\rule {.5pt} {6pt} \, $}}%
\Xi $ is semi-basic with respect to the source map. Thus, for every pair $%
X_{1},X_{2}$ of vector fields on $J^{\mathrm{2k-1}}(M,N)$ tangent to fibres
of the source map $\pi ^{\mathrm{2k-1}}:J^{\mathrm{2k-1}}(M,N)\rightarrow
M,$%
\begin{equation*}
X_{2}%
{\mbox{$ \rule {5pt} {.5pt}\rule {.5pt} {6pt} \, $}}%
\left( X_{1}%
{\mbox{$ \rule {5pt} {.5pt}\rule {.5pt} {6pt} \, $}}%
\Xi \right) =0.
\end{equation*}
\end{enumerate}

\item For every section $\sigma $ of $\pi :N\rightarrow M$,%
\begin{equation*}
j^{\mathrm{2k-1}}\sigma ^{\ast }\Xi =0,
\end{equation*}%
where $j^{\mathrm{2k-1}}\sigma ^{\ast }\Xi =\Xi \circ \wedge ^{n}T(j^{%
\mathrm{2k-1}}\sigma )$ is the pull-back of $\Xi $ by $j^{\mathrm{2k-1}%
}\sigma :M\rightarrow J^{\mathrm{2k-1}}(M,N).$

\item For every vector field $X$ on $J^{\mathrm{2k-1}}(M,N)$ tangent to
fibres of the target map $\pi _{0}^{\mathrm{2k-1}}:J^{\mathrm{2k-1}%
}(M,N)\rightarrow N,$ and every section $\sigma $ of $\pi :N\rightarrow M,$%
\begin{equation*}
j^{\mathrm{2k-1}}\sigma ^{\ast }\left( X%
{\mbox{$ \rule {5pt} {.5pt}\rule {.5pt} {6pt} \, $}}%
\left( \pi _{\mathrm{k}}^{\mathrm{2k-1}\ast }\Phi +\mathrm{d}\Xi \right)
\right) =0.\smallskip
\end{equation*}
\end{enumerate}
\end{theorem}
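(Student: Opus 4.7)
The plan is to construct $\Xi$ locally by iterated integration by parts on the contact structure, mirroring the derivation of the higher-order Euler--Lagrange equations. I work in the adapted coordinates $(x^{i},y^{a},z^{a}_{I})$ on $J^{2k-1}(M,N)$, with contact 1-forms $\theta^{a}_{I}=dz^{a}_{I}-z^{a}_{Ij}\,dx^{j}$ available for $|I|\leq 2k-2$. Since $dx^{j}\wedge d_{m}x=0$, the pullback of $\Phi$ to $J^{2k-1}$ may be rewritten as $\bigl(\Phi_{a}\theta^{a}+\sum_{l=1}^{k}\Phi_{a}^{i_{1}\ldots i_{l}}\theta^{a}_{i_{1}\ldots i_{l}}\bigr)\wedge d_{m}x$.

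The computational core is the identity $\theta^{a}_{i_{1}\ldots i_{l}}\wedge d_{m}x=-d(\theta^{a}_{i_{2}\ldots i_{l}}\wedge d_{m-1}x_{i_{1}})$, where $d_{m-1}x_{i_{1}}=\partial_{x^{i_{1}}}\lrcorner d_{m}x$. Applying it $l$ times to the order-$l$ summand and splitting $d\Phi_{a}^{i_{1}\ldots i_{l}}=D_{j}\Phi_{a}^{i_{1}\ldots i_{l}}\,dx^{j}+(\text{contact part})$ at each step, I collect the boundary pieces into
\[
\Xi := -\sum_{l=1}^{k}\sum_{s=1}^{l}(-1)^{s}\,D_{i_{1}}\cdots D_{i_{s-1}}\Phi_{a}^{i_{1}\ldots i_{l}}\,\theta^{a}_{i_{s+1}\ldots i_{l}}\wedge d_{m-1}x_{i_{s}}.
\]
Since $s-1\leq l-1\leq k-1$, each coefficient lies on $J^{k+s-1}\subseteq J^{2k-1}$ and each contact factor has order $l-s\leq k-1$, so $\Xi$ is a smooth $m$-form on $J^{2k-1}$.

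Conditions 1(a), 1(b) and 2 will follow directly from this explicit form: no differential $dz^{a}_{J}$ with $|J|\geq k$ appears in $\Xi$, giving 1(a); each summand carries exactly one contact 1-form, so any double vertical interior product vanishes, giving 1(b); each contact factor has order $\leq k-1\leq 2k-2$, hence is killed by $j^{2k-1}\sigma^{\ast}$, giving condition 2. For condition 3, telescoping the IBP identities yields
\[
(\pi_{k}^{2k})^{\ast}\Phi + d\bigl((\pi_{2k-1}^{2k})^{\ast}\Xi\bigr) = E_{a}\,\theta^{a}\wedge d_{m}x + \Omega \qquad\text{on }J^{2k},
\]
with $E_{a}=\Phi_{a}+\sum_{l=1}^{k}(-1)^{l}D_{i_{1}}\cdots D_{i_{l}}\Phi_{a}^{i_{1}\ldots i_{l}}$ the Euler--Lagrange expression and $\Omega$ a sum of two-contact remainders whose factors all have order $\leq 2k-1$. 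For $X$ tangent to fibres of $\pi_{0}^{2k-1}$, the contraction $X\lrcorner\theta^{a}=0$ kills the first summand, while $X\lrcorner\Omega$ retains a contact factor of order $\leq 2k-1$, hence is annihilated by $j^{2k}\sigma^{\ast}$; using $j^{2k-1}\sigma^{\ast}=j^{2k}\sigma^{\ast}\circ(\pi_{2k-1}^{2k})^{\ast}$ then delivers condition 3.

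I expect the main obstacle to be the jet-order bookkeeping: although $\Xi$ must live on $J^{2k-1}$, the telescoped identity naturally sits on $J^{2k}$ because $E_{a}$ and the top-order contact factors in $\Omega$ genuinely require $J^{2k}$. This is handled by noting that the left-hand side of the identity is manifestly a pullback from $J^{2k-1}$ and by verifying the vanishing in condition 3 on $J^{2k}$ as above. Only local existence is asserted, in line with the coordinate dependence emphasized in the introduction.
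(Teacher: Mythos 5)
Your proposal is correct, and the form you construct is literally the one the paper ends up with: after reindexing, the coefficient of $\theta^{a}_{i_{s+1}\ldots i_{l}}\wedge d_{m-1}x_{i_{s}}$ in your $\Xi$ is the fully symmetric Ostrogradski momentum $p_{a}^{i_{s}\ldots i_{l}}$ of the paper's formula (\ref{15e}). What differs is the logic. The paper works backwards: Conditions 1 and 2 force the contact-form ansatz (\ref{2}); computing $j^{\mathrm{2k-1}}\sigma^{\ast}(X\lrcorner\mathrm{d}\Xi)$ and imposing Condition 3 yields the underdetermined linear system (\ref{14}) on the momenta, which is then solved recursively after adding the full-symmetry assumption. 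You work forwards: you define $\Xi$ in closed form by iterated integration by parts and verify Condition 3 through the single telescoped identity $\pi_{\mathrm{k}}^{\mathrm{2k}\,\ast}\Phi+\mathrm{d}\Xi\equiv E_{a}\,\theta^{a}\wedge\mathrm{d}_{m}x$ modulo two-contact terms, while Conditions 1(a), 1(b) and 2 are read off the shape of each summand in both treatments. Your route buys a cleaner, one-shot verification of Condition 3 and exposes the Lagrange derivative $E_{a}=j^{\mathrm{k}}\sigma^{\ast}\Phi_{a}-P_{a,i}^{i}$ already at this stage, which the paper only extracts in Proposition \ref{Decomposition}; the paper's route has the advantage of exhibiting the whole solution set (\ref{14}), i.e.\ the non-uniqueness of $\Xi$, which Lemma \ref{Lemma 2} and Corollary \ref{Corollary 1} later rely on. Two details worth writing out: when you pass to $J^{\mathrm{2k}}$, lift $X$ to a locally defined $\pi_{\mathrm{2k-1}}^{\mathrm{2k}}$-related field, still tangent to the fibres of the target map, so that $j^{\mathrm{2k-1}}\sigma^{\ast}(X\lrcorner\alpha)=j^{\mathrm{2k}}\sigma^{\ast}(\tilde{X}\lrcorner\,\pi_{\mathrm{2k-1}}^{\mathrm{2k}\,\ast}\alpha)$; and note that the contact parts of $\mathrm{d}p$ for $p\in C^{\infty}(J^{\mathrm{2k-1}}(M,N))$ involve $\theta^{b}_{I}$ with $|I|$ up to $2k-1$, which are contact forms only on $J^{\mathrm{2k}}$ --- exactly the bookkeeping you flag, and it does close up.
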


\begin{proof}
The first and the second condition imply that $\Xi $ is a linear combination
of contact forms up to order $\mathrm{k}$ with coefficients given by forms
that are semi-basic with respect to the source map. In local coordinates, 
\begin{eqnarray}
\Xi &=&p_{a}^{i}(\mathrm{d}y^{a}-z_{j}^{a}\mathrm{d}x^{j})\wedge \left( 
\frac{{\small \partial }}{{\small \partial x}^{i}}%
{\mbox{$ \rule {5pt} {.5pt}\rule {.5pt} {6pt} \, $}}%
\mathrm{d}_{m}x\right) \notag\\
&&+p_{a}^{i_{1}i_{2}}(\mathrm{d}z_{i%
\,_{2}}^{a}-z_{i_{2}j}^{a}\mathrm{d}x^{j})\wedge \left( \frac{{\small %
\partial }}{{\small \partial x}^{i_{1}}}%
{\mbox{$ \rule {5pt} {.5pt}\rule {.5pt} {6pt} \, $}}%
\mathrm{d}_{m}x\right)   \notag\label{2} \\
&&+...+p_{a}^{i_{1}i_{2}...i_{k}}(\mathrm{d}%
z_{i_{2}...i_{k}}^{a}-z_{i_{2}i_{2}...i_{k}j}^{a}\mathrm{d}x^{j})\wedge
\left( \frac{{\small \partial }}{{\small \partial x}^{i_{1}}}%
{\mbox{$ \rule {5pt} {.5pt}\rule {.5pt} {6pt} \, $}}%
\mathrm{d}_{m}x\right) ,  
\end{eqnarray}%
where the coefficients $p_{a}^{i_{1}i_{2}...i_{l}}$ are symmetric in indices 
$i_{2},...,i_{l}$, for $l=3,...,k$ and the summation is taken over indices
in non-decreasing order. Hence, 
\begin{eqnarray}
\Xi &=&\left( p_{a}^{i_{1}}\mathrm{d}y^{a}+p_{a}^{i_{1}i_{2}}\mathrm{d}%
z_{i_{2}}^{a}+...+p_{a}^{i_{1}i_{2}...i_{k}}\mathrm{d}z_{i_{2}...i_{k}}^{a}%
\right) \wedge \left( \frac{{\small \partial }}{{\small \partial x}^{i_{1}}}%
{\mbox{$ \rule {5pt} {.5pt}\rule {.5pt} {6pt} \, $}}%
\mathrm{d}_{m}x\right)  \label{4} \\
&&-\left(
p_{a}^{i_{1}}z_{i_{1}}^{a}+p_{a}^{i_{1}i_{2}}z_{i_{1}i_{2}}^{a}+...+p_{a}^{i_{1}i_{2}...i_{k}}z_{i_{1}i_{2}...i_{k}}^{a}\right) 
\mathrm{d}_{m}x.  \notag
\end{eqnarray}%
Note that the symmetry of $z_{i_{1}i_{2}...i_{l}}^{a}$ in $i_{1},...,i_{l}$
implies in equation (\ref{4}) that only the fully symmetric parts of $%
p_{a}^{i_{1}i_{2}...i_{l}}$ contribute to the sum in the second
term.\smallskip

In order to use the third condition, we need the exterior differential of $%
\Xi $. Equation (\ref{2}) yields%
\begin{eqnarray}
\mathrm{d}\Xi &=&\mathrm{d}p_{a}^{i}\wedge (\mathrm{d}y^{a}-z_{j}^{a}\mathrm{%
d}x^{j})\wedge \left( \frac{\partial }{\partial x^{i}}%
{\mbox{$ \rule {5pt} {.5pt}\rule {.5pt} {6pt} \, $}}%
\mathrm{d}_{m}x\right) +  \label{6} \\
&&\mathrm{d}p_{a}^{i_{1}i_{2}}\wedge (\mathrm{d}z_{i%
\,_{2}}^{a}-z_{i_{2}j}^{a}\mathrm{d}x^{j})\wedge \left( \frac{\partial }{%
\partial x^{i_{1}}}%
{\mbox{$ \rule {5pt} {.5pt}\rule {.5pt} {6pt} \, $}}%
\mathrm{d}_{m}x\right) +...+  \notag \\
&&+\mathrm{d}p_{a}^{i_{1}i_{2}...i_{k}}\wedge (\mathrm{d}%
z_{i_{2}...i_{k}}^{a}-z_{i_{2}...i_{k}j}^{a}\mathrm{d}x^{j})\wedge \left( 
\frac{\partial }{\partial x^{i_{1}}}%
{\mbox{$ \rule {5pt} {.5pt}\rule {.5pt} {6pt} \, $}}%
\mathrm{d}_{m}x\right) +  \notag \\
&&-p_{a}^{i}\mathrm{d}z_{j}^{A}\wedge \mathrm{d}x^{j}\wedge \left( \frac{%
\partial }{\partial x^{i}}%
{\mbox{$ \rule {5pt} {.5pt}\rule {.5pt} {6pt} \, $}}%
\mathrm{d}_{m}x\right) -  \notag \\
&&-p_{a}^{i_{1}i_{2}}\mathrm{d}z_{i_{2}j}^{a}\wedge \mathrm{d}x^{j}\wedge
\left( \frac{\partial }{\partial x^{i_{1}}}%
{\mbox{$ \rule {5pt} {.5pt}\rule {.5pt} {6pt} \, $}}%
\mathrm{d}_{m}x\right) +...+  \notag \\
&&-p_{a}^{i_{1}i_{2}...i_{k}}\mathrm{d}z_{i_{2}...i_{k}j}^{a}\wedge \mathrm{d%
}x^{j}\wedge \left( \frac{\partial }{\partial x^{i_{1}}}%
{\mbox{$ \rule {5pt} {.5pt}\rule {.5pt} {6pt} \, $}}%
\mathrm{d}_{m}x\right) ,  \notag
\end{eqnarray}%
which can be simplified to 
\begin{eqnarray}
\mathrm{d}\Xi &=&\mathrm{d}p_{a}^{i}\wedge (\mathrm{d}y^{a}-z_{j}^{a}\mathrm{%
d}x^{j})\wedge \left( \frac{{\small \partial }}{{\small \partial x}^{i}}%
{\mbox{$ \rule {5pt} {.5pt}\rule {.5pt} {6pt} \, $}}%
\mathrm{d}_{m}x\right) +  \label{7} \\
&&\mathrm{d}p_{a}^{i_{1}i_{2}}\wedge (\mathrm{d}z_{i%
\,_{2}}^{a}-z_{i_{2}j}^{a}\mathrm{d}x^{j})\wedge \left( \frac{{\small %
\partial }}{{\small \partial x}^{i}}%
{\mbox{$ \rule {5pt} {.5pt}\rule {.5pt} {6pt} \, $}}%
\mathrm{d}_{m}x\right) +...+  \notag \\
&&+\mathrm{d}p_{a}^{i_{1}i_{2}...i_{k}}\wedge (\mathrm{d}%
z_{i_{2}...i_{k}}^{a}-z_{i_{2}...i_{k}j}^{a}\mathrm{d}x^{j})\wedge \left( 
\frac{{\small \partial }}{{\small \partial x}^{i}}%
{\mbox{$ \rule {5pt} {.5pt}\rule {.5pt} {6pt} \, $}}%
\mathrm{d}_{m}x\right) +  \notag \\
&&-\left( p_{a}^{i}\mathrm{d}z_{i}^{a}+p_{a}^{i_{1}i_{2}}\mathrm{d}%
z_{i_{1}i_{2}}^{a}+...+p_{a}^{i_{1}i_{2}...i_{k}}\mathrm{d}%
z_{i_{1}i_{2}...i_{k}}^{a}\right) \wedge \mathrm{d}_{m}x.\smallskip  \notag
\end{eqnarray}

Let $X$ be a vector field on $J^{\mathrm{2k-1}}(M,N)$ tangent to fibres of
of the target map $\pi _{0}^{\mathrm{2k-1}}:J^{2k-1}(M,N)\rightarrow N$. In
local coordinates,%
\begin{equation}
X=X_{i}^{a}\frac{{\small \partial }}{{\small \partial z_{i}^{a}}}%
+...+X_{i_{1}...i_{2k-1}}^{a}\frac{{\small \partial }}{{\small \partial }%
z_{i_{1}...i_{2k-1}}^{a}}.  \label{8}
\end{equation}%
Note that, for $l=2,...,2k-1,$ $X_{i_{1}...i_{l}}^{a}$ is symmetric in the
indices $i_{1},...,i_{l}$. Then, 
\begin{eqnarray}
X%
{\mbox{$ \rule {5pt} {.5pt}\rule {.5pt} {6pt} \, $}}%
\mathrm{d}\Xi &=&(Xp_{a}^{i})(\mathrm{d}y^{a}-z_{j}^{a}\mathrm{d}%
x^{j})\wedge \left( \frac{{\small \partial }}{{\small \partial x}^{i}}%
{\mbox{$ \rule {5pt} {.5pt}\rule {.5pt} {6pt} \, $}}%
\mathrm{d}_{m}x\right) +...+  \label{9} \\
&&+(Xp_{a}^{i_{1}i_{2}...i_{k}})(\mathrm{d}%
z_{i_{2}...i_{k}}^{a}-z_{i_{2}...i_{k}j}^{a}\mathrm{d}x^{j})\wedge \left( 
\frac{{\small \partial }}{{\small \partial x}^{i}}%
{\mbox{$ \rule {5pt} {.5pt}\rule {.5pt} {6pt} \, $}}%
\mathrm{d}_{m}x\right) +  \notag \\
&&-X_{i\,_{2}}^{a}\mathrm{d}p_{a}^{i_{1}i_{2}}\wedge \left( \frac{{\small %
\partial }}{{\small \partial x}^{i}}%
{\mbox{$ \rule {5pt} {.5pt}\rule {.5pt} {6pt} \, $}}%
\mathrm{d}_{m}x\right) -...-  \notag \\
&&-X_{i_{2}...i_{k}}^{a}\mathrm{d}p_{a}^{i_{1}i_{2}...i_{k}}\wedge \left( 
\frac{{\small \partial }}{{\small \partial x}^{i}}%
{\mbox{$ \rule {5pt} {.5pt}\rule {.5pt} {6pt} \, $}}%
\mathrm{d}_{m}x\right) +  \notag \\
&&-\left(
p_{a}^{i}X_{i}^{a}+p_{a}^{i_{1}i_{2}}X_{i_{1}i_{2}}^{a}+...+p_{a}^{i_{1}i_{2}...i_{k}}X_{i_{1}i_{2}...i_{k}}^{a}\right) 
\mathrm{d}_{m}x,  \notag
\end{eqnarray}%
where $(Xp_{a}^{i})$ is the derivation of $p_{a}^{i}$ in direction $X$,
etc.\smallskip

The first two lines of equation (\ref{9}) do not contribute to $%
j^{2k-1}\sigma ^{\ast }(X%
{\mbox{$ \rule {5pt} {.5pt}\rule {.5pt} {6pt} \, $}}%
\mathrm{d}\Xi )$, because they are linear combinations of contact forms.
Hence, equation (\ref{9}) implies that 
\begin{eqnarray}
j^{2k-1}\sigma ^{\ast }\left( X%
{\mbox{$ \rule {5pt} {.5pt}\rule {.5pt} {6pt} \, $}}%
\mathrm{d}\Xi \right) &=&-X_{i_{2}}^{a}P_{a,i_{1}}^{i_{1}i_{2}}\mathrm{d}%
_{m}x-...-X_{i_{2}...i_{k}}^{a}P_{a,i_{1}}^{i_{1}i_{2}...i_{k}}\mathrm{d}%
_{m}x  \label{10} \\
&&-\left(
P_{a}^{i}X_{i}^{a}+P_{a}^{i_{1}i_{2}}X_{i_{1}i_{2}}^{a}+...+P_{a}^{i_{1}i_{2}...i_{k}}X_{i_{1}i_{2}...i_{k}}^{a}\right) 
\mathrm{d}_{m}x,  \notag
\end{eqnarray}%
where 
\begin{equation}
P_{a}^{i_{1}}=j^{2k-1}\sigma ^{\ast }p_{a}^{i_{1}},\text{ }...,\text{ }%
P_{a}^{i_{1}i_{2}...i_{k}}=j^{2k-1}\sigma ^{\ast }p_{a}^{i_{1}i_{2}...i_{k}},
\label{11}
\end{equation}%
and the components of $X$ are evaluated on the range of $j^{2k-1}\sigma $.
Following the symmetry argument leading to equation (\ref{4}), we can
rewrite equation (\ref{10}) in the form 
\begin{eqnarray}
j^{2k-1}\sigma ^{\ast }\left( X%
{\mbox{$ \rule {5pt} {.5pt}\rule {.5pt} {6pt} \, $}}%
\mathrm{d}\Xi \right)
&=&-X_{i_{2}}^{a}P_{a,i_{1}}^{i_{1}i_{2}}d_{n}x-...-X_{i_{2}...i_{k}}^{a}P_{a,i_{1}}^{i_{1}i_{2}...i_{k}}%
\mathrm{d}_{m}x  \label{11a} \\
&&-\left(
P_{a}^{i}X_{i}^{a}+P_{a}^{(i_{1}i_{2})}X_{i_{1}i_{2}}^{a}+...+P_{a}^{(i_{1}i_{2}...i_{k})}X_{i_{1}i_{2}...i_{k}}^{a}\right) 
\mathrm{d}_{m}x,  \notag
\end{eqnarray}%
where $(i_{1},...,i_{l})$ denotes symmetrization in indices $i_{1},...,i_{l}$%
. Therefore, the third condition of the theorem yields 
\begin{eqnarray}
&&0=\Phi _{a}^{i}X_{i}^{a}+...+\Phi
_{a}^{i_{1}...i_{k}}X_{i_{1}...i_{k}}^{a}-X_{i_{2}}^{a}P_{a,i_{1}}^{i_{1}i_{2}}-
\label{12} \\
&&-...-X_{i_{2}...i_{k}}^{a}P_{a,i_{1}}^{i_{1}i_{2}...i_{k}}-\left(
P_{a}^{i}X_{i}^{a}+P_{a}^{(i_{1}i_{2})}X_{i_{1}i_{2}}^{a}+...+P_{a}^{(i_{1}i_{2}...i_{k})}X_{i_{1}i_{2}...i_{k}}^{a}\right) ,
\notag
\end{eqnarray}%
where $\Phi _{a}^{i},...,\Phi _{a}^{i_{1}...i_{k}}$ are evaluated on the
range $j^{k}\sigma $. Equation (\ref{12}) is equivalent to%
\begin{eqnarray}
0 &=&(\Phi
_{a}^{i_{1}...i_{k}}-P_{a}^{(i_{1}i_{2}...i_{k})})X_{i_{1}i_{2}...i_{k}}^{a}+
\label{13} \\
&&(\Phi
_{a}^{i_{2}...i_{k}}-P_{a}^{(i_{2}...i_{k})}-P_{a,i_{1}}^{i_{1}i_{2}...i_{k}})X_{i_{2}...i_{k}}^{a}+...+
\notag \\
&&+(\Phi _{a}^{i}-P_{a}^{i}-P_{a,i_{1}}^{i_{1}i})X_{i}^{a}.\smallskip  \notag
\end{eqnarray}

Since the components $X_{i_{1}i_{2}..i_{l}}^{a}$ of $X$ are arbitrary
functions symmetric in the indices $i_{1},...,i_{k}$, it follows that 
\begin{eqnarray}
\Phi _{a}^{i_{1}...i_{k}}-P_{a}^{(i_{1}i_{2}...i_{k})} &=&0,  \label{14} \\
\Phi
_{a}^{i_{2}...i_{k}}-P_{a}^{(i_{2}...i_{k})}-P_{a,i_{1}}^{i_{1}i_{2}...i_{k}} &=&0,
\notag \\
\Phi
_{a}^{i_{l}...i_{k}}-P_{a}^{(i_{l}...i_{k})}-P_{a,i_{l-1}}^{i_{l-1}i_{l}...i_{k}} &=&0,~~%
\text{for~~}l=3,...,k-2,  \notag \\
\Phi _{a}^{i}-P_{a}^{i}-P_{a,i_{1}}^{i_{1}i} &=&0,  \notag
\end{eqnarray}%
This shows that there is no unique local form $\Xi $ satisfying the
conditions of our theorem. In order to prove existence, we are free to
impose an additional condition that the coefficients $%
p_{a}^{i_{1}i_{2}...i_{l}}$.\smallskip\ 

An obvious generalization of the De Donder construction corresponds to an
additional conditions that all $p_{a}^{i_{1}i_{2}...i_{l}}$ are fully
symmetric in all indices $i_{1},...,i_{l}$. With this additional assumption,
equation (\ref{14}) yields 
\begin{eqnarray}
\Phi _{a}^{i_{1}...i_{k}}-P_{a}^{i_{1}i_{2}...i_{k}} &=&0,  \label{14a} \\
\Phi
_{a}^{i_{2}...i_{k}}-P_{a}^{i_{2}...i_{k}}-P_{a,i_{1}}^{i_{1}i_{2}...i_{k}}
&=&0,  \notag \\
\Phi
_{a}^{i_{l}...i_{k}}-P_{a}^{i_{l}...i_{k}}-P_{a,i_{l-1}}^{i_{l-1}i_{l}...i_{k}} &=&0,~~%
\text{for~~}l=3,...,k-2,  \notag \\
\Phi _{a}^{i}-P_{a}^{i}-P_{a,i_{1}}^{i_{1}i} &=&0.  \notag
\end{eqnarray}%
In equation (\ref{14a}) $\Phi $ depends on $j^{k}\sigma (x)$. In local
coordinates, the section $\sigma $ is given by $y^{b}=\sigma ^{b}(x)=\sigma
^{b}(x^{1},...,x^{n})$ for $b=1,..m$, and $j^{k}\sigma (x)$ is has
coordinates 
\begin{equation*}
(x^{i},y^{b},z_{j_{1}}^{b},...,z_{,j_{1}...j_{k}}^{b})=(x^{i},\sigma
^{b}(x),\sigma _{,j_{1}}^{b}(x),...,\sigma _{,j_{1}...j_{k}}^{b}(x)).
\end{equation*}%
Hence, 
\begin{equation}
P_{a}^{i_{1}i_{2}...i_{k}}(x)=\Phi _{a}^{i_{1}i_{2}...i_{k}}(x,\sigma
^{b}(x),\sigma _{,j_{1}}^{b}(x),...,\sigma _{,j_{1}...j_{k}}^{b}(x)).
\label{15}
\end{equation}%
Next, 
\begin{eqnarray}
P_{a}^{i_{2}...i_{k}}(x) &=&\Phi _{a}^{i_{2}...i_{k}}(x,\sigma
^{b}(x),\sigma _{,j_{1}}^{b}(x),...,\sigma
_{,j_{1}...j_{k}}^{b}(x))-P_{a,i_{1}}^{i_{1}i_{2}...i_{k}}(x)  \label{15a} \\
&=&\Phi _{a}^{i_{2}...i_{k}}(x,\sigma ^{b}(x),\sigma
_{,j_{1}}^{b}(x),...,\sigma _{,j_{1}...j_{k}}^{b}(x))+  \notag \\
&&-\frac{\partial }{\partial x^{i_{1}}}\left[ \Phi
_{a}^{i_{1}i_{2}...i_{k}}(x,\sigma ^{b}(x),\sigma
_{,j_{1}}^{b}(x),...,\sigma _{,j_{1}...j_{k}}^{b}(x))\right]  \notag \\
&=&\Phi _{a}^{i_{2}...i_{k}}(x,\sigma ^{b}(x),\sigma
_{,j_{1}}^{b}(x),...,\sigma _{,j_{1}...j_{k}}^{b}(x))+  \notag \\
&&-\frac{{\small \partial \Phi }_{a}^{i_{1}i_{2}...i_{k}}}{{\small \partial x%
}^{i_{1}}}(j^{k}\sigma (x))-\frac{{\small \partial \Phi }%
_{a}^{i_{1}i_{2}...i_{k}}}{{\small \partial y}^{b}}(j^{k}\sigma (x))\frac{%
{\small \partial \sigma }^{b}}{{\small \partial x}^{i_{1}}}(x)+  \notag \\
&&-\frac{\partial \Phi _{a}^{i_{1}i_{2}...i_{k}}}{\partial z_{j_{1}}^{b}}%
(j^{k}\sigma (x))\frac{\partial \sigma _{,j_{1}}^{b}}{\partial x^{i_{1}}}%
(x)-...-\frac{\partial \Phi _{a}^{i_{1}i_{2}...i_{k}}}{\partial
z_{j_{1}...j_{k}}^{b}}(j^{k}\sigma (x))\frac{\partial \sigma
_{,j_{1}...j_{k}}^{b}}{\partial x^{i_{1}}}(x).  \notag
\end{eqnarray}%
Since 
\begin{equation*}
\frac{{\small \partial \sigma }_{,j_{1}...j_{k}}^{b}}{{\small \partial x}%
^{i_{1}}}(x)=\sigma _{,j_{1}...j_{k}i_{1}}^{b}(x),
\end{equation*}%
equation (\ref{15a}) implies that $P_{a}^{i_{2}...i_{k}}(x)$ depends on $%
j^{k+1}\sigma (x).$\smallskip

Continuing, we get a complete solution of equations (\ref{14a}) in the form 
\begin{eqnarray}
&&P_{a}^{i_{l}...i_{k}}(x)=\Phi _{a}^{i_{l}...i_{k}}(x,\sigma ^{b}(x),\sigma
_{,j_{1}}^{b}(x),...,\sigma _{,j_{1}...j_{k}}^{b}(x))+  \label{15b} \\
&&+\dsum_{j=1}^{l-1}(-1)^{j}\frac{{\small \partial }^{j}}{{\small \partial x}%
^{i_{l-j}}{\small ...\partial x}^{i_{l-1}}}\left[ \Phi
_{a}^{i_{l-j}i_{l-j+1}...i_{l-1}i_{l-}...i_{k}}(x,\sigma ^{b}(x),\sigma
_{,j_{1}}^{b}(x),...,\sigma _{,j_{1}...j_{k}}^{b}(x))\right]  \notag
\end{eqnarray}%
for $l=1,...,k$. It shows that $P_{a}^{i_{l}...i_{k}}(x)$ depends on $%
j^{k+l-1}\sigma (x)$. In particular, $P_{a}^{i_{k}}(x)$ depends on $%
j^{2k-1}\sigma (x)$.

Recall that $P_{a}^{i_{l}...i_{k}}=j^{2k-1}\sigma ^{\ast
}p_{a}^{i_{l}...i_{k}}$, for $l=1,...,k-1$, where $p_{a}^{i_{l}...i_{k}}$ is
a function on $J^{2k-1}(M,N)$, \thinspace see equation (\ref{11}). Equation (%
\ref{15b}) gives $P_{a}^{i_{l}...i_{k}}$ for all sections $\sigma $ of $\pi $%
. Hence, we can use it to get an explicit expression for $%
p_{a}^{i_{l}...i_{k}}$ as a function of coordinates $%
(x^{i},y^{a},z_{i_{1}}^{a},...z_{i_{1}...i_{2k-1}}^{a}).$ To this end we
define a differential operator 
\begin{equation}
D_{i}=\frac{{\small \partial }}{{\small \partial x}^{i}}+z_{i}^{a}\frac{%
{\small \partial }}{{\small \partial y}^{a}}+z_{ij_{1}}^{a}\frac{{\small %
\partial }}{{\small \partial z}_{j_{1}}^{a}}+...+z_{ij_{1}...j_{2k-2}}^{a}%
\frac{{\small \partial }}{{\small \partial z}_{j_{1}...j_{2k-2}}^{a}}%
+z_{ij_{1}...j_{2k-1}}^{a}\frac{{\small \partial }}{{\small \partial z}%
_{j_{1}...j_{2k-1}}^{a}}  \label{15d}
\end{equation}%
acting on $C^{\infty }(J^{\mathrm{2k-1}}(M,N))$.

It enables us to write%
\begin{eqnarray}
&&p_{a}^{i_{l}...i_{k}}(x^{i},y^{b},z_{j_{1}}^{b},...,z_{j_{1}...j_{2k-1}}^{b})=\Phi _{a}^{i_{l}...i_{k}}(x^{i},y^{b},z_{j_{1}}^{b},...,z_{j_{1}...j_{k}}^{b})+
\label{15e} \\
&&+\dsum_{j=1}^{l-1}(-1)^{j}D_{i_{l-j}}...D_{i_{l-1}}\left[ \Phi
_{a}^{i_{l-j}i_{l-j+1}...i_{l-1}i_{l}...i_{k}}(x^{i},y^{b},z_{j_{1}}^{b},...,z_{j_{1}...j_{k}}^{b})%
\right] .  \notag
\end{eqnarray}%
Hence, in terms of local coordinates $%
(x^{i},y^{b},z_{j_{1}}^{b},...,z_{j_{1}...j_{2k-1}}^{b})$ on $J^{\mathrm{2k-1%
}}(M,N),$ an obvious generalization of the De Donder construction yields
form $\Xi $ given by equation (\ref{2}), where the coefficients $%
p_{a}^{i_{1}},...,p_{a}^{i_{1}i_{2}...i_{k}}$ are given by equation (\ref%
{15e}).\smallskip
\end{proof}

\begin{definition}
\label{Definition 1} Local forms 
\begin{eqnarray*}
\Xi &=&p_{a}^{i}(\mathrm{d}y^{a}-z_{j}^{a}\mathrm{d}x^{j})\wedge \left( 
\frac{{\small \partial }}{{\small \partial x}^{i}}%
{\mbox{$ \rule {5pt} {.5pt}\rule {.5pt} {6pt} \, $}}%
\mathrm{d}_{m}x\right) +p_{a}^{i_{1}i_{2}}(\mathrm{d}z_{i%
\,_{2}}^{a}-z_{i_{2}j}^{a}\mathrm{d}x^{j})\wedge \left( \frac{{\small %
\partial }}{{\small \partial x}^{i_{1}}}%
{\mbox{$ \rule {5pt} {.5pt}\rule {.5pt} {6pt} \, $}}%
\mathrm{d}_{m}x\right) + \\
&&+...+p_{a}^{i_{1}i_{2}...i_{k}}(\mathrm{d}%
z_{i_{2}...i_{k}}^{a}-z_{i_{2}i_{3}...i_{k}j}^{a}\mathrm{d}x^{j})\wedge
\left( \frac{{\small \partial }}{{\small \partial x}^{i_{1}}}%
{\mbox{$ \rule {5pt} {.5pt}\rule {.5pt} {6pt} \, $}}%
\mathrm{d}_{m}x\right) ,
\end{eqnarray*}
are called boundary forms. If $\Xi $ satisfies Condition 3 of Theorem \ref%
{Theorem 1}, we say that $\Xi $ is a \emph{boundary form} of $\Phi $.
\smallskip\ 
\end{definition}

In the following we discuss some properties of boundary forms. This means,
we do not make additional assumptions on the symmetry properties of
coefficients $p_{a}^{i_{1}...i_{l}},$ and do not specify the form $\Phi $
explicitly.\smallskip

\begin{lemma}
\label{Lemma 1} For each vector field $Y$ on $N$, which projects to a vector
field on $M$, and every section $\sigma $ of $\pi :M\rightarrow N$, 
\begin{equation}
j^{\mathrm{2k-1}}\sigma ^{\ast }\left( \pounds _{Y^{2k-1}}\Xi \right) =0%
\text{, }  \label{3.6}
\end{equation}%
where $Y^{\mathrm{2k-1}}$ is the prolongation of $Y$ to $J^{\mathrm{2k-1}%
}(M,N).$
\end{lemma}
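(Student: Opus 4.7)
The strategy is to reduce the statement to Condition~2 of Theorem~\ref{Theorem 1}, namely $j^{2k-1}\sigma^{\ast}\Xi=0$ for every section $\sigma$, by exploiting the naturality of jet prolongation under the flow of a projectable vector field. The key observation is that $\pounds_{Y^{2k-1}}\Xi$ is the $t$-derivative at $t=0$ of the pull-back of $\Xi$ by the flow of $Y^{2k-1}$, and this flow acts on jet prolongations of sections merely by transporting the underlying section, so Condition~2 annihilates $\Xi$ uniformly in $t$.

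Concretely, let $\phi_{t}$ denote the local flow of $Y$ on $N$, and let $\phi_{t}^{M}$ denote the flow of the vector field on $M$ onto which $Y$ projects, so that $\pi\circ\phi_{t}=\phi_{t}^{M}\circ\pi$. For any section $\sigma$ of $\pi$, the map
\begin{equation*}
\sigma_{t}:=\phi_{t}\circ\sigma\circ(\phi_{t}^{M})^{-1}
\end{equation*}
is again a section of $\pi$ on its domain. The prolongation of a projectable vector field is characterised by the intertwining property
\begin{equation*}
\phi_{t}^{2k-1}\circ j^{2k-1}\sigma \;=\; j^{2k-1}\sigma_{t}\circ\phi_{t}^{M},
\end{equation*}
which expresses that the flow of $Y^{2k-1}$ sends the $(2k{-}1)$-jet of $\sigma$ at $x$ to the $(2k{-}1)$-jet of $\sigma_{t}$ at $\phi_{t}^{M}(x)$; I would quote this from the Appendix.

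Pulling $\Xi$ back through both sides of this identity and using the standard identification $\pounds_{Y^{2k-1}}\Xi=\tfrac{d}{dt}\big|_{t=0}(\phi_{t}^{2k-1})^{\ast}\Xi$, I obtain
\begin{equation*}
j^{2k-1}\sigma^{\ast}\bigl(\pounds_{Y^{2k-1}}\Xi\bigr)=\frac{d}{dt}\Big|_{t=0}(\phi_{t}^{M})^{\ast}\bigl(j^{2k-1}\sigma_{t}\bigr)^{\ast}\Xi .
\end{equation*}
Condition~2 of Theorem~\ref{Theorem 1} applied to each section $\sigma_{t}$ gives $(j^{2k-1}\sigma_{t})^{\ast}\Xi=0$ identically in $t$, so the right-hand side vanishes and the lemma follows.

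The only substantive ingredient is the intertwining relation between $\phi_{t}^{2k-1}$ and $j^{2k-1}\sigma$; this is a standard property of prolongations of projectable vector fields and is the single point needing care, but it is established in the Appendix. The alternative computational route via Cartan's formula $\pounds_{Y^{2k-1}}\Xi=d(Y^{2k-1}\mathbin{\lrcorner}\Xi)+Y^{2k-1}\mathbin{\lrcorner} d\Xi$ would force a decomposition of $Y^{2k-1}$ into horizontal and vertical parts together with a re-run of the contact-form and semi-basic bookkeeping from the proof of Theorem~\ref{Theorem 1}, which the flow argument neatly sidesteps.
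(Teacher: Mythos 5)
Your proposal is correct and follows essentially the same route as the paper: both express $\pounds_{Y^{2k-1}}\Xi$ as the $t$-derivative of the pull-back by the flow of the prolongation, use the intertwining identity $\mathrm{e}^{tY^{2k-1}}\circ j^{2k-1}\sigma=j^{2k-1}(\mathrm{e}^{tY\ast}\sigma)$ from the Appendix to rewrite this as a family of pull-backs along jet extensions of the transported sections, and then invoke Condition~2 of Theorem~\ref{Theorem 1} to kill each term identically in $t$. Your version is in fact slightly more careful than the paper's in tracking the base-point shift via $(\phi_{t}^{M})^{\ast}$, which the paper suppresses.
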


\begin{proof}
By definition,%
\begin{equation*}
\pounds _{Y^{\mathrm{2k-1}}}\Xi =\frac{{\small d}}{{\small dt}}(\mathrm{e}%
^{tY^{\mathrm{2k-1}}\ast }\Xi )_{\mid t=0}=\lim_{t\rightarrow 0}\left[ \frac{%
{\small 1}}{{\small t}}\left( \mathrm{e}^{tY^{\mathrm{2k-1}}\ast }\Xi -\Xi
\right) \right] .
\end{equation*}%
Hence, 
\begin{eqnarray*}
j^{\mathrm{2k-1}}\sigma ^{\ast }\left( \pounds _{Y^{\mathrm{2k-1}}}\Xi
\right) &=&j^{\mathrm{2k-1}}\sigma ^{\ast }\lim_{t\rightarrow 0}\left[ \frac{%
{\small 1}}{{\small t}}\left( \mathrm{e}^{tY^{\mathrm{2k-1}}\ast }\Xi -\Xi
\right) \right] \\
&=&\lim_{t\rightarrow 0}\left[ \frac{{\small 1}}{{\small t}}\left( j^{%
\mathrm{2k-1}}\sigma ^{\ast }\mathrm{e}^{tY^{\mathrm{2k-1}}\ast }\Xi -j^{%
\mathrm{2k-1}}\sigma ^{\ast }\Xi \right) \right] \\
&=&\lim_{t\rightarrow 0}\left\{ \frac{{\small 1}}{{\small t}}\left[ (\mathrm{%
e}^{tY^{\mathrm{2k-1}}}\circ j^{\mathrm{2k-1}}\sigma )^{\ast }\Xi -j^{%
\mathrm{2k-1}}\sigma ^{\ast }\Xi \right] \right\} \\
&=&\lim_{t\rightarrow 0}\left\{ \frac{{\small 1}}{{\small t}}\left[ (j^{%
\mathrm{2k-1}}(\mathrm{e}^{tY\ast }\sigma ))^{\ast }\Xi -j^{\mathrm{2k-1}%
}\sigma ^{\ast }\Xi \right] \right\} . \\
&=&\lim_{t\rightarrow 0}\left[ \frac{{\small 1}}{{\small t}}\left( j^{%
\mathrm{2k-1}}(\mathrm{e}^{tY\ast }\sigma )^{\ast }\Xi -j^{\mathrm{2k-1}%
}\sigma ^{\ast }\Xi \right) \right] =0
\end{eqnarray*}%
Condition 2 of Theorem \ref{Theorem 1} ensures that $j^{\mathrm{2k-1}}\sigma
^{\ast }\Xi $ and $(j^{\mathrm{2k-1}}(\mathrm{e}^{tY\ast }\sigma ))^{\ast
}\Xi =0$ for every $t$ in a neighbourhood of $0$. Therefore, $j^{\mathrm{2k-1%
}}\sigma ^{\ast }\left( \pounds _{Y^{\mathrm{2k-1}}}\Xi \right) =0$, which
completes the proof.
\end{proof}

\begin{proposition}
\label{Decomposition} For every $Y_{\sigma }\in T_{\sigma }S^{\infty }(\bar{K%
},N)$, a boundary form $\Xi $ leads to a decomposition 
\begin{equation}
\int_{K}j^{\mathrm{k}}\sigma ^{\ast }\left( Y^{\mathrm{k}}%
{\mbox{$ \rule {5pt} {.5pt}\rule {.5pt} {6pt} \, $}}%
\Phi \right) =\int_{K}\left[ j^{\mathrm{k}}\sigma ^{\ast }(\Phi
_{a})-P_{a,i}^{i}\right] Y_{\sigma }^{a}\mathrm{d}_{m}x+\int_{j^{\mathrm{2k-1%
}}\sigma (\partial K)}(Y_{\sigma }^{\mathrm{2k-1}}%
{\mbox{$ \rule {5pt} {.5pt}\rule {.5pt} {6pt} \, $}}%
\Xi ),  \label{4.8}
\end{equation}%
where $P_{a}^{i}=j^{2k-1}\sigma ^{\ast }p_{a}^{i}$, as in equation (\ref{11}%
).\smallskip 
\end{proposition}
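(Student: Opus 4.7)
The plan is to combine three ingredients: Condition~3 of Theorem~\ref{Theorem 1}, Lemma~\ref{Lemma 1} via Cartan's magic formula, and Stokes' theorem. Using that $\pi_{\mathrm{k}}^{\mathrm{2k-1}}\circ j^{\mathrm{2k-1}}\sigma = j^{\mathrm{k}}\sigma$ and that $Y^{\mathrm{2k-1}}$ is $\pi_{\mathrm{k}}^{\mathrm{2k-1}}$-related to $Y^{\mathrm{k}}$, I first lift the integrand on the left to $J^{\mathrm{2k-1}}(M,N)$:
\[
j^{\mathrm{k}}\sigma^{*}(Y^{\mathrm{k}}\lrcorner\Phi) = j^{\mathrm{2k-1}}\sigma^{*}\bigl(Y^{\mathrm{2k-1}}\lrcorner\pi_{\mathrm{k}}^{\mathrm{2k-1}\ast}\Phi\bigr).
\]

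In local coordinates I decompose the prolongation as $Y^{\mathrm{2k-1}} = Y^{a}\partial/\partial y^{a} + \widetilde{Y}$, where $\widetilde{Y}$ collects the $\partial/\partial z^{a}_{i_{1}\ldots i_{l}}$-components for $1\leq l\leq \mathrm{2k-1}$. The tail $\widetilde{Y}$ is tangent to fibres of the target map $\pi_{0}^{\mathrm{2k-1}}:J^{\mathrm{2k-1}}(M,N)\to N$, so Condition~3 of Theorem~\ref{Theorem 1} yields $j^{\mathrm{2k-1}}\sigma^{*}(\widetilde{Y}\lrcorner(\pi_{\mathrm{k}}^{\mathrm{2k-1}\ast}\Phi+\mathrm{d}\Xi)) = 0$. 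Combining with the previous display gives
\[
j^{\mathrm{k}}\sigma^{*}(Y^{\mathrm{k}}\lrcorner\Phi) = j^{\mathrm{2k-1}}\sigma^{*}\Bigl(Y^{a}\tfrac{\partial}{\partial y^{a}}\lrcorner(\pi_{\mathrm{k}}^{\mathrm{2k-1}\ast}\Phi+\mathrm{d}\Xi)\Bigr) - j^{\mathrm{2k-1}}\sigma^{*}(Y^{\mathrm{2k-1}}\lrcorner\mathrm{d}\Xi).
\]

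Cartan's identity $\pounds_{Y^{\mathrm{2k-1}}}\Xi = \mathrm{d}(Y^{\mathrm{2k-1}}\lrcorner\Xi) + Y^{\mathrm{2k-1}}\lrcorner\mathrm{d}\Xi$ combined with Lemma~\ref{Lemma 1} yields $j^{\mathrm{2k-1}}\sigma^{*}(Y^{\mathrm{2k-1}}\lrcorner\mathrm{d}\Xi) = -\mathrm{d}\bigl(j^{\mathrm{2k-1}}\sigma^{*}(Y^{\mathrm{2k-1}}\lrcorner\Xi)\bigr)$. Integrating over $K$ and applying Stokes' theorem produces exactly the boundary integral $\int_{j^{\mathrm{2k-1}}\sigma(\partial K)} Y_{\sigma}^{\mathrm{2k-1}}\lrcorner\Xi$.

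For the body integrand I evaluate $Y^{a}(\partial/\partial y^{a})\lrcorner(\pi_{\mathrm{k}}^{\mathrm{2k-1}\ast}\Phi+\mathrm{d}\Xi)$ using the local expressions (\ref{Phi1}) and (\ref{7}). Since $\partial/\partial y^{a}$ annihilates every $\mathrm{d}z^{a}_{i_{1}\ldots i_{l}}$ factor, the only surviving piece of $\Phi$ is $\Phi_{a}\mathrm{d}y^{a}\wedge\mathrm{d}_{m}x$, contributing $Y^{a}\,j^{\mathrm{k}}\sigma^{*}(\Phi_{a})\,\mathrm{d}_{m}x$; in $\mathrm{d}\Xi$ the only non-contact summand carrying a $\mathrm{d}y^{a}$ is $\mathrm{d}p_{a}^{i}\wedge(\mathrm{d}y^{a}-z_{j}^{a}\mathrm{d}x^{j})\wedge(\partial/\partial x^{i}\lrcorner\mathrm{d}_{m}x)$, whose contraction with $Y^{a}\partial/\partial y^{a}$, after pullback, equals $-Y^{a}P_{a,i}^{i}\mathrm{d}_{m}x$ because $j^{\mathrm{2k-1}}\sigma^{*}(\mathrm{d}p_{a}^{i}\wedge(\partial/\partial x^{i}\lrcorner\mathrm{d}_{m}x)) = P_{a,i}^{i}\mathrm{d}_{m}x$. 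Summing the two contributions produces $[j^{\mathrm{k}}\sigma^{*}(\Phi_{a})-P_{a,i}^{i}]Y^{a}\mathrm{d}_{m}x$, and integration delivers the stated identity. The main obstacle is precisely this final bookkeeping: one must verify that every remaining summand in $\mathrm{d}\Xi$ carries either a $\mathrm{d}z^{a}_{i_{1}\ldots i_{l}}$-factor orthogonal to $\partial/\partial y^{a}$ or a contact form annihilated by $j^{\mathrm{2k-1}}\sigma^{*}$, so that $P_{a,i}^{i}$ is the only divergence contribution that appears.
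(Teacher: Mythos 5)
Your proof is correct and follows essentially the same route as the paper: lift the integrand to $J^{\mathrm{2k-1}}(M,N)$, add and subtract $\mathrm{d}\Xi$, invoke Condition 3 of Theorem \ref{Theorem 1} after splitting $Y^{\mathrm{2k-1}}$ into $Y^{a}\partial/\partial y^{a}$ plus a remainder tangent to the fibres of the target map, and handle the $\mathrm{d}\Xi$ term with Lemma \ref{Lemma 1}, Cartan's formula and Stokes' theorem. The only difference is that you carry out explicitly the coordinate bookkeeping identifying the body integrand as $[j^{\mathrm{k}}\sigma^{\ast}(\Phi_{a})-P_{a,i}^{i}]Y^{a}\mathrm{d}_{m}x$, which the paper compresses into the phrase ``taking into account equations (\ref{7}) and (\ref{11})''.
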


\begin{proof}
Let $Y$ be an extension of $Y_{\sigma }$ to a vertical vector field on $N$.
Clearly, 
\begin{eqnarray*}
\int_{K}j^{\mathrm{k}}\sigma ^{\ast }\left( Y^{k}%
{\mbox{$ \rule {5pt} {.5pt}\rule {.5pt} {6pt} \, $}}%
\Phi \right) &=&\int_{K}j^{\mathrm{2k-1}}\sigma ^{\ast }\left( \pi _{\mathrm{%
k}}^{\mathrm{2k-1}\ast }\left( Y^{\mathrm{k}}%
{\mbox{$ \rule {5pt} {.5pt}\rule {.5pt} {6pt} \, $}}%
\Phi \right) \right) \\
&=&\int_{K}j^{\mathrm{2k-1}}\sigma ^{\ast }\left( Y^{\mathrm{2k-1}}%
{\mbox{$ \rule {5pt} {.5pt}\rule {.5pt} {6pt} \, $}}%
\pi _{\mathrm{k}}^{\mathrm{2k-1}\ast }\Phi \right) \\
&=&\int_{K}j^{\mathrm{2k-1}}\sigma ^{\ast }\left( Y^{\mathrm{2k-1}}%
{\mbox{$ \rule {5pt} {.5pt}\rule {.5pt} {6pt} \, $}}%
\left( \pi _{\mathrm{k}}^{\mathrm{2k-1}\ast }\Phi +\mathrm{d}\Xi -d\Xi
\right) \right) .
\end{eqnarray*}%
Thus,%
\begin{eqnarray}
\int_{K}j^{\mathrm{k}}\sigma ^{\ast }\left( Y^{\mathrm{k}}%
{\mbox{$ \rule {5pt} {.5pt}\rule {.5pt} {6pt} \, $}}%
\Phi \right) &=&\int_{K}j^{\mathrm{2k-1}}\sigma ^{\ast }\left( Y^{\mathrm{%
2k-1}}%
{\mbox{$ \rule {5pt} {.5pt}\rule {.5pt} {6pt} \, $}}%
\left( \pi _{\mathrm{k}}^{\mathrm{2k-1}\ast }\Phi +\mathrm{d}\Xi \right)
\right) +  \label{4.9} \\
&&-\int_{K}j^{\mathrm{2k-1}}\sigma ^{\ast }\left( Y^{\mathrm{2k-1}}%
{\mbox{$ \rule {5pt} {.5pt}\rule {.5pt} {6pt} \, $}}%
d\Xi \right) .  \notag
\end{eqnarray}%
By Condition 3 in Theorem \ref{Theorem 1},%
\begin{equation*}
j^{\mathrm{2k-1}}\sigma ^{\ast }\left( X%
{\mbox{$ \rule {5pt} {.5pt}\rule {.5pt} {6pt} \, $}}%
\left( \pi _{\mathrm{k}}^{\mathrm{2k-1}\ast }\Phi +\mathrm{d}\Xi \right)
\right) =0.\smallskip
\end{equation*}%
for every vector field $X$ tangent to fibres of the target map $\pi _{0}^{%
\mathrm{2k-1}}:J^{\mathrm{2k-1}}(M,N)\rightarrow N$. On the other hand, the
prolongation $Y^{\mathrm{2k-1}}$ of a vertical vector field $Y=Y^{a}\frac{%
\partial }{\partial y^{\alpha }}$ on $N$ is $\pi _{0}^{\mathrm{2k-1}}$%
-related to $Y.$ Therefore, in local coordinates, treating $Y^{a}\frac{%
\partial }{\partial y^{\alpha }}$ as a vector field on $J^{\mathrm{2k-1}%
}(M,N)$, the difference $Y^{\mathrm{2k-1}}-Y^{a}\frac{\partial }{\partial
y^{\alpha }}$ is tangent to fibres of the target map $\pi _{0}^{\mathrm{2k-1}%
}$, so that 
\begin{eqnarray*}
&&j^{\mathrm{2k-1}}\sigma ^{\ast }\left( Y^{\mathrm{2k-1}}%
{\mbox{$ \rule {5pt} {.5pt}\rule {.5pt} {6pt} \, $}}%
\mathrm{d}\left( \pi _{\mathrm{k}}^{\mathrm{2k-1}\ast }\Phi +d\Xi \right)
\right) \\
&=&j^{\mathrm{2k-1}}\sigma ^{\ast }\left( [Y^{\mathrm{2k-1}}-Y^{a}\frac{%
\partial }{\partial y^{\alpha }}+Y^{a}\frac{\partial }{\partial y^{\alpha }}]%
{\mbox{$ \rule {5pt} {.5pt}\rule {.5pt} {6pt} \, $}}%
\mathrm{d}\left( \pi _{\mathrm{k}}^{\mathrm{2k-1}\ast }\Phi +d\Xi \right)
\right) \\
&=&j^{\mathrm{2k-1}}\sigma ^{\ast }\left( Y^{a}\frac{\partial }{\partial
y^{\alpha }}%
{\mbox{$ \rule {5pt} {.5pt}\rule {.5pt} {6pt} \, $}}%
\mathrm{d}\left( \pi _{\mathrm{k}}^{\mathrm{2k-1}\ast }\Phi +\mathrm{d}\Xi
\right) \right) .
\end{eqnarray*}%
Taking into account equations (\ref{7}) and (\ref{11}), we get 
\begin{equation}
j^{\mathrm{2k-1}}\sigma ^{\ast }\left( Y^{\mathrm{2k-1}}%
{\mbox{$ \rule {5pt} {.5pt}\rule {.5pt} {6pt} \, $}}%
\left( \pi _{\mathrm{k}}^{\mathrm{2k-1}\ast }\Phi +\mathrm{d}\Xi \right)
\right) =(j^{\mathrm{k}}\sigma ^{\ast }\Phi _{a}-P_{a,i}^{i})Y_{\sigma }^{a}%
\mathrm{d}_{m}x.  \label{3/9}
\end{equation}

Lemma \ref{Lemma 1} ensures that $\pounds _{Y^{\mathrm{2k-1}}}\Xi =0$.
Hence, 
\begin{equation*}
Y^{\mathrm{2k-1}}%
{\mbox{$ \rule {5pt} {.5pt}\rule {.5pt} {6pt} \, $}}%
d\Xi =-\mathrm{d}(Y^{\mathrm{2k-1}}%
{\mbox{$ \rule {5pt} {.5pt}\rule {.5pt} {6pt} \, $}}%
\Xi ).
\end{equation*}%
Therefore, the second line in equation (\ref{4.9}) can be rewritten in the
form 
\begin{eqnarray*}
-\int_{K}j^{\mathrm{2k-1}}\sigma ^{\ast }\left( Y^{\mathrm{2k-1}}%
{\mbox{$ \rule {5pt} {.5pt}\rule {.5pt} {6pt} \, $}}%
\mathrm{d}\Xi \right) &=&-\int_{K}j^{\mathrm{2k-1}}\sigma ^{\ast }\left( 
\pounds _{Y^{\mathrm{2k-1}}}\Xi -\mathrm{d}(Y^{\mathrm{2k-1}}%
{\mbox{$ \rule {5pt} {.5pt}\rule {.5pt} {6pt} \, $}}%
\Xi )\right) \\
&=&-\int_{K}j^{\mathrm{2k-1}}\sigma ^{\ast }\pounds _{Y^{\mathrm{2k-1}}}\Xi
+\int_{K}j^{\mathrm{2k-1}}\sigma ^{\ast }\mathrm{d}(Y^{\mathrm{2k-1}}%
{\mbox{$ \rule {5pt} {.5pt}\rule {.5pt} {6pt} \, $}}%
\Xi ) \\
&=&\int_{K}\mathrm{d}\left( j^{\mathrm{2k-1}}\sigma ^{\ast }(Y^{\mathrm{2k-1}%
}%
{\mbox{$ \rule {5pt} {.5pt}\rule {.5pt} {6pt} \, $}}%
\Xi )\right) \\
&=&\int_{\partial K}j^{\mathrm{2k-1}}\sigma ^{\ast }(Y^{\mathrm{2k-1}}%
{\mbox{$ \rule {5pt} {.5pt}\rule {.5pt} {6pt} \, $}}%
\Xi ) \\
&=&\int_{j^{2k-1}\sigma (\partial K)}Y_{\sigma }^{\mathrm{2k-1}}%
{\mbox{$ \rule {5pt} {.5pt}\rule {.5pt} {6pt} \, $}}%
\Xi .
\end{eqnarray*}%
This completes the proof.\smallskip
\end{proof}

Next, we want to show that the decomposition (\ref{4.8}) is independent of
the choice of boundary form $\Xi $. Let $X$ be a vector field on $J^{\mathrm{%
2k-1}}(M,N)$ tangent to fibres of the source map $\pi ^{\mathrm{2k-1}}:J^{%
\mathrm{2k-1}}(M,N)\rightarrow M$. For any boundary form $\Xi $ of $\Phi $,
equations (\ref{7}) and (\ref{11}) yield%
\begin{eqnarray*}
&&j^{\mathrm{2k-1}}\sigma ^{\ast }\left( X%
{\mbox{$ \rule {5pt} {.5pt}\rule {.5pt} {6pt} \, $}}%
d\Xi \right) \\
&=&j^{\mathrm{2k-1}}\sigma ^{\ast }\left[ \left( -X^{a}\mathrm{d}%
p_{a}^{i_{1}}-X_{i_{2}}^{a}\mathrm{d}%
p_{a}^{i_{1}i_{2}}-...-X_{i_{2}...i_{k}}^{a}\mathrm{d}%
p_{a}^{i_{1}i_{2}...i_{k}}\right) \wedge \left( \frac{{\small \partial }}{%
{\small \partial x}^{i_{1}}}%
{\mbox{$ \rule {5pt} {.5pt}\rule {.5pt} {6pt} \, $}}%
\mathrm{d}_{m}x\right) \right] \\
&&-j^{\mathrm{2k-1}}\sigma ^{\ast }\left(
p_{a}^{i_{1}}X_{i_{1}}^{a}+p_{a}^{i_{1}i_{2}}X_{i_{1}i_{2}}^{a}+p_{a}^{i_{1}i_{2}...i_{k}}X_{i_{1}i_{2}...i_{k}}^{a}\right) 
\mathrm{d}_{m}x \\
&=&-\left(
X^{a}P_{a,i_{1}}^{i_{1}}+X_{i_{2}}^{a}P_{a,i_{1}}^{i_{1}i_{2}}+...+X_{i_{2}...i_{k}}^{a}P_{a,i_{1}}^{i_{1}i_{2}...i_{k}}\right) 
\mathrm{d}_{m}x \\
&&-\left(
P_{a}^{i_{1}}X_{i_{1}}^{a}+P_{a}^{i_{1}i_{2}}X_{i_{1}i_{2}}^{a}+...+P_{a}^{i_{1}i_{2}...i_{k}}X_{i_{1}i_{2}...i_{k}}^{a}\right) 
\mathrm{d}_{m}x.
\end{eqnarray*}%
Hence 
\begin{eqnarray}
j^{\mathrm{2k-1}}\sigma ^{\ast }\left( X%
{\mbox{$ \rule {5pt} {.5pt}\rule {.5pt} {6pt} \, $}}%
d\Xi \right) &=&-\left[ X^{a}P_{a,i_{1}}^{i_{1}}+X_{i_{2}}^{a}\left(
P_{a,i_{1}}^{i_{1}i_{2}}+P_{a}^{i_{2}}\right) +...\right] \mathrm{d}_{m}x
\label{3/10} \\
&&-\left[ X_{i_{2}...i_{k}}^{a}\left(
P_{a,i_{1}}^{i_{1}i_{2}...i_{k}}+P_{a}^{i_{2}...i_{k}}\right)
+X_{i_{1}i_{2}...i_{k}}^{a}P_{a}^{i_{1}i_{2}...i_{k}}\right] \mathrm{d}_{m}x.
\notag
\end{eqnarray}%
\smallskip Let $\Xi ^{\prime }$ be another boundary form of $\Phi $ such
that 
\begin{eqnarray}
j^{\mathrm{2k-1}}\sigma ^{\ast }\left( X%
{\mbox{$ \rule {5pt} {.5pt}\rule {.5pt} {6pt} \, $}}%
d\Xi ^{\prime }\right) &=&-\left[ X^{a}P_{a,i_{1}}^{\prime
i_{1}}+X_{i_{2}}^{a}\left( P_{a,i_{1}}^{\prime i_{1}i_{2}}+P_{a}^{\prime
i_{2}}\right) +...\right] \mathrm{d}_{m}x  \label{3/11} \\
&&-\left[ X_{i_{2}...i_{k}}^{a}\left( P_{a,i_{1}}^{\prime
i_{1}i_{2}...i_{k}}+P_{a}^{\prime i_{2}...i_{k}}\right)
+X_{i_{1}i_{2}...i_{k}}^{a}P_{a}^{\prime i_{1}i_{2}...i_{k}}\right] \mathrm{d%
}_{m}x,  \notag
\end{eqnarray}%
where the coefficients $P_{a}^{\prime i_{1}i_{2}},...,P_{a}^{\prime
i_{1}i_{2}...i_{k}}$ are symmetric in the upper indices. For the sake of
simplicity, we introduce the notation 
\begin{equation}
Q_{a}^{i_{1}i_{2}...i_{l}}=P_{a}^{i_{1}i_{2}...i_{l}}-P_{a}^{\prime
i_{1}i_{2}...i_{l}}  \label{3/12}
\end{equation}%
for $l=1,...,k.$ Then 
\begin{eqnarray}
j^{\mathrm{2k-1}}\sigma ^{\ast }\left( X%
{\mbox{$ \rule {5pt} {.5pt}\rule {.5pt} {6pt} \, $}}%
d(\Xi -\Xi ^{\prime })\right) &=&-\left[
X^{a}Q_{a,i_{1}}^{i_{1}}+X_{i_{2}}^{a}\left(
Q_{a,i_{1}}^{i_{1}i_{2}}+Q_{a}^{i_{2}}\right) +...\right] \mathrm{d}_{m}x
\label{3/13} \\
&&-\left[ X_{i_{2}...i_{k}}^{a}\left(
Q_{a,i_{1}}^{i_{1}i_{2}...i_{k}}+Q_{a}^{i_{2}...i_{k}}\right)
+X_{i_{1}i_{2}...i_{k}}^{a}Q_{a}^{i_{1}i_{2}...i_{k}}\right] \mathrm{d}_{m}x
\notag
\end{eqnarray}%
Since $\Xi $ and $\Xi ^{\prime }$ are boundary forms of the same form, and $%
X $ is an arbitrary vector field tangent to fibres of the source map,
Condition 3 of Theorem \ref{Theorem 1} yields 
\begin{eqnarray}
Q_{a}^{(i_{1}i_{2}...i_{k})} &=&0,  \label{3/14} \\
Q_{a}^{(i_{2}...i_{k})}+Q_{a,i_{1}}^{i_{1}i_{2}...i_{k}} &=&0,  \notag \\
Q_{a}^{(i_{1}...i_{l})}+Q_{a,i_{1}}^{i_{1}i_{2}...i_{l}} &=&0,~~\text{for~~}%
l=2,...,k,  \notag \\
Q_{a}^{i}+Q_{a,i_{1}}^{i_{1}i_{2}} &=&0,  \notag
\end{eqnarray}%
Note that, by construction, $Q_{a}^{i_{1}i_{2}...i_{l}}$ is symmetric in the
indices $i_{2},...,i_{l}$. Hence, 
\begin{equation}
j^{\mathrm{2k-1}}\sigma ^{\ast }\left( X%
{\mbox{$ \rule {5pt} {.5pt}\rule {.5pt} {6pt} \, $}}%
\mathrm{d}(\Xi -\Xi ^{\prime })\right) =-X^{a}Q_{a,i}^{i}\mathrm{d}_{m}x.
\label{3.15}
\end{equation}

\begin{lemma}
\label{Lemma 2} For boundary forms $\Xi $ and $\Xi ^{\prime }$, given by
equations (\ref{3/10}) and (\ref{3/11}), respectively, 
\begin{equation}
Q_{a,i}^{i}=(P_{a}^{i}-P_{a}^{\prime i})_{,i}=0.  \label{3.16}
\end{equation}%
\smallskip
\end{lemma}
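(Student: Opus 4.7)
The plan is to cascade through the system (\ref{3/14}) from bottom to top, expressing $Q_{a,i}^{i}$ as a fully-contracted multiple divergence of the symmetrised rank-$k$ tensor $Q_a^{(i_1\ldots i_k)}$, which vanishes by the topmost relation.

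The one auxiliary fact needed is a \emph{symmetrisation principle}: since partial derivatives commute, for any tensor $T^{j_1\ldots j_r}$
\begin{equation*}
T^{j_1\ldots j_r}_{,j_1\ldots j_r}=T^{(j_1\ldots j_r)}_{,j_1\ldots j_r},
\end{equation*}
so only the fully symmetric part survives a complete contraction with $r$ commuting partials. This is a one-line dummy-index relabelling.

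Starting from the bottom equation of (\ref{3/14}), $Q_a^{i}=-Q_{a,j}^{ji}$, I take $\partial/\partial x^{i}$ and apply the symmetrisation principle to get $Q_{a,i}^{i}=-Q_{a,ji}^{(ji)}$. Next I substitute the rank-$2$ relation of (\ref{3/14}), $Q_a^{(i_1 i_2)}=-Q_{a,m}^{m i_1 i_2}$, to rewrite $Q_{a,i}^{i}$ as a triple divergence of a rank-$3$ tensor; applying the symmetrisation principle again and then invoking the rank-$3$ relation of (\ref{3/14}) yields a quadruple divergence of a rank-$4$ tensor with an opposite sign. Continuing through the remaining equations of (\ref{3/14}), after a total of $k-1$ iterations the expression becomes $\pm\,\partial_{i_1}\cdots\partial_{i_k}Q_a^{(i_1\ldots i_k)}$, which vanishes by the topmost relation $Q_a^{(i_1\ldots i_k)}=0$.

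The main obstacle will be purely bookkeeping: verifying at each step that the newly introduced divergence index correctly merges into the next symmetrisation, and tracking the alternating signs through the telescoping. The argument amounts to a finite induction on the rank of the underlying tensor, with the symmetrisation principle bridging consecutive levels of (\ref{3/14}).
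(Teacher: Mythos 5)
Your argument is correct, and it reaches the conclusion by a cleaner route than the paper's. Both proofs rest on the same two ingredients: the cascade of relations (\ref{3/14}) expressing each symmetrised tensor as (minus) the divergence of the next one up, and the observation that a complete contraction with commuting partial derivatives sees only the fully symmetric part. The difference is organisational. The paper first decomposes the general difference $Q$ into pieces $_{r}Q$, $r=0,\dots,k-1$, each ``generated'' at a single differential level (equations (\ref{3/17})--(\ref{3/23})), proves $_{r}Q_{a,i}^{i}=0$ for each piece separately by running the cascade up to level $k-r$, and then sums via (\ref{3/24})--(\ref{3/25}); the existence of that decomposition is asserted rather than established. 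You instead telescope the full $Q$ directly: contract, symmetrise, substitute the next relation of (\ref{3/14}), and repeat until the top relation $Q_{a}^{(i_{1}\dots i_{k})}=0$ kills everything, picking up an overall sign $(-1)^{k-1}$ that is irrelevant to the conclusion. This dispenses with the auxiliary decomposition entirely and is, if anything, the more self-contained version of the same computation; what the paper's layer-by-layer presentation buys is an explicit description of the residual freedom in the boundary form at each differential order, which has some independent interest but is not needed for (\ref{3.16}). The only care required in your version is the point you already flag: at each step the newly created divergence index must be absorbed into the next symmetrisation before the corresponding relation of (\ref{3/14}) is substituted, and the relations must be read with the symmetrisation on the lower-rank side (as in (\ref{14})), since the unsymmetrised $Q_{a}^{i_{2}\dots i_{l}}$ is not itself determined by the divergence of the rank-$l$ tensor.
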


\begin{proof}
We begin with the case when the difference $Q_{a}^{i}=(P_{a}^{i}-P_{a}^{%
\prime i})$ is generated at the highest differential level. In other words,
we consider $%
_{0}Q_{a}^{i_{1}i_{2}...i_{k}}=P_{a}^{i_{1}i_{2}...i_{l}}-P_{a}^{\prime
i_{1}i_{2}...i_{l}}\neq 0$ such that 
\begin{equation}
_{0}Q_{a}^{(i_{1}i_{2}...i_{k})}=0,  \label{3/17}
\end{equation}%
and, the remaining differences are symmetric and satisfy the equations%
\begin{eqnarray}
_{0}Q_{a}^{i_{2}...i_{l}}+~_{0}Q_{a,i_{1}}^{i_{1}i_{2}...i_{l}} &=&0,~~\text{%
for~~}l=3,...,k-1,  \label{3/18} \\
_{0}Q_{a}^{i_{2}}+~_{10}Q_{a,i_{1}}^{i_{1}i_{2}} &=&0.  \notag
\end{eqnarray}%
Therefore, 
\begin{equation}
_{0}Q_{a}^{i_{k}}=(-1)^{k-1}~_{0}Q_{a,i_{1}i_{2}...i_{k-1}}^{i_{1}i_{2}...i_{k-1}i_{k}},
\label{3.19}
\end{equation}%
and 
\begin{eqnarray}
_{0}Q_{a,i_{k}}^{i_{k}}
&=&(-1)^{k-1}~_{0}Q_{a,i_{1}i_{2}...i_{k-1}i_{k}}^{i_{1}i_{2}...i_{k-1}i_{k}}=(-1)^{k-1}~_{0}Q_{a,(i_{1}i_{2}...i_{k-1}i_{k})}^{i_{1}i_{2}...i_{k-1}i_{k}}
\label{3/20} \\
&=&(-1)^{k-1}~_{0}Q_{a,(i_{1}i_{2}...i_{k-1}i_{k})}^{(i_{1}i_{2}...i_{k-1}i_{k})}=(-1)^{k-1}~_{0}Q_{a,i_{1}i_{2}...i_{k-1}i_{k}}^{(i_{1}i_{2}...i_{k-1}i_{k})}=0
\notag
\end{eqnarray}%
because partial derivatives commute.\smallskip

In the next step, we consider the situation when, $\Xi $ and $\Xi ^{\prime }$
agree on the highest differential, that is we assume that $%
_{1}Q_{a}^{i_{1}...i_{k}}=0$. Moreover, we assume that 
\begin{eqnarray}
_{1}Q_{a}^{i_{1}...i_{k-1}} &\neq &0,  \notag \\
_{1}Q_{a}^{(i_{1}...i_{k-1})} &=&0, \\
_{1}Q_{a}^{i_{2}...i_{l}}+~_{1}Q_{a,i_{1}}^{i_{1}i_{2}...i_{l}} &=&0,~~\text{%
for~~}l=3,...,k-1,  \notag \\
_{1}Q_{a}^{i_{2}}+~_{1}Q_{a,i_{1}}^{i_{1}i_{2}} &=&0.  \notag
\end{eqnarray}%
The same arguments as above, lead to 
\begin{equation*}
_{1}Q_{a}^{i_{k}}=(-1)^{k-2}~_{1}Q_{a,i_{1}...i_{k-1}}^{i_{1}...i_{k-1}i_{k}}
\end{equation*}%
so that 
\begin{equation}
_{1}Q_{a,i_{k}}^{i_{k}}=(-1)^{k-2}~_{1}Q_{a,i_{2}...i_{k-1}i_{k}}^{i_{2}...i_{k-1}i_{k}}=0
\label{3/21}
\end{equation}%
because $_{1}Q_{a}^{(i_{2}...i_{k})}=0$. Continuing this procedure, for
every $r=2,,...,k-1$, we consider $_{r}Q_{a}^{i_{l}...i_{k}}$ such that,%
\begin{eqnarray}
_{r}Q_{a}^{i_{1}...i_{k-l}} &=&0\text{, \ for \ }l<r,  \label{3/22} \\
_{r}Q_{a}^{(i_{1}...i_{k-r})} &=&0,  \notag \\
_{r}Q_{a}^{i_{2}...i_{l}}+~_{r}Q_{a,i_{l-1}}^{i_{1}i_{2}...i_{l}} &=&0,~~%
\text{for~~}l=3,...,k-r,  \notag \\
_{r}Q_{a}^{i_{k}}+~_{r}Q_{a,i_{k-1}}^{i_{k-1}i_{k}} &=&0.  \notag
\end{eqnarray}%
As before, for this choice of $_{r}Q_{a}^{i_{1}...i_{l}}$, we have 
\begin{equation}
_{r}Q_{a,i_{k}}^{i_{k}}=(-1)^{k-r}~_{r}Q_{a,i_{r}...i_{k-1}i_{k}}^{i_{2}...i_{k-1}i_{k}}=0.
\label{3/23}
\end{equation}%
\smallskip

The general $Q_{a}^{i_{1}...i_{k}}$ can be expressed as the sum of terms $%
_{r}Q_{a}^{i_{l}...i_{k}}$, for $r=0,...,k-1.$ That is,%
\begin{equation}
Q_{a}^{i_{l}...i_{k}}=~_{0}Q_{a}^{i_{l}...i_{k}}+~_{1}Q_{a}^{i_{l}...i_{k}}+...+~_{k-1}Q_{a}^{i_{l}...i_{k}}.
\label{3/24}
\end{equation}%
The defining equations (\ref{3/22}) for the terms $_{r}Q_{a}^{i_{l}...i_{k}}$
ensure that the decomposition (\ref{3/24}) satisfies equations (\ref{3.16}).
Taking into account equations (\ref{3/20}), (\ref{3/21}) and (\ref{3/23}) we
get%
\begin{equation}
Q_{a,i}^{i}=(~_{0}Q_{a}^{i}+~_{1}Q_{a}^{i}+...+~_{k-1}Q_{a}^{i})_{,i}=~_{0}Q_{a,i}^{i}+~_{1}Q_{a,i}^{i}+...+~_{k-1}Q_{a,i}^{i}=0.
\label{3/25}
\end{equation}

We have shown that $Q_{a,i}^{i}=P_{a,i}^{i}-P_{a,i}^{\prime i}=0$, under the
assumption that $\Xi ^{\prime }$ is the obvious choice of boundary form with
fully symmetric coefficients and no additional assumptions on $\Xi $. Hence,
equation (\ref{3.16}) holds for any pair of boundary forms of the same form $%
\Phi $. We have shown that $P_{a,i}^{\prime i}=P_{a,i}^{i}$ for any other
boundary form $\Xi ^{\prime }$. If $\Xi ^{\prime \prime }$ is still another
boundary form of $\Phi $, then $P_{a,i}^{\prime \prime i}=P_{a,i}^{i}$.
\smallskip
\end{proof}

This implies the following result.

\begin{corollary}
\label{Corollary 1}

\begin{enumerate}
\item If $\Xi $ and $\Xi ^{\prime }$ are boundary forms of the same form $%
\Phi $, then 
\begin{equation}
j^{\mathrm{2k-1}}\sigma ^{\ast }\left( X%
{\mbox{$ \rule {5pt} {.5pt}\rule {.5pt} {6pt} \, $}}%
\mathrm{d}\left( \Xi -\Xi ^{\prime }\right) \right) =0  \label{3/26}
\end{equation}%
for every vector field $X$ tangent to fibres of the source map\newline
$\pi ^{\mathrm{2k-1}}:J^{\mathrm{2k-1}}(M,N)\rightarrow M.$

\item The decomposition (\ref{4.8}) is independent of the choice of boundary
form $\Xi $ for $\Phi $ such that $j^{\mathrm{2k-1}}\sigma (\bar{K})$ is in
the domain of definition of $\Xi $.\smallskip
\end{enumerate}
\end{corollary}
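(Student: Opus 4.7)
My plan is to deduce both assertions of the Corollary from material already in place, namely equation~(\ref{3.15}) together with Lemma~\ref{Lemma 2}, with no new coordinate computation needed. For assertion~1 the result is essentially immediate: equation~(\ref{3.15}), derived just before Lemma~\ref{Lemma 2} by subtracting the coordinate expressions (\ref{3/10}) and (\ref{3/11}) for a source-vertical $X$, states
\[
j^{\mathrm{2k-1}}\sigma^{\ast}\!\left(X\lrcorner\,\mathrm{d}(\Xi-\Xi')\right)=-X^{a}Q_{a,i}^{i}\,\mathrm{d}_{m}x,
\]
and Lemma~\ref{Lemma 2} supplies $Q_{a,i}^{i}=P_{a,i}^{i}-P_{a,i}^{\prime i}=0$. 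Combining the two gives (\ref{3/26}), so the proof of assertion~1 will consist merely of invoking these two results in sequence.

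For assertion~2 I plan the following short argument. Apply the decomposition (\ref{4.8}) to both $\Xi$ and $\Xi'$ and subtract. The left-hand side depends only on $\Phi$ and $Y_{\sigma}$, so it cancels identically. On the right, the volume integrand depends on $\Xi$ only through the coefficient $P_{a,i}^{i}$, which by Lemma~\ref{Lemma 2} is unchanged when $\Xi$ is replaced by $\Xi'$; hence the volume terms cancel as well. What remains forces the two boundary integrals of $Y_{\sigma}^{\mathrm{2k-1}}\lrcorner\,\Xi$ and $Y_{\sigma}^{\mathrm{2k-1}}\lrcorner\,\Xi'$ to coincide, which is the required independence.

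A more intrinsic derivation of the same boundary-term equality, worth mentioning for transparency, would apply Stokes' theorem to $j^{\mathrm{2k-1}}\sigma^{\ast}(Y^{\mathrm{2k-1}}\lrcorner(\Xi-\Xi'))$ on $K$, rewrite the interior integrand via Cartan's magic formula as
\[
j^{\mathrm{2k-1}}\sigma^{\ast}\pounds_{Y^{\mathrm{2k-1}}}(\Xi-\Xi')-j^{\mathrm{2k-1}}\sigma^{\ast}\!\left(Y^{\mathrm{2k-1}}\lrcorner\,\mathrm{d}(\Xi-\Xi')\right),
\]
and kill the two pieces using Lemma~\ref{Lemma 1} and assertion~1 respectively. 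The only point that requires a brief check in this alternative is that the prolongation $Y^{\mathrm{2k-1}}$ of a $\pi$-vertical $Y$ is tangent to the fibres of $\pi^{\mathrm{2k-1}}$, so that assertion~1 may legitimately be applied to it; this follows at once from the naturality of prolongation and is the only mildly delicate step in the whole argument. If one is content with the direct route via Lemma~\ref{Lemma 2} taken in the previous paragraph, even this small obstacle disappears entirely.
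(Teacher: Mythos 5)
Your proposal is correct and follows essentially the same route as the paper: assertion~1 is obtained by combining equation~(\ref{3.15}) with Lemma~\ref{Lemma 2}, and assertion~2 by noting that the volume integrand in the decomposition~(\ref{4.8}) depends on the boundary form only through $P_{a,i}^{i}$, which Lemma~\ref{Lemma 2} shows is unchanged, so the boundary integrals must agree. The alternative Stokes/Cartan derivation you sketch is a reasonable cross-check but is not needed.
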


\begin{proof}
Equation (\ref{3/26}) is the consequence of equations (\ref{3.15}) and (\ref%
{3.16})$.$

Equations (\ref{4.8}), (\ref{3.15}) and (\ref{3.16}) yield%
\begin{eqnarray*}
\int_{K}\left[ j^{\mathrm{k}}\sigma ^{\ast }(\Phi _{a})-P_{a,i}^{i}\right]
Y_{\sigma }^{a}\mathrm{d}_{m}x &=&\int_{K}\left[ j^{\mathrm{k}}\sigma ^{\ast
}(\Phi _{a})-P_{a,i}^{\prime i}-(P_{a,i}^{i}-P_{a,i}^{\prime i})\right]
Y_{\sigma }^{a}\mathrm{d}_{m}x \\
&=&\int_{K}\left[ j^{\mathrm{k}}\sigma ^{\ast }(\Phi _{a})-P_{a,i}^{\prime i}%
\right] Y_{\sigma }^{a}\mathrm{d}_{m}x.
\end{eqnarray*}%
because $P_{a,i}^{i}-P_{a,i}^{\prime i}=0.$ Therefore, decompositions (\ref%
{4.8}) for the boundary forms $\Xi $ and $\Xi ^{\prime }$ yield%
\begin{eqnarray*}
\int_{j^{\mathrm{2k-1}}\sigma (\partial K)}(Y_{\sigma }^{\mathrm{2k-1}}%
{\mbox{$ \rule {5pt} {.5pt}\rule {.5pt} {6pt} \, $}}%
\Xi ) &=&\int_{K}j^{\mathrm{k}}\sigma ^{\ast }\left( Y^{\mathrm{k}}%
{\mbox{$ \rule {5pt} {.5pt}\rule {.5pt} {6pt} \, $}}%
\Phi \right) -\int_{K}\left[ j^{\mathrm{k}}\sigma ^{\ast }(\Phi
_{a})-P_{a,i}^{i}\right] Y_{\sigma }^{a}\mathrm{d}_{m}x \\
&=&\int_{K}j^{\mathrm{k}}\sigma ^{\ast }\left( Y^{\mathrm{k}}%
{\mbox{$ \rule {5pt} {.5pt}\rule {.5pt} {6pt} \, $}}%
\Phi \right) -\int_{K}\left[ j^{\mathrm{k}}\sigma ^{\ast }(\Phi
_{a})-P_{a,i}^{\prime i}\right] Y_{\sigma }^{a}\mathrm{d}_{m}x \\
&=&\int_{j^{\mathrm{2k-1}}\sigma (\partial K)}(Y_{\sigma }^{\mathrm{2k-1}}%
{\mbox{$ \rule {5pt} {.5pt}\rule {.5pt} {6pt} \, $}}%
\Xi ^{\prime }).
\end{eqnarray*}%
This shows that the decomposition (\ref{4.8}) is independent of the choice
of $\Xi .$\smallskip
\end{proof}

Since boundary forms are constructed in terms of adapted coordinate systems,
non-uniquennes of the De Donder construction implies only local existence of
the result. We see in Corollary \ref{Corollary 1} that decomposition (\ref%
{4.8}) does not depend on the choice of boundary form with the same domain
of definition. If boundary forms are globally defined, then decomposition (%
\ref{4.8}) is unique and it holds for every section of $\pi $ and each
relatively compact open submanifold $K$ of $M$ with piece-wise smooth
boundary $\partial K$. The existence of global boundary forms is a
topological condition on the fibration $\pi :N\rightarrow M$. It is
satisfied if the fibration is trivial and $M$ and the stypical fibre of $\pi 
$ are diffeomorphic to open subsets of $\mathbb{R}^{m}$ and $\mathbb{R}^{n}$%
, respectively. In particular, it is satisfied in many problems in continuum
mechanics. \smallskip

\section{Application to variational problems}

\subsection{Critical points of action functionals}

In this section, we consider the case when $\Phi =\mathrm{d}\Lambda $, where 
$\Lambda $ is a semi-basic $m$-form on $J^{\mathrm{k}}(M,N)$. Let $%
K\subseteq M$ be an open relatively compact submanifold of $M$ with
piece-wise smooth boundary $\partial K$. As in Section 2, we consider the
space $S^{\infty }(\bar{K},N)$ of smooth section $\sigma :\bar{K}\rightarrow
N.$ The form $\Lambda $ defines an action functional $A$ on $S^{\infty }(%
\bar{K},N)$, given by 
\begin{equation}
A(\sigma )=\int_{K}j^{\mathrm{k}}\sigma ^{\ast }\Lambda =\int_{j^{\mathrm{k}%
}\sigma (K)}\Lambda .  \label{4/1}
\end{equation}

\begin{definition}
A section $\sigma \in S^{\infty }(\bar{K},N)$ is a \emph{critical point} of $%
A$ if $D_{Y_{\sigma }}A=0$ for every $Y_{\sigma }\in T_{\sigma }S^{\infty }(%
\overline{K},N),$ which vanishes on $\partial K$ together with its partial
derivatives up to order $\mathrm{k-1}$.
\end{definition}

Taking into account equation (\ref{2.1b}), we see that $\sigma \in S^{\infty
}(\bar{K},N)$ is a \emph{critical point} of $A$ if 
\begin{equation}
\int_{j^{\mathrm{k}}\sigma (K)}\pounds _{Y^{\mathrm{k}}}\Lambda =0
\label{4/2}
\end{equation}%
for every $Y_{\sigma }\in T_{\sigma }S^{\infty }(\bar{K},N),$ which vanishes
on on $\partial K$ together with its partial derivatives up to order $%
\mathrm{k}-1$. Here, $Y^{\mathrm{k}}$ is the prolongation to $J^{\mathrm{k}%
}(M,N)$ of an extension of $Y_{\sigma }$ to a vertical vector field $Y$ on $N
$.\smallskip 

For every vector field $Y$ on $N$, 
\begin{equation}
\pounds _{Y^{\mathrm{k}}}\Lambda =Y^{\mathrm{k}}%
{\mbox{$ \rule {5pt} {.5pt}\rule {.5pt} {6pt} \, $}}%
\mathrm{d}\Lambda +\mathrm{d}\left( Y^{\mathrm{k}}%
{\mbox{$ \rule {5pt} {.5pt}\rule {.5pt} {6pt} \, $}}%
\Lambda \right) .  \label{4/3}
\end{equation}%
The identity (\ref{4/3}) and Stokes' Theorem yield%
\begin{eqnarray*}
\int_{j^{\mathrm{k}}\sigma (K)}\pounds _{Y^{\mathrm{k}}}\Lambda &=&\int_{j^{%
\mathrm{k}}\sigma (K)}\left[ Y^{\mathrm{k}}%
{\mbox{$ \rule {5pt} {.5pt}\rule {.5pt} {6pt} \, $}}%
\mathrm{d}\Lambda +\mathrm{d}\left( Y^{\mathrm{k}}%
{\mbox{$ \rule {5pt} {.5pt}\rule {.5pt} {6pt} \, $}}%
\Lambda \right) \right] \\
&=&\int_{j^{\mathrm{k}}\sigma (K)}Y^{\mathrm{k}}%
{\mbox{$ \rule {5pt} {.5pt}\rule {.5pt} {6pt} \, $}}%
\mathrm{d}\Lambda +\int_{\partial j^{\mathrm{k}}\sigma (K)}\left( Y^{\mathrm{%
k}}%
{\mbox{$ \rule {5pt} {.5pt}\rule {.5pt} {6pt} \, $}}%
\Lambda \right) \\
&=&\int_{j^{\mathrm{k}}\sigma (K)}Y^{\mathrm{k}}%
{\mbox{$ \rule {5pt} {.5pt}\rule {.5pt} {6pt} \, $}}%
\mathrm{d}\Lambda
\end{eqnarray*}%
because $\Lambda $ is semi-basic with respect to the source map $\pi ^{%
\mathrm{k}}:J^{\mathrm{k}}(M,N)\rightarrow M$. Hence equation (\ref{4/2}) is
equivalent to%
\begin{equation}
\int_{K}j^{\mathrm{k}}\sigma ^{\ast }\left( Y^{\mathrm{k}}%
{\mbox{$ \rule {5pt} {.5pt}\rule {.5pt} {6pt} \, $}}%
\mathrm{d}\Lambda \right) =0  \label{4/5}
\end{equation}%
for every vertical vector field $Y$ on $N$. Therefore, $\sigma $ is a
critical section of $A$ if, equation (\ref{4/5}) holds for every vertical
vector field $Y$ on $N$ such that $Y^{\mathrm{k-1}}$ vanishes on $\partial
j^{\mathrm{k-1}}\sigma (K)=j^{\mathrm{k-1}}\sigma (\partial K)$.\smallskip

\subsection{Euler-Lagrange equations}

The Euler-Lagrange equations are obtained by using the coordinate
description of $\Lambda ,$ 
\begin{equation}
\Lambda =L(x^{i},y^{a},z_{i_{1}}^{a},...,z_{i_{1},...,i_{k}}^{a})\mathrm{d}%
_{m}x.  \label{4/6}
\end{equation}%
The usual rule that "variation of the derivative is the derivative of the
variation" corresponds to the choice of extension of $Y_{\sigma }$ to a
vertical vector field $Y=Y^{a}(x^{i})\frac{{\small \partial }}{{\small %
\partial y}^{a}}$ with components independent of $y^{a}$. Its prolongation
to $J^{\mathrm{k}}(M,N)$ is 
\begin{equation}
Y^{\mathrm{k}}(x^{i},z_{j_{1}}^{b},...,z_{j_{1}...j_{k}}^{b})=Y^{a}(x^{i})%
\frac{{\small \partial }}{{\small \partial y}^{a}}+Y_{,i}^{a}(x^{i})\frac{%
{\small \partial }}{{\small \partial z}_{i}^{a}}%
+...+Y_{,i_{1}...i_{k}}^{a}(x^{i})\frac{{\small \partial }}{{\small \partial
z}_{i_{1}...i_{k}}^{a}}.  \label{4/7}
\end{equation}%
Finally, the coordinate description of $j^{\mathrm{k}}\sigma $ is 
\begin{equation}
j^{\mathrm{k}}\sigma :M\rightarrow J^{\mathrm{k}}(M,N):(x^{i})\mapsto
(x^{i},y^{a}(x),z_{i_{1}}^{a}(x),...,z_{i_{1},...,i_{k}}^{a}(x)),
\label{4/8}
\end{equation}%
where%
\begin{equation}
z_{i_{1},...,i_{l}}^{a}(x)=y^{a}(x)_{,i_{1},...,i_{l}}  \label{4/9}
\end{equation}%
for every positive integer $l$. With this notation,%
\begin{equation}
j^{\mathrm{k}}\sigma ^{\ast }\left( Y^{\mathrm{k}}%
{\mbox{$ \rule {5pt} {.5pt}\rule {.5pt} {6pt} \, $}}%
\mathrm{d}(L\mathrm{d}_{m}x)\right) =\left( \frac{\partial L}{\partial
y^{a}}Y^{a}+\frac{\partial L}{\partial z_{i}^{a}}Y_{,i}^{a}+...+%
\frac{\partial L}{\partial z_{i_{1}...i_{k}}^{a}}Y_{,i_{1}...i_{k}}^{a}%
\right) \mathrm{d}_{m}x,  \label{4/10}
\end{equation}%
where all quantities on the right hand side are expressed as functions of $%
(x^{1},...,x^{n}).$ Integrating this result over $K$ and using Stokes
Theorem yields%
\begin{eqnarray*}
&&\int_{K}j^{\mathrm{k}}\sigma ^{\ast }\left( Y^{\mathrm{k}}%
{\mbox{$ \rule {5pt} {.5pt}\rule {.5pt} {6pt} \, $}}%
\mathrm{d}\Lambda \right) =\int_{K}\left( \frac{\partial L}{\partial
y^{a}}Y^{a}+\frac{\partial L}{\partial z_{i_{1}}^{a}}%
Y_{,i_{1}}^{a}+...+\frac{\partial L}{\partial z_{i_{1}...i_{k}}^{a}}%
Y_{,i_{1}...i_{k}}^{a}\right) \mathrm{d}_{m}x \\
&=&\int_{K}\left[ \frac{\partial L}{\partial y^{a}}Y^{a}+\frac{%
\partial }{\partial x^{i_{1}}}\left( \frac{\partial L}{\partial z_{i_{1}}^{a}%
}Y^{a}\right) -\left( \frac{\partial }{\partial x^{i_{1}}}\frac{\partial L}{%
\partial z_{i_{1}}^{a}}\right) Y^{a}+...\right] \mathrm{d}_{m}x+ \\
&&+\int_{K}\left[ \frac{\partial }{\partial x^{i_{1}}}\left( \frac{\partial L%
}{\partial z_{i_{1}...i_{k}}^{a}}Y_{,i_{2}...i_{k}}^{a}\right) -\left( \frac{%
\partial }{\partial x^{i_{1}}}\frac{\partial L}{\partial
z_{i_{1}...i_{k}}^{a}}\right) Y_{,i_{2}...i_{k}}^{a}\right] \mathrm{d}_{m}x,
\\
&=&\int_{K}\left[ \frac{\partial L}{\partial y^{a}}Y^{a}-\left( \frac{%
\partial }{\partial x^{i}}\frac{\partial L}{\partial z_{i}^{a}}\right)
Y^{a}-...-\left( \frac{\partial }{\partial x^{i_{k}}}\frac{\partial L}{%
\partial z_{i_{1}...i_{k}}^{a}}\right) Y_{,i_{1}...i_{k-1}}^{a}\right] 
\mathrm{d}_{m}x \\
&&+\int_{\partial K}\left[ \frac{\partial L}{\partial z_{i_{1}}^{a}}%
Y^{a}+...+\frac{\partial L}{\partial z_{i_{1}...i_{k}}^{a}}%
Y_{,i_{2}...i_{k}}^{a}\right] \left( \frac{\partial }{\partial x^{i_{1}}}%
{\mbox{$ \rule {5pt} {.5pt}\rule {.5pt} {6pt} \, $}}%
\mathrm{d}_{m}x\right) \\
&=&\int_{K}\left[ \frac{\partial L}{\partial y^{a}}Y^{a}-\left( \frac{%
\partial }{\partial x^{i}}\frac{\partial L}{\partial z_{i}^{a}}\right)
Y^{a}-...-\left( \frac{\partial }{\partial x^{i_{k}}}\frac{\partial L}{%
\partial z_{i_{1}...i_{k}}^{a}}\right) Y_{,i_{1}...i_{k-1}}^{a}\right] 
\mathrm{d}_{m}x
\end{eqnarray*}%
because $Y^{\mathrm{k}-1}$ vanishes on $j^{\mathrm{k-1}}\sigma (\partial K)$%
. Continuing integration by parts, we get%
\begin{eqnarray}
\int_{K}j^{\mathrm{k}}\sigma ^{\ast }\left( Y^{\mathrm{k}}%
{\mbox{$ \rule {5pt} {.5pt}\rule {.5pt} {6pt} \, $}}%
d\Lambda \right) &=&\int_{K}\left( \frac{\partial L}{\partial y^{a}}-\frac{%
\partial }{\partial x^{i_{1}}}\frac{\partial L}{\partial z_{i_{1}}^{a}}%
+...+(-1)^{k}\frac{\partial ^{k}}{\partial x^{i_{1}}...\partial x^{i_{k}}}%
\frac{\partial L}{\partial z_{i_{1}...i_{k}}^{a}}\right) Y^{a}\mathrm{d}_{m}x
\notag \\
&=&\int_{K}\frac{\delta L}{\delta y^{a}}Y^{a}\mathrm{d}_{m}x,  \label{4/11}
\end{eqnarray}%
where 
\begin{equation}
\frac{\delta L}{\delta y^{a}}=\frac{\partial L}{\partial y^{a}}-\frac{%
\partial }{\partial x^{i_{1}}}\frac{\partial L}{\partial z_{i_{1}}^{a}}%
+...+(-1)^{k}\frac{\partial ^{k}}{\partial x^{i_{1}}...\partial x^{i_{k}}}%
\frac{\partial L}{\partial z_{i_{1}...i_{k}}^{a}}  \label{4/11a}
\end{equation}%
is called the \emph{Lagrange derivative} of $L$. Comparing equation (\ref%
{4/11}) with equation (\ref{4.8}) observe that, if $\Phi =d(Ld_{m}x)$, then $%
j^{\mathrm{k}}\sigma ^{\ast }(\Phi _{a})-P_{a,i}^{i}=\frac{\delta L}{\delta
y^{a}}.$\smallskip

Taking into account the Fundamental Theorem in the Calculus of Variations,
we conclude that $\sigma $ is a critical section of $A_{K}$ if and only if,
for every $a=1,...,n$, 
\begin{equation}
\left( \frac{\partial L}{\partial y^{a}}-\frac{\partial }{\partial x^{i_{1}}}%
\frac{\partial L}{\partial z_{i_{1}}^{a}}+...+(-1)^{k}\frac{\partial ^{k}}{%
\partial x^{i_{1}}...\partial x^{i_{k}}}\frac{\partial L}{\partial
z_{i_{1}...i_{k}}^{a}}\right) _{\mid K}=0.  \label{4/12}
\end{equation}%
Equations (\ref{4/12}) are the \emph{Euler-Lagrange equations} for critical
points of the action functional corresponding to the Lagrangian $L.$%
\smallskip

\subsection{De Donder equations}

Let $\Xi $ be the boundary form of $d\Lambda ,$ and let 
\begin{equation}
\Theta =\pi _{\mathrm{k}}^{\mathrm{2k-1}\ast }\Lambda +\Xi .  \label{4/13}
\end{equation}%
Equation (\ref{4/13}) generalizes the construction of De Donder \cite{de
donder 1929} to $\mathrm{k}>1$. We refer to $\Theta $ as a De Donder form of 
$\Lambda $. It follows from Theorem \ref{Theorem 1} that $\Theta $
satisifies the following conditions.\smallskip

\begin{corollary}
\begin{enumerate}
\item $\Theta $ is semi-basic with respect basic to the forgetful map $\pi _{%
\mathrm{k-1}}^{\mathrm{2k-1}}:J^{\mathrm{2k-1}}(M,N)\rightarrow J^{\mathrm{k}%
}(M,N).$ In other words, for any vector field $X$ tangent to fibres of $\pi
_{\mathrm{k-1}}^{\mathrm{2k-1}}:J^{\mathrm{2k-1}}(M,N)\rightarrow J^{\mathrm{%
k}}(M,N)$, 
\begin{equation}
X%
{\mbox{$ \rule {5pt} {.5pt}\rule {.5pt} {6pt} \, $}}%
\Theta =0.  \label{4/14}
\end{equation}

\item For every vector field $X$ on $J^{\mathrm{2k-1}}(M,N)$ tangent to
fibres of the source map $\pi ^{\mathrm{2k-1}}:J^{\mathrm{2k-1}%
}(M,N)\rightarrow M,$ the left interior product $X%
{\mbox{$ \rule {5pt} {.5pt}\rule {.5pt} {6pt} \, $}}%
\Theta $ is semi-basic with respect to the source map. In other words, for
every pair $X_{1},X_{2}$ of vector fields on $J^{\mathrm{2k-1}}(M,N)$
tangent to fibres of of the source map $\pi ^{\mathrm{2k-1}}:J^{\mathrm{2k-1}%
}(M,N)\rightarrow M,$%
\begin{equation}
X_{2}%
{\mbox{$ \rule {5pt} {.5pt}\rule {.5pt} {6pt} \, $}}%
\left( X_{1}%
{\mbox{$ \rule {5pt} {.5pt}\rule {.5pt} {6pt} \, $}}%
\Theta \right) =0.  \label{4/15}
\end{equation}

\item For every section $\sigma $ of $\pi :N\rightarrow M$,%
\begin{equation}
j^{\mathrm{2k-1}}\sigma ^{\ast }\Theta =j^{\mathrm{k}}\sigma ^{\ast }\Lambda
.  \label{4/16}
\end{equation}

\item For every vector field $X$ on $J^{\mathrm{2k-1}}(M,N)$ tangent to
fibres of the target map $\pi _{0}^{\mathrm{2k-1}}:J^{\mathrm{2k-1}%
}(M,N)\rightarrow N,$ and every section $\sigma $ of $\pi :N\rightarrow M,$%
\begin{equation}
j^{\mathrm{2k-1}}\sigma ^{\ast }\left( X%
{\mbox{$ \rule {5pt} {.5pt}\rule {.5pt} {6pt} \, $}}%
\mathrm{d}\Theta \right) =0.  \label{4/17}
\end{equation}%
\smallskip
\end{enumerate}
\end{corollary}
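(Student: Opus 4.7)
The plan is to verify each of the four conditions by decomposing $\Theta = \pi_{\mathrm{k}}^{\mathrm{2k-1}\ast}\Lambda + \Xi$ and applying the corresponding clause of Theorem~\ref{Theorem 1} to $\Xi$ (which is a boundary form of $\Phi = \mathrm{d}\Lambda$), handling the $\pi_{\mathrm{k}}^{\mathrm{2k-1}\ast}\Lambda$ piece by straightforward naturality of pullback. Essentially everything follows from the identities $\pi^{\mathrm{k}}\circ \pi_{\mathrm{k}}^{\mathrm{2k-1}} = \pi^{\mathrm{2k-1}}$ and $\pi_{\mathrm{k}}^{\mathrm{2k-1}}\circ j^{\mathrm{2k-1}}\sigma = j^{\mathrm{k}}\sigma$, combined with the fact that $\Lambda$ is itself semi-basic over the source on $J^{\mathrm{k}}(M,N)$.

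For condition~1, if $X$ is tangent to fibres of the forgetful projection to $J^{\mathrm{k}}(M,N)$, then $T\pi_{\mathrm{k}}^{\mathrm{2k-1}}(X)=0$, so $X\lrcorner\,\pi_{\mathrm{k}}^{\mathrm{2k-1}\ast}\Lambda = \pi_{\mathrm{k}}^{\mathrm{2k-1}\ast}(T\pi_{\mathrm{k}}^{\mathrm{2k-1}}(X)\lrcorner\,\Lambda) = 0$; Theorem~\ref{Theorem 1}(1a) kills the $\Xi$ summand, so $X\lrcorner\,\Theta = 0$. For condition~2, if $X_1$ is tangent to fibres of the source map on $J^{\mathrm{2k-1}}(M,N)$, then $T\pi_{\mathrm{k}}^{\mathrm{2k-1}}(X_1)$ is tangent to fibres of $\pi^{\mathrm{k}}$ on $J^{\mathrm{k}}(M,N)$ by the identity above; since $\Lambda$ is semi-basic over the source, $X_1\lrcorner\,\pi_{\mathrm{k}}^{\mathrm{2k-1}\ast}\Lambda = 0$, and the corresponding statement for $\Xi$ is Theorem~\ref{Theorem 1}(1b), so $X_2\lrcorner(X_1\lrcorner\,\Theta) = 0$.

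For condition~3, I would simply compute
\[
j^{\mathrm{2k-1}}\sigma^{\ast}\Theta = j^{\mathrm{2k-1}}\sigma^{\ast}\pi_{\mathrm{k}}^{\mathrm{2k-1}\ast}\Lambda + j^{\mathrm{2k-1}}\sigma^{\ast}\Xi = (\pi_{\mathrm{k}}^{\mathrm{2k-1}}\circ j^{\mathrm{2k-1}}\sigma)^{\ast}\Lambda + 0 = j^{\mathrm{k}}\sigma^{\ast}\Lambda,
\]
invoking Theorem~\ref{Theorem 1}(2) on the second summand. For condition~4, since $\mathrm{d}$ commutes with pullback,
\[
\mathrm{d}\Theta = \pi_{\mathrm{k}}^{\mathrm{2k-1}\ast}\mathrm{d}\Lambda + \mathrm{d}\Xi = \pi_{\mathrm{k}}^{\mathrm{2k-1}\ast}\Phi + \mathrm{d}\Xi,
\]
so the vanishing of $j^{\mathrm{2k-1}}\sigma^{\ast}(X\lrcorner\,\mathrm{d}\Theta)$ for $X$ tangent to fibres of $\pi_{0}^{\mathrm{2k-1}}$ is precisely Theorem~\ref{Theorem 1}(3) applied to $\Xi$ as a boundary form of $\Phi = \mathrm{d}\Lambda$.

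There is no genuine obstacle here: the statement is a bookkeeping corollary of Theorem~\ref{Theorem 1}. The only place where I need to be slightly careful is the direction of the semi-basic condition in item~1, where one must check that semi-basicness of $\Lambda$ over the source on $J^{\mathrm{k}}(M,N)$ together with the factorisation $\pi^{\mathrm{2k-1}} = \pi^{\mathrm{k}}\circ\pi_{\mathrm{k}}^{\mathrm{2k-1}}$ upgrades to semi-basicness of $\pi_{\mathrm{k}}^{\mathrm{2k-1}\ast}\Lambda$ over the forgetful projection; this is immediate from the pullback identity displayed above.
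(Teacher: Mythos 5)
Your proposal is correct and follows exactly the route the paper intends: the paper itself gives no written proof of this corollary, merely asserting that it follows from Theorem \ref{Theorem 1}, and your verification (decompose $\Theta=\pi_{\mathrm{k}}^{\mathrm{2k-1}\ast}\Lambda+\Xi$, apply the four conditions of Theorem \ref{Theorem 1} to $\Xi$, and handle the pulled-back term via $\pi^{\mathrm{k}}\circ\pi_{\mathrm{k}}^{\mathrm{2k-1}}=\pi^{\mathrm{2k-1}}$, $\pi_{\mathrm{k}}^{\mathrm{2k-1}}\circ j^{\mathrm{2k-1}}\sigma=j^{\mathrm{k}}\sigma$, and the semi-basicness of $\Lambda$) supplies precisely the omitted bookkeeping. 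No gaps.
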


\begin{theorem}
\label{Theorem 2} For $\sigma \in S^{\infty }(\bar{K},N),$ suppose that $j^{%
\mathrm{2k-1}}\sigma (\bar{K})$ is in the domain of a De Donder form $\Theta 
$. Then $\sigma $ is a critical section of the functional $A$, given by
equation (\ref{4/1}), if and only if 
\begin{equation}
j^{\mathrm{2k-1}}\sigma ^{\ast }\left( X%
{\mbox{$ \rule {5pt} {.5pt}\rule {.5pt} {6pt} \, $}}%
\mathrm{d}\Theta \right) =0  \label{4/18}
\end{equation}%
for every vector field $X$ on $J^{\mathrm{2k-1}}(M,N)$ that is tangent to
fibres of the source map $\pi ^{\mathrm{2k-1}}:J^{\mathrm{2k-1}%
}(M,N)\rightarrow M.$ \smallskip
\end{theorem}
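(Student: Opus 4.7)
The plan is to reduce equation (\ref{4/18}) to the Euler-Lagrange equations (\ref{4/12}) pointwise on $K$, and then apply the characterisation of critical sections developed in subsections 4.1--4.2. Since $\Theta = \pi_{\mathrm{k}}^{\mathrm{2k-1}\ast}\Lambda + \Xi$, one has $\mathrm{d}\Theta = \pi_{\mathrm{k}}^{\mathrm{2k-1}\ast}\mathrm{d}\Lambda + \mathrm{d}\Xi$, so the calculation already carried out in the proof of Proposition \ref{Decomposition} for $\Phi = \mathrm{d}\Lambda$ applies directly to $\mathrm{d}\Theta$.

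First, I would decompose an arbitrary source-tangent vector field $X$ on $J^{\mathrm{2k-1}}(M,N)$ in adapted coordinates as $X = X^{a}\frac{\partial}{\partial y^{a}} + X'$, where $X'$ has only $\partial/\partial z$-components and is therefore tangent to the fibres of the target map $\pi_{0}^{\mathrm{2k-1}}$. By Condition~4 of the Corollary preceding the theorem, $j^{\mathrm{2k-1}}\sigma^{\ast}(X' \lrcorner \mathrm{d}\Theta) = 0$, so
\[
j^{\mathrm{2k-1}}\sigma^{\ast}(X \lrcorner \mathrm{d}\Theta) = j^{\mathrm{2k-1}}\sigma^{\ast}\!\left(X^{a}\tfrac{\partial}{\partial y^{a}} \lrcorner \mathrm{d}\Theta\right).
\]

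Second, I would evaluate the right-hand side by re-using equation (\ref{3/9}) from the proof of Proposition \ref{Decomposition}. Locally, choose a vertical vector field $Y = Y^{a}(x,y)\frac{\partial}{\partial y^{a}}$ on $N$ whose components coincide with $X^{a}$ along $j^{\mathrm{2k-1}}\sigma(K)$. The difference between the prolongation $Y^{\mathrm{2k-1}}$ and $Y^{a}\frac{\partial}{\partial y^{a}}$, viewed as vector fields on $J^{\mathrm{2k-1}}(M,N)$, lies in the target-vertical directions and is therefore annihilated by $j^{\mathrm{2k-1}}\sigma^{\ast}(\,\cdot\, \lrcorner \mathrm{d}\Theta)$ by Condition~4. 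Equation (\ref{3/9}), applied with $\Phi = \mathrm{d}\Lambda$, then yields the pointwise identity
\[
j^{\mathrm{2k-1}}\sigma^{\ast}(X \lrcorner \mathrm{d}\Theta) = \left(j^{\mathrm{k}}\sigma^{\ast}(\Phi_{a}) - P_{a,i}^{\,i}\right)X^{a}\,\mathrm{d}_{m}x,
\]
and the observation made just after (\ref{4/11}) identifies the bracketed coefficient with the Lagrange derivative $\delta L/\delta y^{a}$.

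Finally, both implications follow. If $\sigma$ is critical, the Euler-Lagrange equations (\ref{4/12}) give $\delta L/\delta y^{a} = 0$ on $K$, so the identity above vanishes for every source-tangent $X$. Conversely, the values of $X^{a}$ along $j^{\mathrm{2k-1}}\sigma(K)$ may be prescribed arbitrarily (for example, by letting $X^{a}$ depend only on $x$), so the vanishing of (\ref{4/18}) for all such $X$ forces $\delta L/\delta y^{a} = 0$ on $K$, and the fundamental lemma of the calculus of variations then concludes that $\sigma$ is critical. The principal subtlety is the second step: $X^{a}\partial/\partial y^{a}$ is not itself a prolongation, so equation (\ref{3/9}) cannot be invoked directly, and the decisive observation that unlocks the argument is that the correction $Y^{\mathrm{2k-1}} - Y^{a}\partial/\partial y^{a}$ is target-vertical, precisely the setting in which Condition~4 (equation (\ref{4/17})) produces zero under $j^{\mathrm{2k-1}}\sigma^{\ast}$.
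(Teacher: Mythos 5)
Your argument is correct, and it reaches the conclusion by a recognizably different route from the paper's. The paper re-runs the variational argument with $\Theta$ in place of $\Lambda$: it uses (\ref{4/16}) and (\ref{4/20}) to show the action and its derivative are unchanged, applies Cartan's formula and Stokes to get the integral condition (\ref{4/22}) for prolongations $Y^{\mathrm{2k-1}}$ of boundary-vanishing vertical fields, invokes (\ref{4/17}) to pass from prolongations to arbitrary source-vertical $X$, and only then applies the Fundamental Theorem of the Calculus of Variations to extract the pointwise equation (\ref{4/24}). You instead work pointwise from the start: Condition 4 of the Corollary strips an arbitrary source-vertical $X$ down to its $X^{a}\partial/\partial y^{a}$ component, equation (\ref{3/9}) (with $\Phi =\mathrm{d}\Lambda$, which is legitimate since $\mathrm{d}\Theta =\pi _{\mathrm{k}}^{\mathrm{2k-1}\ast }\mathrm{d}\Lambda +\mathrm{d}\Xi$ and $\mathrm{d}\Lambda$ satisfies the semi-basicity hypothesis) evaluates the pullback as $\bigl(j^{\mathrm{k}}\sigma ^{\ast }(\Phi _{a})-P_{a,i}^{i}\bigr)X^{a}\mathrm{d}_{m}x=\tfrac{\delta L}{\delta y^{a}}X^{a}\mathrm{d}_{m}x$ by the remark following (\ref{4/11}), and the theorem then reduces to the already-established equivalence of criticality with the Euler--Lagrange equations (\ref{4/12}). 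Both proofs hinge on the same decisive fact --- equation (\ref{4/17}) kills target-vertical directions under $j^{\mathrm{2k-1}}\sigma ^{\ast }(\,\cdot \,$ contracted into $\mathrm{d}\Theta )$ --- but your version buys an explicit identification of $j^{\mathrm{2k-1}}\sigma ^{\ast }(X$ contracted into $\mathrm{d}\Theta )$ with the Lagrange derivative, which makes both implications immediate and avoids a second pass through Stokes' theorem and the fundamental lemma; the one cosmetic slip is that you credit the fundamental lemma in the converse direction, where it is not needed (it is already baked into the equivalence (\ref{4/12}) from Subsection 4.2), but this does not affect the validity of the argument.
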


\begin{proof}
Equation (\ref{4/16}) implies that replacing $j^{\mathrm{k}}\sigma ^{\ast
}\Lambda $ by $j^{\mathrm{2k-1}}\sigma ^{\ast }\Theta $ in equation (\ref%
{4/1}) does not change the action functional,%
\begin{equation}
A(\sigma )=\int_{K}j^{k}\sigma ^{\ast }\Lambda =\int_{j^{2k-1}\sigma
(K)}\Theta .  \label{4/19}
\end{equation}%
Moreover, if $Y$ is a vertical vector field on $N$, then 
\begin{equation}
j^{\mathrm{2k-1}}\sigma ^{\ast }\pounds _{Y^{\mathrm{2k-1}}}\Theta =j^{%
\mathrm{k}}\sigma ^{\ast }\pounds _{Y^{\mathrm{k}}}\Lambda ,  \label{4/20}
\end{equation}%
where $Y^{\mathrm{2k-1}}$ is the prolongation of $Y$ to $J^{\mathrm{2k-1}%
}(M,N).$ Hence, $\sigma $ is a critical section of the functional $A$ if 
\begin{equation}
\int_{K}j^{\mathrm{2k-1}}\sigma ^{\ast }\pounds _{Y^{\mathrm{2k-1}}}\Theta =0
\label{4/21}
\end{equation}%
for every vertical vector field $Y$ on $N$ such that $Y^{\mathrm{k-1}}$
vanishes on $j^{\mathrm{k-1}}\sigma (\partial K)$. The argument leading from
equation (\ref{4/2}) to equation (\ref{4/5}) ensures that $\sigma $ is a
critical section of $A$ if and only if 
\begin{equation}
\int_{K}j^{\mathrm{2k-1}}\sigma ^{\ast }\left( Y^{\mathrm{2k-1}}%
{\mbox{$ \rule {5pt} {.5pt}\rule {.5pt} {6pt} \, $}}%
\mathrm{d}\Theta \right) =0  \label{4/22}
\end{equation}%
for all vertical vector fields $Y$ on $N,$ such that $Y^{\mathrm{k-1}}$
vanishes on $j^{\mathrm{k-1}}\sigma ^{\ast }(\partial K)$.\smallskip

Equation (\ref{4/17}) ensures that in equation (\ref{4/22}), we can replace $%
Y^{\mathrm{2k-1}}$ by an arbitrary vector field $X$ on $J^{\mathrm{2k-1}%
}(M,N)$ that is tangent to fibres of the source map $\pi ^{\mathrm{2k-1}}:J^{%
\mathrm{2k-1}}(M,N)\rightarrow M$ and satisfies the condition $T\pi _{%
\mathrm{k-1}}^{\mathrm{2k-1}}\circ X\circ j^{\mathrm{k-1}}\sigma (\partial
K)=0$. In other words, we may omit the requirement that $Y^{\mathrm{2k-1}}$
is the prolongation of a vertical vector field $Y$ on $N$. This proves that
that $\sigma $ is a critical section of $A_{K}$ if and only if 
\begin{equation}
\int_{K}j^{\mathrm{2k-1}}\sigma ^{\ast }\left( X%
{\mbox{$ \rule {5pt} {.5pt}\rule {.5pt} {6pt} \, $}}%
\mathrm{d}\Theta \right) =0  \label{4/23}
\end{equation}%
for every vector field $X$ on $J^{\mathrm{2k-1}}(M,N)$ that is tangent to
fibres of the source map $\pi ^{\mathrm{2k-1}}:J^{\mathrm{2k-1}%
}(M,N)\rightarrow M$ and satisfies the condition $T\pi _{\mathrm{k-1}}^{%
\mathrm{2k-1}}\circ X\circ j^{\mathrm{k=1}}\sigma (\partial K)=0$. \smallskip

Suppose that $\sigma $ is a critical section of $A$. Equation (\ref{4/23})
and the Fundamental Theorem in the Calculus of Variations, this condition
ensure that%
\begin{equation}
j^{\mathrm{2k-1}}\sigma ^{\ast }\left( X%
{\mbox{$ \rule {5pt} {.5pt}\rule {.5pt} {6pt} \, $}}%
\mathrm{d}\Theta \right) =0  \label{4/24}
\end{equation}%
for every vector field $X$ on $J^{\mathrm{2k-1}}(M,N)$ that is tangent to
fibres of the source map $\pi ^{\mathrm{2k-1}}:J^{\mathrm{2k-1}%
}(M,N)\rightarrow M$.

Conversely, assume that equation (\ref{4/24}) is satisfied for all vector
fields on $J^{\mathrm{2k-1}}(M,N)$ that are tangent to fibres of the source
map $\pi ^{\mathrm{2k-1}}:J^{\mathrm{2k-1}}(M,N)\rightarrow M$. Then,
equation (\ref{4/23}) is satisfied for every vector field $X$ on $J^{\mathrm{%
2k-1}}(M,N)$ because the integrand is identically zero. In particular,
equation (\ref{4/22}) is satisfied for prolongations $Y^{\mathrm{2k-1}}$ of
vertical vector fields $Y$ on $N$ that vanish on $\partial K$ together with
all derivatives up to order $\mathrm{k}$. This ensures that $\sigma $ is a
critical point of $A$. \smallskip
\end{proof}

We refer to (\ref{4/18}) and (\ref{4/24}) as De Donder equations. They are a
system of equations in differential forms that is equivalent to
Euler-Lagrange equations. \smallskip

Note that Condition 4 in Corollary 8 on De Donder form $\Theta ,$ see
equation (\ref{4/17}), differs from equation (\ref{4/24}) only by
restriction on the range of the vector field $X$. We can combine these two
conditions in the corollary below.\smallskip

\begin{corollary}
\label{Corollary 10} A section $\sigma \in S^{\infty }(\bar{K},N)$ is a
critical section of the functional $A$, given by equation (\ref{4/1}), if
there exists a boundary form $\Xi $ such that 
\begin{equation}
j^{\mathrm{2k-1}}\sigma ^{\ast }\left( X%
{\mbox{$ \rule {5pt} {.5pt}\rule {.5pt} {6pt} \, $}}%
\mathrm{d}\left( \pi _{\mathrm{k}}^{\mathrm{2k-1}\ast }\Lambda +\Xi \right)
\right) =0  \label{4/24a}
\end{equation}%
for every vector field $X$ on $J^{\mathrm{2k-1}}(M,N)$ that is tangent to
fibres of the source map $\pi ^{\mathrm{2k-1}}:J^{\mathrm{2k-1}%
}(M,N)\rightarrow M$. \smallskip
\end{corollary}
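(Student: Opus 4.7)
The plan is to reduce the corollary directly to Theorem \ref{Theorem 2} by introducing $\Theta := \pi_{\mathrm{k}}^{\mathrm{2k-1}\ast}\Lambda+\Xi$ and recognizing it as a De Donder form of $\Lambda$. Since $\Lambda$ is semi-basic, $d\Lambda$ plays the role of the $(m+1)$-form $\Phi$ considered in Section 3, and by hypothesis $\Xi$ is a boundary form of $d\Lambda$. The Corollary preceding the statement (the four-item characterization of De Donder forms) then tells us that $\Theta$ automatically satisfies Conditions 1--3 of that corollary, as well as Condition 4 for vector fields tangent to the target-map fibres and arbitrary sections.

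The key observation is the algebraic identity
\begin{equation*}
\mathrm{d}\Theta=\pi_{\mathrm{k}}^{\mathrm{2k-1}\ast}\mathrm{d}\Lambda+\mathrm{d}\Xi=\mathrm{d}\left(\pi_{\mathrm{k}}^{\mathrm{2k-1}\ast}\Lambda+\Xi\right),
\end{equation*}
which shows that the hypothesis (\ref{4/24a}) is literally the statement
\begin{equation*}
j^{\mathrm{2k-1}}\sigma^{\ast}\left(X\lrcorner\,\mathrm{d}\Theta\right)=0
\end{equation*}
for every vector field $X$ on $J^{\mathrm{2k-1}}(M,N)$ tangent to the fibres of the source map $\pi^{\mathrm{2k-1}}$. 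This is precisely the sufficient condition appearing in Theorem \ref{Theorem 2}.

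The proof itself then amounts to invoking the ``if'' direction of Theorem \ref{Theorem 2} applied to the De Donder form $\Theta$ just constructed, yielding that $\sigma$ is a critical section of $A$. Because Theorem \ref{Theorem 2} has already been proved, and because $\Xi$ being a boundary form is exactly what is needed for $\Theta=\pi_{\mathrm{k}}^{\mathrm{2k-1}\ast}\Lambda+\Xi$ to qualify as a De Donder form, nothing further is required. There is no real obstacle here: the only subtlety worth flagging is to verify that the class of test vector fields in (\ref{4/24a}) (those tangent to source-map fibres) coincides with the class used in (\ref{4/18}), so that the two conditions are genuinely identical and not merely analogous. Once this is observed, the corollary follows immediately.
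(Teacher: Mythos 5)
Your proposal is correct and follows the same route the paper intends: the paper gives no separate proof of this corollary, presenting it as the immediate combination of Theorem \ref{Theorem 2} with Condition 4 of the preceding corollary on De Donder forms, which is exactly your reduction via $\Theta=\pi_{\mathrm{k}}^{\mathrm{2k-1}\ast}\Lambda+\Xi$ and the identity $\mathrm{d}\Theta=\mathrm{d}\bigl(\pi_{\mathrm{k}}^{\mathrm{2k-1}\ast}\Lambda+\Xi\bigr)$ that turns hypothesis (\ref{4/24a}) into (\ref{4/18}). The one subtlety you rightly flag --- that the test fields in (\ref{4/24a}) and (\ref{4/18}) are the same class, namely fields tangent to the source-map fibres, which in particular contains the prolongations of vertical vector fields used in the ``if'' direction of Theorem \ref{Theorem 2} --- is indeed all that needs checking.
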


Equation (\ref{4/24a}) is a relation in the space of pairs $(\Xi ,\sigma )$.
However, it is not in the form of the symplectic relation occurring in
Tulczyjew triples. For a discussion of Tulczyjew triples in higher
derivative field theory see reference \cite{grabowska-vitagliano}.\smallskip

Since boundary forms are defined only locally, the assumption in Theorem \ref%
{Theorem 2} appears to be quite restrictive. We show that this is not the
case.\smallskip\ 

\begin{proposition}
\label{Proposition 2} A section $\sigma \in S^{\infty }(\bar{K},N)$ is a
critical section of the functional $A$ if there exists an open cover $%
\{U_{\alpha }\}$ of $\bar{K}\subset M$ such that $j^{\mathrm{2k-1}}\sigma
(U_{\alpha })$ is in the domain of a De Donder form $\Theta _{\alpha }$, and 
\begin{equation}
j^{\mathrm{2k-1}}\sigma ^{\ast }\left( X%
{\mbox{$ \rule {5pt} {.5pt}\rule {.5pt} {6pt} \, $}}%
\mathrm{d}\Theta _{\alpha }\right) _{\mid K\cap U_{\alpha }}=0  \label{4/25}
\end{equation}%
for each $\alpha $ , and every vector field $X$ on $J^{\mathrm{2k-1}}(M,N).$
\smallskip
\end{proposition}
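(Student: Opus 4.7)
The plan is to localize the criticality condition via a partition of unity and then invoke the local De Donder equation patch by patch. Recall from Section~4.1 that $\sigma$ is a critical section of $A$ if and only if $\int_K j^{\mathrm{k}}\sigma^{\ast}(Y^{\mathrm{k}} \lrcorner \mathrm{d}\Lambda)=0$ for every vertical vector field $Y$ on $N$ whose $(\mathrm{k}-1)$-prolongation vanishes on $j^{\mathrm{k-1}}\sigma(\partial K)$. First I would choose a smooth partition of unity $\{\rho_\alpha\}$ on $\bar{K}$ subordinate to $\{U_\alpha\}$ and, for any such $Y$, define $Y_\alpha=(\rho_\alpha\circ\pi)Y$. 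Each $Y_\alpha$ is vertical with compact support in $\pi^{-1}(U_\alpha)$; because prolongation is additive in the underlying vector field and $\sum_\alpha\rho_\alpha=1$ on $\bar{K}$, we get $Y^{\mathrm{k}}=\sum_\alpha Y_\alpha^{\mathrm{k}}$ on $\pi^{-1}(\bar{K})$, so the test integral splits as $\sum_\alpha \int_{K\cap U_\alpha} j^{\mathrm{k}}\sigma^{\ast}(Y_\alpha^{\mathrm{k}} \lrcorner \mathrm{d}\Lambda)$, and it suffices to show that every summand vanishes.

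On $U_\alpha$ I would use the splitting $\pi_{\mathrm{k}}^{\mathrm{2k-1}\,\ast}\mathrm{d}\Lambda=\mathrm{d}\Theta_\alpha-\mathrm{d}\Xi_\alpha$, contract with $Y_\alpha^{\mathrm{2k-1}}$, and pull back by $j^{\mathrm{2k-1}}\sigma$. The piece from $\mathrm{d}\Theta_\alpha$ vanishes on $K\cap U_\alpha$ by the hypothesis (\ref{4/25}). For the piece from $\mathrm{d}\Xi_\alpha$, Cartan's magic formula combined with Lemma~\ref{Lemma 1} (applicable since $Y_\alpha$ is vertical and hence projects to the zero field on $M$) gives
\[
j^{\mathrm{2k-1}}\sigma^{\ast}\bigl(Y_\alpha^{\mathrm{2k-1}} \lrcorner \mathrm{d}\Xi_\alpha\bigr) = -\mathrm{d}\,j^{\mathrm{2k-1}}\sigma^{\ast}\bigl(Y_\alpha^{\mathrm{2k-1}} \lrcorner \Xi_\alpha\bigr).
\]
Extending this exact form by zero outside $U_\alpha$ (possible because $\rho_\alpha$, and therefore $Y_\alpha$, is compactly supported in $U_\alpha$) and applying Stokes' theorem on $K$ reduces the $\alpha$-summand to a boundary integral supported on $\partial K\cap U_\alpha$.

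The principal obstacle I expect is showing that this boundary term vanishes. Here I would read off from the explicit local formula for $\Xi$ in Theorem~\ref{Theorem 1} that the interior product $Y_\alpha^{\mathrm{2k-1}} \lrcorner \Xi_\alpha$ depends only on the prolongation components $(Y_\alpha)^a_{i_2\ldots i_l}$ with $l\leq \mathrm{k}$, so that after pullback by $j^{\mathrm{2k-1}}\sigma$ the integrand is built from partial derivatives of $Y_{\sigma,\alpha}=\rho_\alpha Y_\sigma$ of order at most $\mathrm{k}-1$. Since $\rho_\alpha$ is smooth on $\bar K$ and $Y_\sigma$ vanishes on $\partial K$ together with all its partial derivatives up to order $\mathrm{k}-1$, Leibniz's rule propagates the same vanishing to $Y_{\sigma,\alpha}$, so the integrand is identically zero on $\partial K\cap U_\alpha$. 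Summing the vanishing summands over $\alpha$ gives $D_{Y_\sigma}A=0$ for every admissible $Y$, which is precisely the criticality of $\sigma$.
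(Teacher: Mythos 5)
Your proof is correct, but it follows a genuinely different route from the paper's. The paper disposes of Proposition \ref{Proposition 2} in two sentences: it invokes Corollary \ref{Corollary 1} to say that the choice of De Donder form on each patch is immaterial, observes that on each $K\cap U_{\alpha}$ the local condition (\ref{4/25}) is equivalent to the Euler--Lagrange equations (via Theorem \ref{Theorem 2} and the identity $j^{\mathrm{k}}\sigma ^{\ast }(\Phi _{a})-P_{a,i}^{i}=\delta L/\delta y^{a}$ noted after (\ref{4/11})), and concludes from the locality of the Euler--Lagrange equations as pointwise PDEs. You instead stay entirely at the level of forms: a partition of unity splits the admissible test field $Y$ into pieces $Y_{\alpha}$ supported over the patches, the hypothesis kills the $\mathrm{d}\Theta _{\alpha}$ contribution of each piece, and Lemma \ref{Lemma 1} together with Stokes' theorem reduces the $\mathrm{d}\Xi _{\alpha}$ contribution to a boundary term that vanishes because the contraction $Y_{\alpha}^{\mathrm{2k-1}}\lrcorner \Xi _{\alpha}$ involves only prolongation components of order at most $\mathrm{k}-1$, all of which vanish on $\partial K$ by the admissibility of the variation (Leibniz's rule preserving this under multiplication by $\rho _{\alpha}$). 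Your care in invoking only additivity of prolongation, rather than the false identity $(\rho _{\alpha}Y)^{\mathrm{k}}=\rho _{\alpha}Y^{\mathrm{k}}$, is exactly right and is the one place such an argument usually goes wrong. What your argument buys is independence from the coordinate Euler--Lagrange computation and from Corollary \ref{Corollary 1}; what the paper's buys is brevity, at the cost of leaning on the pointwise equivalence with the Euler--Lagrange equations established elsewhere. Both are sound.
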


\begin{proof}
Corollary \ref{Corollary 1} ensures that, if $\Xi $ and $\Xi ^{\prime }$ are
boundary forms of $\mathrm{d}\Lambda $ with the same domain and $\Theta $
and $\Theta ^{\prime }$ are the De Donder forms corresponding to $\Xi $ and $%
\Xi ^{\prime }$, respectively, then 
\begin{equation*}
j^{\mathrm{2k-1}}\sigma ^{\ast }\left( X%
{\mbox{$ \rule {5pt} {.5pt}\rule {.5pt} {6pt} \, $}}%
\mathrm{d}\Theta ^{\prime }\right) =j^{\mathrm{2k-1}}\sigma ^{\ast }\left( X%
{\mbox{$ \rule {5pt} {.5pt}\rule {.5pt} {6pt} \, $}}%
\mathrm{d}\Theta \right)
\end{equation*}%
for each section $\sigma $ of $\pi $ and every vector field $X$ on $J^{%
\mathrm{2k-1}}(M,N)$. Hence, the choice of a De Donder form does not matter.

For each $\alpha $, equation (\ref{4/25}) is equivalent to Euler-Lagrange
equations for $\sigma $ in $K\cap U_{\alpha }$. Since Euler-Lagrange
equations are local, the conditions of Proposition \ref{Proposition 2} imply
that $\sigma $ satisfies Euler-Lagrange equations in $K$. \smallskip
\end{proof}

\subsection{Symmetries and conservation laws}

\begin{definition}
A vector field $Y$ on $N$ is an infinitesimal symmetry of the Lagrangian
system with Lagrangian $\Lambda =L\mathrm{d}_{m}x$ of differential order $%
\mathrm{k}$ if it projects to a vector field on $M$ and $\pounds _{Y^{%
\mathrm{k}}}\mathrm{d}\Lambda =0$.
\end{definition}

Let $Y$ be an infinitesimal symmetry of $\Lambda $. For every boundary form $%
\Xi $ of $\mathrm{d}\Lambda $, Lemma \ref{Lemma 1} ensures that $j^{\mathrm{%
2k-1}}\sigma ^{\ast }(\pounds _{Y^{\mathrm{2k-1}}}\Xi )=0$ for all sections $%
\sigma $ of $\pi :M\rightarrow N$. Since $\Theta =\pi _{\mathrm{k}}^{%
\mathrm{2k-1}\ast }\Lambda +\Xi $ is the local De Donder form corresponding
to $\Xi $, it follows that 
\begin{equation}
j^{\mathrm{2k-1}}\sigma ^{\ast }(\pounds _{Y^{\mathrm{2k-1}}}\Theta )=0.
\label{5.26}
\end{equation}%
Hence, 
\begin{equation}
j^{\mathrm{2k-1}}\sigma ^{\ast }(Y^{\mathrm{2k-1}}%
{\mbox{$ \rule {5pt} {.5pt}\rule {.5pt} {6pt} \, $}}%
\mathrm{d}\Theta )+j^{\mathrm{2k-1}}\sigma ^{\ast }[\mathrm{d}(Y^{\mathrm{%
2k-1}}%
{\mbox{$ \rule {5pt} {.5pt}\rule {.5pt} {6pt} \, $}}%
\Theta )]=0.  \label{5.27}
\end{equation}%
If $\sigma $ satisfies De Donder equations, we get the conservation law 
\begin{equation}
\mathrm{d}\left[ j^{\mathrm{2k-1}}\sigma ^{\ast }\left( Y^{\mathrm{2k-1}}%
{\mbox{$ \rule {5pt} {.5pt}\rule {.5pt} {6pt} \, $}}%
\Theta \right) \right] =0.  \label{5.28}
\end{equation}

If $K$ is an open, relatively compact submanifold of $M$ with boundary $%
\partial K$, which is contained in $\mathrm{domain~}\sigma ,$ then%
\begin{equation}
\int_{\partial K}j^{\mathrm{2k-1}}\sigma ^{\ast }\left( Y^{\mathrm{2k-1}}%
{\mbox{$ \rule {5pt} {.5pt}\rule {.5pt} {6pt} \, $}}%
\Theta \right) =0.  \label{5.29}
\end{equation}%
In other words, if $\partial K=\Sigma _{1}\cup \Sigma _{2}$, where $\Sigma
_{1}$ and $\Sigma _{2}$ inherit outer orientation from $\partial K,$ and $%
\Sigma _{1}\cap \Sigma _{2}$ is smooth of dimension $n-2$, then 
\begin{equation}
\int_{\Sigma _{1}}j^{\mathrm{2k-1}}\sigma ^{\ast }\left( Y^{\mathrm{2k-1}}%
{\mbox{$ \rule {5pt} {.5pt}\rule {.5pt} {6pt} \, $}}%
\Theta \right) =\int_{\Sigma _{2}}j^{\mathrm{2k-1}}\sigma ^{\ast }\left( Y^{%
\mathrm{2k-1}}%
{\mbox{$ \rule {5pt} {.5pt}\rule {.5pt} {6pt} \, $}}%
\Theta \right) .  \label{5.30}
\end{equation}%
If the De Donder equations are hyperbolic, and $\Sigma _{1}\ $and $\Sigma
_{2}$ are Cauchy surfaces, then the integrals in equation (\ref{5.30}) are
conserved quantities corresponding to the infinitesimal symmetry $Y$.
\smallskip

A priori, the integrals on each side of equation (\ref{5.30}) depend on the
choice of the boundary form $\Xi $. However, the difference between the left
and the right hand sides of equation (\ref{5.30}) vanishes for every $\Xi $.
In an example below, we show how boundary conditions lead to unique
expressions for constants of motion.

\section{Example}

\subsection{Cauchy problem}

Consider $M=\mathbb{R}^{2}$ with coordinates $\boldsymbol{x}=(x^{1},x^{2})$
and $N=T\mathbb{R}^{2}$ with coordinates $(\boldsymbol{x},\boldsymbol{y}%
)=(x^{1},x^{2},y^{1},y^{2})$. 
\begin{equation}
L=g_{ab}g^{ij}g^{kl}z_{ij}^{a}z_{kl}^{b},  \label{7.1}
\end{equation}%
where $g_{ab}$ is the Minkowski metric. 
\begin{eqnarray}
\frac{\partial L}{\partial z_{ij}^{a}} &=&2g_{ab}g^{ij}g^{kl}z_{kl}^{b},
\label{7.2} \\
\frac{\partial L}{\partial z_{i}^{a}} &=&0\text{ \ and \ }\frac{\partial L}{%
\partial y^{a}}=0.  \notag
\end{eqnarray}

Euler-Lagrange equations 
\begin{eqnarray*}
\frac{\partial ^{2}}{\partial x^{i}\partial x^{j}}\frac{\partial L}{\partial
y_{,ij}^{a}} &=&0 \\
2g_{ab}g^{ij}g^{kl}y_{,klij}^{a} &=&0\text{. }
\end{eqnarray*}%
Writing $x^{1}=t$, $x^{2}=x$, we get 
\begin{eqnarray*}
\left( \frac{\partial ^{2}}{\partial t^{2}}-\frac{\partial ^{2}}{\partial x^{2}}%
\right) \left( \frac{\partial ^{2}}{\partial t^{2}}-\frac{\partial ^{2}}{%
\partial x^{2}}\right) y^{a}(t,x) &=&0. \\
\left( \frac{\partial ^{4}}{\partial t^{4}}-2\frac{\partial ^{2}}{\partial
t^{2}}\frac{\partial ^{2}}{\partial x^{2}}+\frac{\partial ^{4}}{\partial
x^{2}}\right) y^{a}(t,x) &=&0
\end{eqnarray*}%
Set $y(t,x)$, $\dot{y}(t,x)$, $\ddot{y}(t,x),$ and $\dddot{y}(t,x)$ as the
Cauchy data at $t.$ Then%
\begin{eqnarray*}
\frac{\partial }{\partial t}y^{a}(t,x) &=&\dot{y}^{a}(t,x), \\
\frac{\partial }{\partial t}\dot{y}^{a}(t,x) &=&\ddot{y}^{a}(t,x), \\
\frac{\partial }{\partial t}\ddot{y}^{a}(t,x) &=&\dddot{y}^{a}(t,x), \\
\frac{\partial }{\partial t}\dddot{y}^{a}(t,x) &=&\frac{\partial ^{4}}{%
\partial t^{4}}y^{a}(t,x)=2\frac{\partial ^{2}}{\partial x^{2}}\ddot{y}%
^{a}(t,x)-\frac{\partial ^{4}}{\partial x^{4}}y^{a}(t,x).
\end{eqnarray*}%
Therefore, 
\begin{equation*}
\frac{\partial }{\partial t}\left( 
\begin{array}{c}
y^{a} \\ 
\dot{y}^{a} \\ 
\ddot{y}^{a} \\ 
\dddot{y}^{a}%
\end{array}%
\right) =\left( 
\begin{array}{c}
\dot{y}^{a} \\ 
\ddot{y}^{a} \\ 
\dddot{y}^{a} \\ 
2\frac{\partial ^{2}}{\partial x^{2}}\ddot{y}^{a}-\frac{\partial ^{4}}{%
\partial x^{4}}y^{a}%
\end{array}%
\right) =A\left( 
\begin{array}{c}
y^{a} \\ 
\dot{y}^{a} \\ 
\ddot{y}^{a} \\ 
\dddot{y}^{a}%
\end{array}%
\right)
\end{equation*}

where 
\begin{equation*}
A=\left( 
\begin{array}{cccc}
0 & 1 & 0 & 0 \\ 
0 & 0 & 1 & 0 \\ 
0 & 0 & 0 & 1 \\ 
-\frac{\partial ^{4}}{\partial x^{4}} & 0 & 2\frac{\partial ^{2}}{\partial
x^{2}} & 0%
\end{array}%
\right) .
\end{equation*}%
Since 
\begin{equation*}
e^{tA}=\sum_{k=0}^{\infty }\frac{t^{k}A^{k}}{k!}
\end{equation*}%
is well defined and 
\begin{equation*}
\left( 
\begin{array}{c}
y^{a}(t,x) \\ 
\dot{y}^{a}(t,x) \\ 
\ddot{y}^{a}(t,x) \\ 
\dddot{y}^{a}(t,x)%
\end{array}%
\right) =\sum_{k=0}^{\infty }\frac{t^{k}A^{k}}{k!}\left( 
\begin{array}{c}
y^{a}(0,x) \\ 
\dot{y}^{a}(0,x) \\ 
\ddot{y}^{a}(0,x) \\ 
\dddot{y}^{a}(0,x)%
\end{array}%
\right)
\end{equation*}%
is a solution of the Cauchy problem at $t=0.$

\subsection{De Donder forms}

De Donder forms are $\pi _{\mathrm{2}}^{\mathrm{3}\ast }\Lambda +\Xi $,
where 
\begin{equation}
\Xi =p_{a}^{i}(\mathrm{d}y^{a}-z_{j}^{a}\mathrm{d}x^{j})\wedge \left( \frac{%
{\small \partial }}{{\small \partial x}^{i}}%
{\mbox{$ \rule {5pt} {.5pt}\rule {.5pt} {6pt} \, $}}%
\mathrm{d}_{2}x\right) +p_{a}^{i_{1}i_{2}}(\mathrm{d}z_{i%
\,_{2}}^{a}-z_{i_{2}j}^{a}\mathrm{d}x^{j})\wedge \left( \frac{{\small %
\partial }}{{\small \partial x}^{i_{1}}}%
{\mbox{$ \rule {5pt} {.5pt}\rule {.5pt} {6pt} \, $}}%
\mathrm{d}_{2}x\right)  \label{7.3}
\end{equation}%
is a local boundary form corresponding to $\mathrm{d}\Lambda .$ For a
section $\sigma $ of $\pi $,%
\begin{eqnarray}
&&\Theta _{\mid \mathrm{range}~j^{\mathrm{3}}\sigma }=\pi _{\mathrm{2}}^{%
\mathrm{3}\ast }\Lambda _{\mid \mathrm{range}~j^{\mathrm{3}}\sigma }+\Xi
_{\mid \mathrm{range}~j^{\mathrm{3}}\sigma }  \label{7.4} \\
&=&L\mathrm{d}_{2}x+P_{a}^{i_{1}}(\mathrm{d}y^{a}-z_{j}^{a}\mathrm{d}%
x^{j})\wedge \left( \frac{{\small \partial }}{{\small \partial x}^{i_{1}}}%
{\mbox{$ \rule {5pt} {.5pt}\rule {.5pt} {6pt} \, $}}%
\mathrm{d}_{2}x\right) +P_{a}^{i_{1}i_{2}}(\mathrm{d}z_{i%
\,_{2}}^{a}-z_{i_{2}j}^{a}\mathrm{d}x^{j})\wedge \left( \frac{{\small %
\partial }}{{\small \partial x}^{i_{1}}}%
{\mbox{$ \rule {5pt} {.5pt}\rule {.5pt} {6pt} \, $}}%
\mathrm{d}_{2}x\right)  \notag \\
&=&P_{a}^{i_{1}}\mathrm{d}y^{a}\wedge \left( \frac{{\small \partial }}{%
{\small \partial x}^{i_{1}}}%
{\mbox{$ \rule {5pt} {.5pt}\rule {.5pt} {6pt} \, $}}%
\mathrm{d}_{2}x\right) +P_{a}^{i_{1}i_{2}}\mathrm{d}z_{i\,_{2}}^{a}\wedge
\left( \frac{{\small \partial }}{{\small \partial x}^{i_{1}}}%
{\mbox{$ \rule {5pt} {.5pt}\rule {.5pt} {6pt} \, $}}%
\mathrm{d}_{2}x\right) -\left(
P_{a}^{i}y_{,i}^{a}+P_{a}^{ij}y_{,ij}-L\right) \mathrm{d}_{2}x.  \notag
\end{eqnarray}%
where the functions $P_{a}^{i_{1}}$ and $P_{a}^{i_{1}i_{2}}$ satisfy the
equations 
\begin{eqnarray}
\frac{\partial L}{\partial z_{ij}^{a}}-P_{a}^{(i_{1}i_{2})} &=&0,
\label{7.5} \\
\frac{\partial L}{\partial z_{i}^{a}}-P_{a}^{i}-P_{a,i_{1}}^{i_{1}i} &=&0, 
\notag
\end{eqnarray}%
that follow from equations (\ref{14}).

Since 
\begin{equation*}
\mathrm{d}\Lambda =\frac{\partial L}{\partial z_{ij}^{a}}\mathrm{d}%
z_{ij}^{a}\wedge \mathrm{d}_{2}x+\frac{\partial L}{\partial z_{i}^{a}}%
\mathrm{d}z_{1}^{a}\wedge \mathrm{d}_{2}x+\frac{\partial L}{\partial y^{a}}%
\mathrm{d}y^{a}\wedge \mathrm{d}_{2}x=2g_{ab}g^{ij}g^{kl}z_{kl}^{b}~\mathrm{d%
}z_{ij}^{a}\wedge \mathrm{d}_{2}x,
\end{equation*}%
it follows that 
\begin{eqnarray*}
\frac{\partial L}{\partial z_{ij}^{a}}
&=&2g_{ab}g^{i_{1}i_{2}}g^{kl}y_{,kl}^{b}, \\
\frac{\partial L}{\partial z_{i}^{a}} &=&0.
\end{eqnarray*}%
Hence, the symmetric solution is 
\begin{eqnarray*}
P_{a}^{i_{1}i_{2}} &=&\Phi
_{a}^{i_{1}i_{2}}=2g_{ab}g^{i_{1}i_{2}}g^{kl}y_{,kl}^{b}, \\
P_{a,i_{1}}^{i_{1}i_{2}} &=&\Phi
_{a,i_{1}}^{i_{1}i_{2}}=2g_{ab}g^{i_{1}i_{2}}g^{kl}y_{,kli_{1}}^{b}, \\
P_{a}^{i} &=&\Phi
_{a}^{i}-P_{a,i_{1}}^{i_{1}i}=-2g_{ab}g^{i_{1}i_{2}}g^{kl}y_{,kli_{1}}^{b}.
\end{eqnarray*}%
In this case 
\begin{eqnarray*}
&&\Xi _{\mid \mathrm{range}~j^{\mathrm{3}}\sigma }=P_{a}^{i_{1}i_{2}}(%
\mathrm{d}y^{a}-z_{j}^{a}\mathrm{d}x^{j})\wedge \left( \frac{{\small %
\partial }}{{\small \partial x}^{i}}%
{\mbox{$ \rule {5pt} {.5pt}\rule {.5pt} {6pt} \, $}}%
\mathrm{d}_{2}x\right)  \\
&&+P_{a}^{i_{1}i_{2}}(\mathrm{d}z_{i\,_{2}}^{a}-z_{i_{2}j}^{a}\mathrm{d}%
x^{j})\wedge \left( \frac{{\small \partial }}{{\small \partial x}^{i_{1}}}%
{\mbox{$ \rule {5pt} {.5pt}\rule {.5pt} {6pt} \, $}}%
\mathrm{d}_{2}x\right)  \\
&=&-2g_{ab}g^{i_{1}i_{2}}g^{kl}y_{,kli_{1}}^{b}(\mathrm{d}y^{a}-z_{j}^{a}%
\mathrm{d}x^{j})\wedge \left( \frac{{\small \partial }}{{\small \partial x}%
^{i_{2}}}%
{\mbox{$ \rule {5pt} {.5pt}\rule {.5pt} {6pt} \, $}}%
\mathrm{d}_{2}x\right)  \\
&&+2g_{ab}g^{i_{1}i_{2}}g^{kl}y_{,kl}^{b}(\mathrm{d}z_{i%
\,_{2}}^{a}-z_{i_{2}j}^{a}\mathrm{d}x^{j})\wedge \left( \frac{{\small %
\partial }}{{\small \partial x}^{i_{1}}}%
{\mbox{$ \rule {5pt} {.5pt}\rule {.5pt} {6pt} \, $}}%
\mathrm{d}_{2}x\right) ,
\end{eqnarray*}%
and 
\begin{eqnarray}
&&\Theta _{\mid \mathrm{range}~j^{\mathrm{3}}\sigma }=\pi _{2}^{3\ast
}\Lambda _{\mid \mathrm{range}~j^{\mathrm{3}}\sigma }+\Xi _{\mid \mathrm{%
range}~j^{\mathrm{3}}\sigma }  \label{7.5a} \\
&=&Ld_{2}x+P_{a}^{i_{1}i_{2}}(\mathrm{d}y^{a}-z_{j}^{a}\mathrm{d}%
x^{j})\wedge \left( \frac{{\small \partial }}{{\small \partial x}^{i}}%
{\mbox{$ \rule {5pt} {.5pt}\rule {.5pt} {6pt} \, $}}%
\mathrm{d}_{2}x\right)   \notag \\
&&+P_{a}^{i_{1}i_{2}}(\mathrm{d}z_{i\,_{2}}^{a}-z_{i_{2}j}^{a}\mathrm{d}%
x^{j})\wedge \left( \frac{{\small \partial }}{{\small \partial x}^{i_{1}}}%
{\mbox{$ \rule {5pt} {.5pt}\rule {.5pt} {6pt} \, $}}%
\mathrm{d}_{2}x\right)   \notag \\
&=&g_{ab}g^{ij}g^{kl}z_{ij}^{a}z_{kl}^{b}\mathrm{d}%
_{2}x-2g_{ab}g^{i_{1}i_{2}}g^{kl}y_{,kli_{1}}^{b}(\mathrm{d}y^{a}-z_{j}^{a}%
\mathrm{d}x^{j})\wedge \left( \frac{{\small \partial }}{{\small \partial x}%
^{i_{2}}}%
{\mbox{$ \rule {5pt} {.5pt}\rule {.5pt} {6pt} \, $}}%
\mathrm{d}_{2}x\right)   \notag \\
&&+2g_{ab}g^{i_{1}i_{2}}g^{kl}y_{,kl}^{b}(\mathrm{d}z_{i%
\,_{2}}^{a}-z_{i_{2}j}^{a}\mathrm{d}x^{j})\wedge \left( \frac{{\small %
\partial }}{{\small \partial x}^{i_{1}}}%
{\mbox{$ \rule {5pt} {.5pt}\rule {.5pt} {6pt} \, $}}%
\mathrm{d}_{2}x\right) .  \notag
\end{eqnarray}

A non-symmetric solution of equation (\ref{7.5}) is 
\begin{eqnarray}
P_{a}^{\prime i_{1}i_{2}}
&=&P_{a}^{i_{1}i_{2}}+Q_{a}^{i_{1}i_{2}}=2g_{ab}g^{i_{1}i_{2}}g^{kl}y_{,kl}^{b}+Q_{a}^{i_{1}i_{2}},
\label{7.5b} \\
P_{a,i_{1}}^{\prime i_{1}i_{2}}
&=&P_{a,i_{1}}^{i_{1}i_{2}}+Q_{a,i_{1}}^{i_{1}i_{2}}=2g_{ab}g^{i_{1}i_{2}}g^{kl}y_{,kli_{1}}^{b}+Q_{a,i_{1}}^{i_{1}i_{2}},
\notag \\
P_{a}^{\prime i} &=&\Phi _{a}^{i}-P_{a,i_{1}}^{\prime
i_{1}i}=-2g_{ab}g^{i_{1}i_{2}}g^{kl}y_{,kli_{1}}^{b}-Q_{a,i_{1}}^{i_{1}i_{2}},
\notag
\end{eqnarray}%
where $Q_{a}^{i_{1}i_{2}}$ is skew symmetric in $i_{1}$ and $i_{2},$%
\begin{equation}
Q_{a,i_{1}}^{i_{1}i_{2}}=-Q_{a,i_{1}}^{i_{2}i_{1}}.  \label{7.11}
\end{equation}

\subsection{Symmetries}

A vector field $Y=Y^{i}\frac{\partial }{\partial x^{i}}+Y^{a}\frac{\partial 
}{\partial y^{a}}$ is a symmetry if 
\begin{equation}
\pounds _{Y^{2}}\left( L\mathrm{d}_{2}x\right) =0,  \label{7.12}
\end{equation}%
where $Y^{2}=Y^{i}\frac{\partial }{\partial x^{i}}+Y^{a}\frac{\partial }{%
\partial y^{a}}+Y_{i}^{a}\frac{\partial }{\partial z_{i}^{a}}+Y_{ij}^{a}%
\frac{\partial }{\partial z_{ij}^{a}}$ of $Y$ is the the prolongation of $Y$
to $J^{2}(M,N)$ and 
\begin{eqnarray}
Y_{i}^{a} &=&Y_{,b}^{a}z_{i}^{b}-z_{i}^{a}Y_{,j}^{i}+Y_{,i}^{a},
\label{7.13} \\
Y_{ij}^{a} &=&z_{(j}^{b}Y_{i),b}^{a}-z_{k(i}^{a}Y_{,i)}^{k}+Y_{(i,j)}^{a}, 
\notag
\end{eqnarray}%
see equations (\ref{A8}) and (\ref{A9}) in the Appendix. The Lorentz metric $%
g_{ij}\mathrm{d}x^{i}\mathrm{d}x^{j}=(\mathrm{d}t)^{2}-(\mathrm{d}x)^{2}$
occuring in the Lagrangian has Killing vector $Y_{T}=\frac{\partial }{%
\partial x^{1}},$ $Y_{S}=\frac{\partial }{\partial x^{2}}$ corresponding to
the time and the space translations, and the infinitesimal Lorentz
transformation $Y_{L}=x^{2}\frac{\partial }{\partial x^{1}}+x^{1}\frac{%
\partial }{\partial x^{2}}.$

We discuss here conservation of energy corrsponding to the time translations 
$Y_{T}$. Equations (\ref{7.13}) show that the jet components of $Y_{T}^{%
\mathrm{2}}$ and $Y_{T}^{\mathrm{3}}$ vanish, so that $Y_{T}^{\mathrm{2}}=%
\frac{\partial }{\partial t}$ and $Y_{T}^{\mathrm{3}}=\frac{\partial }{%
\partial t}$. Hence, $\ $ 
\begin{eqnarray*}
\pounds _{Y_{T}^{\mathrm{2}}}(L\mathrm{d}_{2}x) &=&\pounds _{Y_{T}^{\mathrm{2%
}}}\left( g_{ab}g^{ij}g^{kl}z_{ij}^{a}z_{kl}^{b}\mathrm{d}_{2}x\right) \\
&=&Y_{T}^{\mathrm{2}}%
{\mbox{$ \rule {5pt} {.5pt}\rule {.5pt} {6pt} \, $}}%
(2g_{ab}g^{ij}g^{kl}z_{kl}^{b})\mathrm{d}z_{ij}^{a}\wedge \mathrm{d}_{2}x+%
\mathrm{d}\left[ \left( g_{ab}g^{ij}g^{kl}z_{ij}^{a}z_{kl}^{b}\right)
Y_{T}^{i_{1}}\frac{\partial }{\partial x^{i_{i}}}%
{\mbox{$ \rule {5pt} {.5pt}\rule {.5pt} {6pt} \, $}}%
\mathrm{d}_{2}x\right] \\
&=&(2g_{ab}g^{ij}g^{kl}z_{kl}^{b})Y_{Tij}^{a}\mathrm{d}%
_{2}x+(2g_{ab}g^{ij}g^{kl}z_{kl}^{b})Y_{T}^{i_{1}}\mathrm{d}z_{ij}^{a}\wedge
\left( \frac{\partial }{\partial x^{i_{i}}}%
{\mbox{$ \rule {5pt} {.5pt}\rule {.5pt} {6pt} \, $}}%
\mathrm{d}_{2}x\right) + \\
&&+\left[ \left( g_{ab}g^{ij}g^{kl}z_{ij}^{a}z_{kl}^{b}\right) \right] 
\mathrm{d}Y_{T}^{i_{1}}\wedge \left( \frac{\partial }{\partial x^{i_{i}}}%
{\mbox{$ \rule {5pt} {.5pt}\rule {.5pt} {6pt} \, $}}%
\mathrm{d}_{2}x\right) \\
&=&0
\end{eqnarray*}

Since 
\begin{equation*}
\left( Y_{T}^{\mathrm{3}}%
{\mbox{$ \rule {5pt} {.5pt}\rule {.5pt} {6pt} \, $}}%
\mathrm{d}x^{j}\right) =\delta _{1}^{j},
\end{equation*}%
\begin{equation*}
\frac{{\small \partial }}{{\small \partial x}^{i_{1}}}%
{\mbox{$ \rule {5pt} {.5pt}\rule {.5pt} {6pt} \, $}}%
\mathrm{d}_{2}x=\frac{{\small \partial }}{{\small \partial x}^{i_{1}}}%
{\mbox{$ \rule {5pt} {.5pt}\rule {.5pt} {6pt} \, $}}%
\mathrm{d}x^{1}\wedge \mathrm{d}x^{2}=\delta _{i_{1}}^{1}\mathrm{d}%
x^{2}-\delta _{i_{1}}^{2}\mathrm{d}x^{1},
\end{equation*}%
and 
\begin{equation*}
\left( Y_{T}^{\mathrm{3}}%
{\mbox{$ \rule {5pt} {.5pt}\rule {.5pt} {6pt} \, $}}%
\left( \frac{{\small \partial }}{{\small \partial x}^{i_{1}}}%
{\mbox{$ \rule {5pt} {.5pt}\rule {.5pt} {6pt} \, $}}%
\mathrm{d}_{2}x\right) \right) =\left( Y_{T}^{2}%
{\mbox{$ \rule {5pt} {.5pt}\rule {.5pt} {6pt} \, $}}%
\left( \delta _{i_{1}}^{1}\mathrm{d}x^{2}-\delta _{i_{1}}^{2}\mathrm{d}%
x^{1}\right) \right) =-\delta _{i_{1}}^{2}.
\end{equation*}%
equation (\ref{7.4}) yields%
\begin{eqnarray*}
j^{\mathrm{3}}\sigma ^{\ast }\left( Y_{T}^{\mathrm{3}}%
{\mbox{$ \rule {5pt} {.5pt}\rule {.5pt} {6pt} \, $}}%
\Theta \right) &=&j^{\mathrm{3}}\sigma ^{\ast }\left[ -P_{a}^{i_{1}}dy^{a}%
\wedge \left( Y_{T}^{2}%
{\mbox{$ \rule {5pt} {.5pt}\rule {.5pt} {6pt} \, $}}%
\left( \frac{{\small \partial }}{{\small \partial x}^{i_{1}}}%
{\mbox{$ \rule {5pt} {.5pt}\rule {.5pt} {6pt} \, $}}%
d_{2}x\right) \right) \right] \\
&&+j^{\mathrm{3}}\sigma ^{\ast }\left[ P_{a}^{i_{1}i_{2}}dz_{i\,_{2}}^{a}%
\wedge \left( Y_{T}^{2}%
{\mbox{$ \rule {5pt} {.5pt}\rule {.5pt} {6pt} \, $}}%
\left( \frac{{\small \partial }}{{\small \partial x}^{i_{1}}}%
{\mbox{$ \rule {5pt} {.5pt}\rule {.5pt} {6pt} \, $}}%
d_{2}x\right) \right) \right] \\
&&-\left( P_{a}^{i}y_{,i}^{a}+P_{a}^{ij}y_{,ij}-L\right) \left( Y_{T}^{2}%
{\mbox{$ \rule {5pt} {.5pt}\rule {.5pt} {6pt} \, $}}%
d_{2}x\right) \\
&=&P_{a}^{2}y_{,j}^{a}dx^{j}+P_{a}^{2i_{2}}y_{,i_{2}j}^{a}dx^{j}-\left(
P_{a}^{i}y_{,i}^{a}+P_{a}^{ij}y_{,ij}-L\right) dx^{2}.
\end{eqnarray*}%
For an open relatively compact manifold $K$ with boundary $\partial K=\Sigma
-\Sigma ^{\prime },$ 
\begin{equation*}
\int_{\Sigma }j^{\mathrm{3}}\sigma ^{\ast }\left( Y_{T}^{2}%
{\mbox{$ \rule {5pt} {.5pt}\rule {.5pt} {6pt} \, $}}%
\Theta \right) =\int_{\Sigma }\left[ P_{a}^{2}y_{,j}^{a}\mathrm{d}%
x^{j}+P_{a}^{2i_{2}}y_{,i_{2}j}^{a}\mathrm{d}x^{j}-\left(
P_{a}^{i}y_{,i}^{a}+P_{a}^{ij}y_{,ij}-L\right) \mathrm{d}x^{2}\right] .
\end{equation*}%
Let us decompose $P_{a}^{ij}$ into its symmetric and antisymmetric parts in
the upper indices%
\begin{equation}
P_{a}^{ij}=P_{a}^{(ij)}+P_{a}^{[ij]}.  \label{7.13a}
\end{equation}%
Then, 
\begin{equation}
P_{a}^{i}=\frac{\partial L}{\partial z_{i}^{a}}-P_{a,i_{1}}^{i_{1}i}=\frac{%
\partial L}{\partial z_{i}^{a}}%
-P_{a,i_{1}}^{(i_{1}i)}-P_{a,i_{1}}^{[i_{1}i]},  \label{7.13b}
\end{equation}%
so that 
\begin{eqnarray}
&&\int_{\Sigma }j^{\mathrm{3}}\sigma ^{\ast }\left( Y_{T}^{2}%
{\mbox{$ \rule {5pt} {.5pt}\rule {.5pt} {6pt} \, $}}%
\Theta \right) =  \label{7.14a} \\
&=&\int_{\Sigma }\left[ P_{a}^{2}y_{,j}^{a}\mathrm{d}%
x^{j}+P_{a}^{2i_{2}}y_{,i_{2}j}^{a}\mathrm{d}x^{j}-\left(
P_{a}^{i}y_{,i}^{a}+P_{a}^{ij}y_{,ij}-L\right) \mathrm{d}x^{2}\right]  \notag
\\
&=&\int_{\Sigma }\left[ \left( \frac{\partial L}{\partial z_{2}^{a}}%
-P_{a,i_{1}}^{(i_{1}2)}-P_{a,i_{1}}^{[i_{1}2]}\right) y_{,j}^{a}\mathrm{d}%
x^{j}+\left( P_{a}^{(2i_{2})}+P_{a}^{[2i_{2}]}\right) y_{,i_{2}j}^{a}\mathrm{%
d}x^{j}\right]  \notag \\
&&-\int_{\Sigma }\left( \left( \frac{\partial L}{\partial z_{i}^{a}}%
-P_{a,i_{1}}^{(i_{1}i)}-P_{a,i_{1}}^{[i_{1}i]}\right)
y_{,i}^{a}+P_{a}^{ij}y_{,ij}-L\right) \mathrm{d}x^{2}  \notag \\
&=&\int_{\Sigma }\left[ \left( \frac{\partial L}{\partial z_{2}^{a}}%
-P_{a,i_{1}}^{(i_{1}2)}\right) y_{,j}^{a}\mathrm{d}x^{j}+\left(
P_{a}^{(2i_{2})}\right) y_{,i_{2}j}^{a}\mathrm{d}x^{j}-\right] +  \notag \\
&&-\int_{\Sigma }\left( \left( \frac{\partial L}{\partial z_{2}^{a}}%
-P_{a,i_{1}}^{(i_{1}2)}\right) y_{,i}^{a}+P_{a}^{ij}y_{,ij}-L\right) \mathrm{%
d}x^{2}+  \notag \\
&&+\int_{\Sigma }\left[ -P_{a,i_{1}}^{[i_{1}2]}y_{,j}^{a}\mathrm{d}%
x^{j}+P_{a}^{[2i_{2}]}y_{,i_{2}j}^{a}\mathrm{d}%
x^{j}+P_{a,i_{1}}^{[i_{1}i]}y_{,i}^{a}\mathrm{d}x^{2}\right] .  \notag
\end{eqnarray}%
The last integral in equations (\ref{7.14a}) involves only the odd terms $%
P_{a}^{[ij]}$. It can be rewritten as follows 
\begin{eqnarray}
&&\int_{\Sigma }\left[ -P_{a,i_{1}}^{[i_{1}2]}y_{,j}^{a}\mathrm{d}%
x^{j}+P_{a}^{[2i_{2}]}y_{,i_{2}j}^{a}\mathrm{d}%
x^{j}+P_{a,i_{1}}^{[i_{1}i]}y_{,i}^{a}\mathrm{d}x^{2}\right] =  \label{7.14b}
\\
&=&\int_{\Sigma }\left[ \left( -P_{a,1}^{[12]}-P_{a,2}^{[22]}\right)
y_{,j}^{a}\mathrm{d}x^{j}+\left(
P_{a}^{[21]}y_{,1j}^{a}+P_{a}^{[22]}y_{,2j}\right) \mathrm{d}x^{j}\right] 
\notag \\
&&+\int_{\Sigma }\left(
P_{a,1}^{[11]}y_{,1}^{a}+P_{a,1}^{[12]}y_{,2}^{a}+P_{a,2}^{[21]}y_{1}^{a}+P_{a,2}^{[22]}y_{,2}^{a}\right) 
\mathrm{d}x^{2}  \notag \\
&=&\int_{\Sigma }\left[ -P_{a,1}^{[12]}\left( y_{,1}^{a}\mathrm{d}%
x^{1}+y_{,2}^{a}\mathrm{d}x^{2}\right) +P_{a}^{[21]}y_{,1j}^{a}\mathrm{d}%
x^{j}+\left( P_{a,1}^{[12]}y_{,2}^{a}+P_{a,2}^{[21]}y_{1}^{a}\right) \mathrm{%
d}x^{2}\right]  \notag \\
&=&\int_{\Sigma }\left[ -P_{a,1}^{[12]}y_{,1}^{a}\mathrm{d}%
x^{1}+P_{a}^{[21]}y_{,1j}^{a}\mathrm{d}x^{j}+P_{a,2}^{[21]}y_{1}^{a}\mathrm{d%
}x^{2}\right]  \notag \\
&=&\int_{\Sigma }\left[ P_{a}^{[21]}y_{,1j}^{a}\mathrm{d}%
x^{j}-P_{a,1}^{[12]}y_{,1}^{a}\mathrm{d}x^{1}+P_{a,2}^{[21]}y_{1}^{a}\mathrm{%
d}x^{2}\right]  \notag \\
&=&\int_{\Sigma }\left[ \mathrm{d}\left( P_{a}^{[21]}y_{,1}^{a}\right)
-P_{a,j}^{[21]}y_{,1}^{a}\mathrm{d}x^{j}-P_{a,1}^{[12]}y_{,1}^{a}\mathrm{d}%
x^{1}+P_{a,2}^{[21]}y_{1}^{a}\mathrm{d}x^{2}\right]  \notag \\
&=&\int_{\Sigma }\left[ \mathrm{d}\left( P_{a}^{[21]}y_{,1}^{a}\right)
-P_{a,1}^{[21]}y_{,1}^{a}\mathrm{d}x^{1}-P_{a,2}^{[21]}y_{,1}^{a}\mathrm{d}%
x^{2}-P_{a,1}^{[12]}y_{,1}^{a}\mathrm{d}x^{1}+P_{a,2}^{[21]}y_{1}^{a}\mathrm{%
d}x^{2}\right]  \notag \\
&=&\int_{\Sigma }\left[ \mathrm{d}\left( P_{a}^{[21]}y_{,1}^{a}\right)
-\left( P_{a,1}^{[21]}y_{,1}^{a}\mathrm{d}x^{1}+P_{a,1}^{[12]}y_{,1}^{a}%
\mathrm{d}x^{1}\right) -P_{a,2}^{[21]}y_{,1}^{a}\mathrm{d}%
x^{2}+P_{a,2}^{[21]}y_{1}^{a}\mathrm{d}x^{2}\right]  \notag \\
&=&\int_{\Sigma }\mathrm{d}\left( P_{a}^{[21]}y_{,1}^{a}\right)  \notag
\end{eqnarray}

Since, we consider an evolution equation with non-compact Cauchy surfaces,
replace $K$ by a slice 
\begin{equation*}
S=\{(x^{1},x^{2})\in \mathbb{R}^{2}\mid 0<x^{1}<t\}
\end{equation*}%
with boundary 
\begin{equation*}
\partial S=\Sigma _{t}-\Sigma _{0}=\{(t,x_{2})\in \mathbb{R}%
^{2}\}-\{(0,x_{2})\in \mathbb{R}^{2}\}.
\end{equation*}%
We assume that the fields $y^{a}(x_{1},x_{2})$ vanish sufficiently fast as $%
x_{2}\rightarrow \pm \infty ,$so that integrals over $K,$ $\Sigma _{t}$ and $%
\Sigma _{0}$ converge and permit integration by \ parts. For $\Sigma =\Sigma
_{t}$, equations (\ref{7.14a}) \ and (\ref{7.14b}) yield 
\begin{eqnarray}
&&\int_{\Sigma _{t}}j^{\mathrm{3}}\sigma ^{\ast }\left( Y_{T}^{2}%
{\mbox{$ \rule {5pt} {.5pt}\rule {.5pt} {6pt} \, $}}%
\Theta \right) =\int_{\Sigma _{t}}\left[ \left( \frac{\partial L}{\partial
z_{2}^{a}}-P_{a,i_{1}}^{(i_{1}2)}\right) y_{,j}^{a}\mathrm{d}x^{j}+\left(
P_{a}^{(2i_{2})}\right) y_{,i_{2}j}^{a}\mathrm{d}x^{j}\right]  \label{7.14c}
\\
&&-\int_{\Sigma _{t}}\left( \left( \frac{\partial L}{\partial z_{2}^{a}}%
-P_{a,i_{1}}^{(i_{1}2)}\right) y_{,i}^{a}+P_{a}^{ij}y_{,ij}-L\right) \mathrm{%
d}x^{2}+\int_{\Sigma _{t}}\mathrm{d}\left( P_{a}^{[21]}y_{,1}^{a}\right) 
\notag \\
&&=\int_{\Sigma _{t}}\left[ \left( \frac{\partial L}{\partial z_{2}^{a}}%
-P_{a,i_{1}}^{(i_{1}2)}\right) y_{,j}^{a}\mathrm{d}x^{j}+\left(
P_{a}^{(2i_{2})}\right) y_{,i_{2}j}^{a}\mathrm{d}x^{j}\right] +  \notag \\
&&-\int_{\Sigma _{t}}\left( \left( \frac{\partial L}{\partial z_{2}^{a}}%
-P_{a,i_{1}}^{(i_{1}2)}\right) y_{,i}^{a}+P_{a}^{ij}y_{,ij}-L\right) \mathrm{%
d}x^{2}+  \notag \\
&&+\lim_{x_{2}\rightarrow \infty }\left( P_{a}^{[21]}y_{,1}^{a}\right)
(t,x_{2})-\lim_{x_{2}\rightarrow -\infty }\left(
P_{a}^{[21]}y_{,1}^{a}\right) (t,x_{2})  \notag \\
&=&\int_{\Sigma _{t}}\left[ \left( \frac{\partial L}{\partial z_{2}^{a}}%
-P_{a,i_{1}}^{(i_{1}2)}\right) y_{,j}^{a}\mathrm{d}x^{j}+\left(
P_{a}^{(2i_{2})}\right) y_{,i_{2}j}^{a}\mathrm{d}x^{j}\right]  \notag \\
&&-\int_{\Sigma _{t}}\left( \left( \frac{\partial L}{\partial z_{2}^{a}}%
-P_{a,i_{1}}^{(i_{1}2)}\right) y_{,i}^{a}+P_{a}^{ij}y_{,ij}-L\right) \mathrm{%
d}x^{2}  \notag
\end{eqnarray}%
because our asymptotic conditions require that $\lim_{x_{2}\rightarrow
\infty }\left( P_{a}^{[21]}y_{,1}^{a}\right) (t,x_{2})=0$ and $%
\lim_{x_{2}\rightarrow -\infty }\left( P_{a}^{[21]}y_{,1}^{a}\right)
(t,x_{2})=0.$ Hence the potential non-uniquennes of constants of of motion
is taken care of by the appropriate choice of boundary conditions.\medskip

\section{Appendix}

\subsection{Jets}

Let $\pi :N\rightarrow M$ be a locally trivial fibration. A local section $%
\sigma $ of $\pi $ is a smooth map $\sigma :M\rightarrow N,$ defined on an
open subset $U$ of $M,$ such that $\pi \circ \sigma (x)=x$ for every $x\in U$%
. If $U=M$, we say that $\sigma :M\rightarrow N$ is a global section of $\pi 
$. In the following, we say $\sigma :M\rightarrow N$ is a section of $\pi $
if $\sigma $ is either local or global section.\medskip

Suppose that $m=\dim M$ and $m+n=\dim N$. We use local coordinates $(x^{i})$
on $M$, where $i=1,...,m$, and $(x^{i},y^{a})$ on $N,$ where $a=1,...,n.$
The local ccordinate description of a section $\sigma :M\rightarrow N$ is
given by $y^{a}=\sigma ^{a}(x^{1},...,x^{m})$ for $a=1,...n$.\medskip

For each $x\in M$ and $\mathrm{k}=1,2,...,$ sections $\sigma $ and $\check{%
\sigma}$ of $\pi $ are $\mathrm{k}$-equivalent at $x$ if $\sigma (x)=\check{%
\sigma}(x)$ and, in local coordinates, 
\begin{equation}
\sigma _{,i_{1}...i_{l}}^{a}(x)=\check{\sigma}_{,i_{1}...i_{l}}^{a}(x),
\label{jet}
\end{equation}%
where 
\begin{equation}
\sigma _{,i_{1}...i_{l}}^{a}(x)=\frac{{\small \partial }^{l}{\small \sigma }%
^{a}}{{\small \partial x}^{i_{1}}{\small ...\partial x}^{i_{l}}}%
(x^{1}(x),...,x^{m}(x)),  \label{jet1}
\end{equation}%
for all $l=1,...,\mathrm{k}.$\medskip

The $\mathrm{k}$-equivalence class at $x$ of a section $\sigma $ is called
the $\mathrm{k}$-jet of $\sigma $ at $x$ and denoted $j^{k\mathrm{k}}\sigma
(x)$. The space of $\mathrm{k}$-equivalence classes at $x$ of all section $%
\sigma $ is denoted $J_{x}^{\mathrm{k}}(M,N)$ and 
\begin{equation*}
J^{\mathrm{k}}(M,N)=\dbigcup\limits_{x\in M}J_{x}^{\mathrm{k}}(M,N)
\end{equation*}%
is called the space of $\mathrm{k}$-jets of sections of $\pi $. In terms of
local coordinates, $j^{\mathrm{k}}\sigma (x)$ has coordinates $%
(x^{i},y^{a},z_{i_{1}}^{a},....,z_{i_{1}...i_{k}}^{a}),$ where 
\begin{equation*}
z_{i_{1}...i_{l}}^{a}=\sigma _{,i_{1}...i_{l}}^{a}(x),
\end{equation*}%
for $l=1,...,k$, $i^{1},...,i^{l}=1,...,n$, and $a=1,...,m$. Since partial
derivatives of a smooth function commute, the variables $%
z_{i_{1}...i_{l}}^{a}$ cannot be considered as independent coordinates. In
the case when it matters, we use an independent collection 
\begin{equation}
\{z_{i_{1}...i_{l}}^{a}\mid a=1,...,m,\text{ and }1\leq i_{1}\leq i_{2}\leq
...\leq i_{l}\};  \label{jet coordinates}
\end{equation}%
see equation (\ref{Phi}). However, in general, we use symmetry of variables $%
z_{i_{1}...i_{l}}^{a}$ in the indices $i_{1},...,i_{l}.$ \medskip

There are several maps defined on $J^{\mathrm{k}}(M,N):$

\noindent \emph{the source map}%
\begin{equation*}
\pi ^{\mathrm{k}}:J^{\mathrm{k}}(M,N)\rightarrow M:j^{\mathrm{k}}\sigma
(x)\mapsto x,
\end{equation*}%
\noindent \emph{the target map}%
\begin{equation*}
\pi _{0}^{\mathrm{k}}:J^{\mathrm{k}}(M,N)\rightarrow N:j^{\mathrm{k}}\sigma
(x)\mapsto \sigma (x),
\end{equation*}%
\noindent \emph{the }$(\mathrm{k},\mathrm{l})$-\emph{forgetful }%
\begin{equation*}
\pi _{\mathrm{l}}^{\mathrm{k}}:J^{\mathrm{k}}(M,N)\rightarrow J^{\mathrm{l}%
}(M,N):j^{\mathrm{k}}\sigma (x)\mapsto j^{\mathrm{l}}\sigma (x)\text{ \ for }%
\mathrm{k}>\mathrm{l}>0.
\end{equation*}%
Each of these maps defines a fibre bundle structure in $J^{\mathrm{k}}(M,N)$%
. For this reason, $J^{\mathrm{k}}(M,N)$ is also called the $\mathrm{k}$-jet
bundle of sections of $\pi $.\medskip

Let $\sigma :M\rightarrow N$ be a section of $\pi :N\rightarrow M$%
. We denote the $\mathrm{k}$\emph{-jet extension} of $\sigma $ by 
\begin{equation*}
j^{\mathrm{k}}\sigma :M \rightarrow J^{\mathrm{k}}(M,N) :x\mapsto j^{\mathrm{k}%
}\sigma (x).
\end{equation*}%
For every integer $\mathrm{k}>0,$ 
\begin{equation}
\pi _{0}^{\mathrm{k}}\circ j^{\mathrm{k}}\sigma =\sigma .  \label{A1}
\end{equation}%
Similarly, for each $\mathrm{k}>\mathrm{l}>0$, 
\begin{equation*}
\pi _{\mathrm{l}}^{\mathrm{k}}\circ j^{\mathrm{k}}\sigma =j^{\mathrm{l}%
}\sigma .
\end{equation*}

A section $\rho :M\rightarrow J^{\mathrm{k}}(M,N)$ of the source map $\pi ^{%
\mathrm{k}}:J^{\mathrm{k}}(M,N)\rightarrow M$ is called \emph{holonomic} if
there exists a section $\sigma $ of $\pi $ such that 
\begin{equation*}
\rho =j^{\mathrm{k}}(\sigma ).
\end{equation*}%
It follows from equation (\ref{A1}) that $\rho $ is holonomic if and only if 
\begin{equation*}
\rho =j^{\mathrm{k}}(\pi _{0}^{\mathrm{k}}\circ \rho ).
\end{equation*}

Each local chart $M$, gives rise to \emph{local contact forms} on $%
J^{k}(M,N) $ given by 
\begin{eqnarray}
\vartheta ^{a} &=&\mathrm{d}y^{a}-\sum_{i=1}^{m}z_{i}^{a}\mathrm{d}%
x^{i},~~~\vartheta _{i}^{a}=\mathrm{d}z_{i}^{a}-\sum_{j=1}^{m}z_{ij}^{a}%
\mathrm{d}x^{j},~~~.....  \label{contact forms} \\
\vartheta _{i_{1}...i_{k-1}}^{a} &=&\mathrm{d}z_{i_{1}...i_{k-1}}^{a}-%
\sum_{i_{k}=1}^{m}z_{i_{1}...i_{k-1}i_{k}}^{a}\mathrm{d}x^{i_{k}}.  \notag
\end{eqnarray}

A section $\rho :M\rightarrow J^{\mathrm{k}}(M,N)$ of the source map $\pi ^{%
\mathrm{k}}$ is holonomic if the tangent space of its range is annihilated
by the contact forms $\vartheta _{i_{1}...i_{l}}^{a}$ for all $l=0,...,k$
and all indices $i_{1},...,i_{l}=1,...,n$ and any collection of coordinate
charts covering $M$.

\subsection{Prolongations}

Let $Y$ be a vector field on $N$ which projects to a vector field $Y^{0}$ on 
$M$. In other words, $Y$ is $\pi $-related to a vector field $Y^{0}$, that
is 
\begin{equation}
T\pi \circ Y=Y^{0}\circ \pi .  \label{A2}
\end{equation}%
This implies that $\pi :N\rightarrow M$ intertwines the actions of
local-one-parameter local groups $\mathrm{e}^{tY}$ and $\mathrm{e}^{tY^{0}}$
generated by $Y$ and $Y^{0}$, respectively.%
\begin{equation}
\begin{array}{ccccc}
&  & \mathrm{e}^{tY} &  &  \\ 
& N & \rightarrow & N &  \\ 
\pi & \downarrow &  & \downarrow & \pi \\ 
& M & \rightarrow & M &  \\ 
&  & \mathrm{e}^{tY^{0}} &  & 
\end{array}
\label{A3}
\end{equation}%
Hence, for every section $\sigma $ of $\pi $, 
\begin{equation}
\mathrm{e}^{tY\ast }\sigma =\mathrm{e}^{tY}\circ \sigma \circ \mathrm{e}%
^{-tY^{0}}  \label{A4}
\end{equation}%
is a local section of $\pi $. For every integer $\mathrm{k}$, the map $%
\sigma \mapsto \mathrm{e}^{tY\ast }\sigma $ induces a local-one parameter
local group%
\begin{equation}
\mathrm{e}^{tY^{\mathrm{k}}}:J^{\mathrm{k}}(M,N)\rightarrow J^{\mathrm{k}%
}(M,N):j^{\mathrm{k}}\sigma (x)\mapsto \lbrack j^{\mathrm{k}}(\mathrm{e}%
^{tY\ast }\sigma )](\mathrm{e}^{tY^{0}}x).  \label{A5}
\end{equation}%
of diffeomorphisms of $J^{\mathrm{k}}(M,N)$ to itself, generated by a vector
field $Y^{\mathrm{k}}$ on $J^{\mathrm{k}}(M,N),$ called the prolongation of $%
Y$ to $J^{\mathrm{k}}(M,N)$. In other words, 
\begin{equation}
\mathrm{e}^{tY^{\mathrm{k}}}\circ j^{\mathrm{k}}\sigma =j^{\mathrm{k}}(%
\mathrm{e}^{tY\ast }\sigma ).  \label{A6}
\end{equation}%
For every $0<\mathrm{l}<\mathrm{k}$, $Y^{\mathrm{k}}$ is $\pi _{\mathrm{l}}^{%
\mathrm{k}}$-related to $Y^{\mathrm{l}},$%
\begin{equation}
T\pi _{\mathrm{l}}^{\mathrm{k}}\circ Y^{\mathrm{k}}=Y^{\mathrm{l}}\circ \pi
_{\mathrm{l}}^{\mathrm{k}},  \label{A7}
\end{equation}%
where $\pi _{\mathrm{l}}^{\mathrm{k}}:J^{\mathrm{k}}(M,N)\rightarrow J^{%
\mathrm{l}}(M,N)$ is the forgetful map. \medskip

Following reference \cite{olver}, we show how to find the prolongation $Y^{%
\mathrm{k}}$ of a vector field $Y=Y^{i}(x)\frac{\partial }{\partial x^{i}}%
+Y^{a}(x,y)\frac{\partial }{\partial y^{a}}$ on $N$ that is $\pi $-related
to $Y^{0}=Y^{i}(x)\frac{\partial }{\partial x^{i}}$ on $M$ using the
condition that, for every local contact form $\vartheta $ on $J^{\mathrm{k}%
}(M,N)$, the Lie derivative $\pounds _{Y^{\mathrm{k}}}\vartheta $ of $%
\vartheta $ with respect to $Y^{\mathrm{k}}$ is a linear combination of
local contact forms. Let 
\begin{equation*}
Y^{\mathrm{k}}=Y^{i}\frac{\partial }{\partial x^{i}}+Y^{a}\frac{\partial }{%
\partial y^{a}}+Y_{i}^{a}\frac{\partial }{\partial z_{i}^{a}}%
+...+Y_{i_{1}i_{2}...i_{k}}^{a}\frac{\partial }{\partial
z_{i_{1}i_{2}...i_{k}}^{a}}
\end{equation*}%
be the prolongation of $Y$ to $J^{\mathrm{k}}(M,N).$ Then, 
\begin{eqnarray*}
&&\pounds _{Y^{\mathrm{k}}}[\mathrm{d}y^{a}-z_{i}^{a}\mathrm{d}x^{i}] \\
&=&Y^{\mathrm{k}}%
{\mbox{$ \rule {5pt} {.5pt}\rule {.5pt} {6pt} \, $}}%
\left( \mathrm{d}[\mathrm{d}y^{a}-z_{i}^{a}\mathrm{d}x^{i}]\right) +\mathrm{d%
}\left( Y^{\mathrm{k}}%
{\mbox{$ \rule {5pt} {.5pt}\rule {.5pt} {6pt} \, $}}%
[\mathrm{d}y^{a}-z_{i}^{a}\mathrm{d}x^{i}]\right) \\
&=&-Y^{\mathrm{k}}%
{\mbox{$ \rule {5pt} {.5pt}\rule {.5pt} {6pt} \, $}}%
\left( \mathrm{d}z_{i}^{a}\wedge \mathrm{d}x^{i}\right) +\mathrm{d}\left(
Y^{a}-z_{i}^{a}Y^{i}\right) \\
&=&-Y_{i}^{a}\mathrm{d}x^{i}+Y^{i}\mathrm{d}z_{i}^{a}+Y_{,b}^{a}\mathrm{d}%
y^{b}+Y_{,i}^{a}\mathrm{d}x^{i}-Y^{i}\mathrm{d}z_{i}^{a}-z_{i}^{a}Y_{,j}^{i}%
\mathrm{d}x^{j} \\
&=&Y_{,b}^{a}(\mathrm{d}y^{b}-z_{i}^{b}\mathrm{d}x^{i})+Y_{,b}^{a}z_{i}^{b}%
\mathrm{d}x^{i}-Y_{i}^{a}\mathrm{d}x^{i}+Y_{,i}^{a}\mathrm{d}%
x^{i}-z_{i}^{a}Y_{,j}^{i}\mathrm{d}x^{j} \\
&=&Y_{,b}^{a}(\mathrm{d}y^{b}-z_{i}^{b}\mathrm{d}x^{i})+\left[ \left(
Y_{,b}^{a}z_{i}^{b}-z_{j}^{a}Y_{,i}^{j}+Y_{,i}^{a}\right) -Y_{i}^{a}\right] 
\mathrm{d}x^{i},
\end{eqnarray*}%
which implies that 
\begin{equation}
Y_{i}^{a}=Y_{,b}^{a}z_{i}^{b}-z_{j}^{a}Y_{,i}^{j} +Y_{,i}^{a}.  \label{A8}
\end{equation}%
Similarly, 
\begin{eqnarray*}
&&\pounds _{Y^{\mathrm{k}}}[\mathrm{d}z_{i}^{a}-z_{ij}^{a}\mathrm{d}x^{i}] \\
&=&Y^{\mathrm{k}}%
{\mbox{$ \rule {5pt} {.5pt}\rule {.5pt} {6pt} \, $}}%
\left( \mathrm{d}[\mathrm{d}z_{i}^{a}-z_{ij}^{a}\mathrm{d}x^{i}]\right) +%
\mathrm{d}\left( Y^{k}%
{\mbox{$ \rule {5pt} {.5pt}\rule {.5pt} {6pt} \, $}}%
[\mathrm{d}z_{i}^{a}-z_{ij}^{a}\mathrm{d}x^{i}]\right) \\
&=&-Y^{\mathrm{k}}%
{\mbox{$ \rule {5pt} {.5pt}\rule {.5pt} {6pt} \, $}}%
\left( \mathrm{d}z_{ij}^{a}\wedge \mathrm{d}x^{j}\right) +\mathrm{d}\left(
Y_{i}^{a}-z_{ij}^{a}Y^{i}\right) \\
&=&-Y_{ij}^{a}\mathrm{d}x^{j}+Y^{j}\mathrm{d}z_{ij}^{a}+Y_{i,b}^{a}\mathrm{d}%
y^{b}+Y_{i,j}^{a}\mathrm{d}x^{j}-Y^{i}\mathrm{d}%
z_{ij}^{a}-z_{ij}^{a}Y_{,k}^{i}\mathrm{d}x^{k} \\
&=&Y_{i,b}^{a}\left( \mathrm{d}y^{b}-z_{j}^{b}\mathrm{d}x^{j}\right)
+Y_{i,b}^{a}z_{j}^{b}\mathrm{d}x^{j}-Y_{ij}^{a}\mathrm{d}x^{j}++Y_{i,j}^{a}%
\mathrm{d}x^{j}-z_{ij}^{a}Y_{,k}^{i}\mathrm{d}x^{k} \\
&=&Y_{i,b}^{a}\left( \mathrm{d}y^{b}-z_{j}^{b}\mathrm{d}x^{j}\right) +\left[
\left( Y_{i,b}^{a}z_{j}^{b}-z_{ki}^{a}Y_{,j}^{k}+Y_{i,j}^{a}\right)
-Y_{ij}^{a}\right] \mathrm{d}x^{j},
\end{eqnarray*}%
so that, for $i\leq j$,%
\begin{equation*}
Y_{ij}^{a}=Y_{i,b}^{a}z_{j}^{b}-z_{ki}^{a}Y_{,i}^{k}+Y_{i,j}^{a}.
\end{equation*}%
Symmetrizing, we get 
\begin{equation}
Y_{ij}^{a}=z_{(j}^{b}Y_{i),b}^{a}-z_{k(i}^{a}Y_{,i)}^{k}+Y_{(i,j)}^{a}.
\label{A9}
\end{equation}

In general,%
\begin{eqnarray*}
&&\pounds _{Y^{\mathrm{k}}}[\mathrm{d}%
z_{i_{1}...i_{l}}^{a}-z_{i_{1}...i_{l}j}^{a}\mathrm{d}x^{j}]= \\
&=&Y^{\mathrm{k}}%
{\mbox{$ \rule {5pt} {.5pt}\rule {.5pt} {6pt} \, $}}%
\left( \mathrm{d}[\mathrm{d}z_{i_{1}...i_{l}}^{a}-z_{i_{1}...i_{l}j}^{a}%
\mathrm{d}x^{j}]\right) +\mathrm{d}\left( Y^{\mathrm{k}}%
{\mbox{$ \rule {5pt} {.5pt}\rule {.5pt} {6pt} \, $}}%
\right) \\
&=&-Y^{\mathrm{k}}%
{\mbox{$ \rule {5pt} {.5pt}\rule {.5pt} {6pt} \, $}}%
\left( \mathrm{d}z_{i_{1}...i_{l}j}^{a}\wedge \mathrm{d}x^{j}\right) +%
\mathrm{d}\left( Y_{i_{1}...i_{l}}^{a}-z_{i_{1}...i_{l}j}^{a}Y^{j}\right) \\
&=&-Y_{i_{1}...i_{l}j}^{a}\mathrm{d}x^{j}+Y^{j}\mathrm{d}%
z_{i_{1}...i_{l}j}^{a}+Y_{i_{1}...i_{l},b}^{a}\mathrm{d}%
y^{b}+Y_{i_{1}...i_{l},j}^{a}\mathrm{d}x^{j} \\
&&-Y^{j}\mathrm{d}z_{i_{1}...i_{l}j}^{a}-z_{i_{1}...i_{l}j}^{a}Y_{,k}^{j}%
\mathrm{d}x^{k} \\
&=&-Y_{i_{1}...i_{l}j}^{a}\mathrm{d}x^{j}+Y_{i_{1}...i_{l},b}^{a}\mathrm{d}%
y^{b}+Y_{i_{1}...i_{l},j}^{a}\mathrm{d}x^{j}-z_{i_{1}...i_{l}j}^{a}Y_{,k}^{j}%
\mathrm{d}x^{k} \\
&=&Y_{i_{1}...i_{l},b}^{a}\left( \mathrm{d}y^{b}-z_{j}^{b}\mathrm{d}%
x^{j}\right) +Y_{i_{1}...i_{l},b}^{a}z_{j}^{b}\mathrm{d}%
x^{j}-Y_{i_{1}...i_{l}j}^{a}\mathrm{d}x^{j} \\
&&+Y_{i_{1}...i_{l},j}^{a}\mathrm{d}x^{j}-z_{i_{1}...i_{l}j}^{a}Y_{,k}^{j}%
\mathrm{d}x^{k} \\
&=&Y_{i_{1}...i_{l},b}^{a}\left( \mathrm{d}y^{b}-z_{j}^{b}\mathrm{d}%
x^{j}\right) +\left[ \left(
Y_{i_{1}...i_{l},b}^{a}z_{j}^{b}-z_{i_{1}...i_{l}k}^{a}Y_{,j}^{k}+Y_{i_{1}...i_{l},j}^{a}\right) -Y_{i_{1}...i_{l},j}^{a}%
\right] \mathrm{d}x^{j}.
\end{eqnarray*}%
Therefore 
\begin{equation}
Y_{i_{1}i_{2...}i_{l}j}^{a}=z_{(j}^{b}Y_{i_{1}...i_{l}),b}^{a}-z_{k(i_{1}i_{2}...i_{l}}^{a}Y_{,j)}^{k}+Y_{(i_{1}...i_{l},j)}^{a}.
\label{A 10}
\end{equation}

\bigskip

Department of Mathematics and Statistics,

University of Calgary,

Calgary, Alberta, Canada.

sniatycki@gmail.com

\bigskip

Department of Mechanical Engineering

Ben Gurion University of the Negev,

Beer-Sheeva, Israel.

rsegev@bgu.ac.il

\end{document}